\definecolor{mycolor}{rgb}{0.850, 0.850, 0.850}
\newmdenv[innerlinewidth=0.5pt, roundcorner=4pt,linecolor=mycolor,innerleftmargin=6pt,
innerrightmargin=4pt,innertopmargin=4pt,innerbottommargin=6pt]{mybox}
\newcommand{\suchthat}{\;\ifnum\currentgrouptype=16 \middle\fi|\;}
\newcommand{\ie}{i.e., }
\newcommand{\eg}{e.g., }
\newcommand{\wrt}{w.r.t.\ }
\newcommand{\cfr}{c.f.r.\ }
\newcommand{\wwlog}{w.l.o.g.\ }
\newcommand{\N}{\mathbf{N}\xspace}
\def\C{{\cal C}}
\def\A{{\cal A}}
\def\J{{\cal J}}
\def\F{{\cal F}}
\newcommand{\figref}[1]{Fig.~\ref{#1}}
\newtheorem{Thm}{Theorem}
\newtheorem*{MainRes}{Main Result}
\newtheorem{Cor}[]{Corollary}
\newtheorem{Lem}[]{Lemma}
\newtheorem{Prop}[]{Proposition}
\theoremstyle{plain}
\newtheorem{Def}{Definition}
\newtheorem{Exa}{Example}
\providecommand*{\cupdot}{%
  \mathbin{%
    \mathpalette\@cupdot{}%
  }%
}
\newcommand*{\@cupdot}[2]{%
  \ooalign{%
    $\m@th#1\cup$\cr
    \sbox0{$#1\cup$}%
    \dimen@=\ht0 %
    \sbox0{$\m@th#1\cdot$}%
    \advance\dimen@ by -\ht0 %
    \dimen@=.5\dimen@
    \hidewidth\raise\dimen@\box0\hidewidth
  }%
}
\newcounter{procedurealgo}
\newenvironment{procedurealgo}[1][htb]
  {
  \let\c@algocf\c@procedurealgo
   \begin{algorithm}[#1]%
  }{\end{algorithm}}
\newcommand{\removelatexerror}{\let\@latex@error\@gobble}
\let\oldnl\nl
\newcommand{\nonl}{\renewcommand{\nl}{\let\nl\oldnl}}
\tikzstyle{squa}=[rectangle,draw,inner sep=4pt,transform shape, minimum size=1.7em]
\tikzstyle{circ}=[circle,draw,inner sep=4pt,transform shape]
\tikzstyle{node}=[circle,draw,inner sep=2pt,transform shape,minimum size=1.6em]
\tikzstyle{none}=[inner sep=0pt]
\tikzstyle{arrow}=[-,draw=Black,postaction={decorate},decoration={markings,mark=at position .5 with {\arrow{>}}},line width=2.000]
\title{Linear-Time Safe-Alternating DFS and SCCs}
\date{}
\author[C. Comin, R. Rizzi]{
	Carlo Comin\footnote{(e-mail: carlo.comin@scuola.istruzione.it)} \;\;\;\; Romeo Rizzi\footnote{(e-mail: romeo.rizzi@univr.it)}
}
\begin{document}
\maketitle

\begin{abstract}
  An alternating graph is a directed graph whose vertex set is partitioned into two colour classes, existential and universal.

  This forms the basic arena for well-known models in formal verification, discrete optimal control, and many infinite duration two-player games where
  Player~$\square$ and~his opponent Player~$\ocircle$ alternate in a turn-based sliding of a pebble along the arcs they control.

  We study alternating strongly-connectedness on alternating graphs as a generalization of strongly-connectedness in directed graphs,
  aiming at providing a linear-time decomposition and a sound structural graph characterization.
  For this a novel notion of alternating reachability is introduced: Player~$\square$ attempts to reach vertices without leaving a prescribed subset of the vertices while Player~$\ocircle$ works against.
  This is named \emph{safe-alternating reachability}.
  It is shown that every alternating graph uniquely decomposes into \emph{safe-alternating strongly-connected components}, where Player~$\square$ can visit each vertex within a given component infinitely often without having to ever leave out the component itself.

  Our main result is a linear-time algorithm for computing this alternating graph decomposition.
  Both the underlying graph structures and the algorithm generalize the classical decomposition of a directed graph into strongly-connected components.
  Indeed, the proposed algorithm builds on a non-trivial generalization to alternating graphs of the depth-first search and the strongly-connected components 
  algorithm devised by Tarjan in 1972.

  Our theory has direct applications \eg in solving well-known infinite duration pebble games faster.
  Dinneen and Khoussainov showed in 1999 that deciding a given Update Game costs $O(mn)$ time,
	where $n$ is the number of vertices and $m$ is that of arcs. We solve that task in $\Theta(m+n)$ linear~time.
  In turn the complexity of Explicit McNaughton-M\"uller Games improves from cubic to quadratic.
\end{abstract}
{\bf Keywords:} Alternation, Infinite Games, Linear-Time Algorithm, McNaughton-M\"uller Games, Depth-First Search, Strongly-Connected Components, Update Games, Update Networks

\section{Introduction}\label{sect:intro}
The alternating model of computation originated in~\cite{ChandraS76,Kozen76,Chandra1981}
  as a generalization of nondeterminism in which existential and universal quantifiers alternate along the course of the computation.
Alternating Turing Machines were defined and the corresponding time and space complexity classes were characterized in terms of resource-bounded deterministic machines.
In the complexity landscape generalizing complete computational models to alternation leads more often than not to complexity blowups,
\eg alternating polynomial time equals deterministic polynomial space~\cite{Savitch70,Chandra1981}.

Still in~\cite{Chandra1981} alternation was also inquired by generalizing specific polynomial time computable problems.
One of the classical P-complete problems is the Alternating Graph Accessibility Problem (AGAP)~\cite{Chandra1981,Greenlaw1995}.
We are given a finite directed graph $(V, A)$ whose vertex set $V=V_\square\cup V_\ocircle$ is partitioned into two classes,
 existential $V_\square$ and universal $V_\ocircle$ (\ie an \emph{alternating graph}), plus a source vertex $s$ and target vertex $t$.
The task is to decide whether $t$ is \emph{alternating reachable} from $s$, that can be defined recursively as follows:
either \textit{i)} $s=t$, or \textit{ii)} $s\in V_\square$ and for some outgoing arc $(s,s')\in A$ the target $t$ is alternating reachable from $s'$, or
 \textit{iii)} $s\in V_\ocircle$ and for every outgoing arc $(s,s')\in A$ the target $t$ is alternating reachable from $s'$.
When restricted to only existential vertices, this is equivalent to the Directed Graph Accessibility Problem (GAP),
also known as $(s,t)$-Connectivity (st-Con), which is complete for nondeterministic log-space~\cite{Savitch70}.

Both GAP and AGAP admit linear-time algorithms. In GAP, a depth-first search starting from $s$ works out.
In AGAP a linear-time solution can be achieved by computing the $\square$-\emph{attractor} set $T_\square$ of target vertex $t$,
  \ie defined: \textit{i)} $t \in T_\square$; \textit{ii)} if $x\in V_\square$ has an outgoing arc $(x,y)$ such that $y\in T_\square$, then $x\in T_\square$;
\textit{iii)} if $x\in V_\ocircle$ has all outgoing arcs $(x,y)$ such that $y\in T_\square$, then $x\in T_\square$; \textit{iv)} nothing else in $T_\square$.

Algorithmic problems on alternating generalizations of graphs arise in the literature, \eg in the field of formal probabilistic verification~\cite{ChaKriHen2014}.
The input is a system that exhibits probabilistic behavior and a specification (set of desired behaviors),
where the algorithmic problem is to answer whether the system satisfies the specification~\cite{CourcoubetisY95}.
In probabilistic verification, systems are frequently modeled as a generalization of graphs called Markov decision processes (MDPs).
The generalization is needed to model two different kind of behaviors.
More specifically in MDPs there are two types of vertices, namely, the regular vertices $V_\square$
 where the algorithm chooses which outgoing arc to follow, and the probabilistic vertices $V_P$
 where the outgoing arc is chosen randomly according to some given probability distribution $\delta$;
still, putting aside probabilistic behavior, in MDPs the underlying static graph arena is actually an alternating graph (think of it as if $V_\ocircle=V_P$).
Also notice: (i) a directed graph is a special case of an alternating graph with $V_\ocircle=\emptyset$ (ii) similarly
a directed graph is a special case of an MDP with $V_P = \emptyset$ and (iii) a Markov chain is a special case of an MDP with $V_\square=\emptyset$.

In the literature, algorithmic problems on alternating graphs have already been tackled by relying on classical graph algorithmics employed as interleaved subprocedures, such as the depth-first search (DFS) and the strongly-connected components (SCCs) algorithm~\cite{Tar72}.
One notable instance of that being the \emph{maximal end-component (MEC)} decomposition that lies at the core of many algorithms in probabilistic verification,
generalizing to alternating graphs the problem of decomposing a directed graph into SCCs~\cite{ChaKriHen2014}.
The deterministic time complexity is nevertheless not known to be linear, the fastest deterministic algorithm which is currently known runs in time $O(\min(m^{3/2}, n^2))$~\cite{ChaKriHen2014}, whereas the fastest randomized algorithm achieves  $\tilde{O}(m)$ expected time (where the tilde hides polylogarithmic factors)~\cite{ChaHen2019}.

On the other hand recall that classical graph algorithmics shows that the DFS can be suitably adapted to decompose a finite directed graph into SCCs keeping the time complexity linear,
\eg the celebrated algorithm of Tarjan~\cite{Tar72} finds SCCs in linear-time, 
also see~\cite{Dijkstra1976,Sharir1981, Gabow2001, Mehlhorn1996, TARJAN2024}.

In this work we introduce a novel notion of \emph{alternating strongly-connectedness} on alternating graphs as a natural generalization of strongly-connectedness in directed graphs,
ultimately aiming at providing a \emph{linear-time} decomposition and a sound structural graph characterization.
For this a novel notion of alternating reachability is also introduced, where Player~$\square$ attempts
to reach vertices without leaving a prescribed subset of the vertices, while Player~$\ocircle$ works against.
This is named \emph{safe-alternating reachability}. It is shown that every alternating graph uniquely decomposes into safe-alternating strongly-connected components
  where Player~$\square$ can visit each vertex within a given component infinitely often, without having to ever leave out the component itself.
Our main result is a linear-time algorithm for computing the corresponding alternating graph decomposition.
A key technical ingredient will be to rely on the disjoint-sets union-find data structure~\cite{Tar75},
	and the linearity of the so-called \emph{incremental-tree set-union problem}~\cite{GT85} on RAM machines,
for the fast computation of \emph{lowest common ancestors} in certain search trees underlying our algorithm.
This possibly departs away from the MEC decomposition, for which the existence of a sub-quadratic procedure remains an open question.
Still our proposed algorithms and underlying graph structures do generalize the classical decomposition of a directed graph into SCCs.

Our theory has direct applications, \eg for faster solving some infinite two-player games that are well-known in the field of formal verification and automated synthesis.
Infinite duration graph games can be applied in the construction of finite state reactive systems, like communication protocols or discrete dynamic control systems,
where a central aim is to put the development of hardware and software on a mathematical basis which is both firm and practical.
Moreoover these games can provide game-theoretic foundations for studying infinite duration processes such as operating systems, networks, communication systems and concurrent computations, where a characteristic feature of such systems is their perpetual interaction with the environment as well as their non-terminating behaviour.
The theory of infinite duration games offers many appealing results under this prospect, see \eg~\cite{Gradel02}.

For a concrete instance consider the following communication network problem. Often one requirement is to share key information between all nodes of a network, \eg for monitoring purposes, suppose we have data stored on each node of a computer network and we want to continuously update all nodes with some consistent information.
Imagine a data packet of current information continuously going through all nodes.
Unfortunately not all routing choices are always under our control, some of them may be controlled by the network environment which could play against our benefits.
Essentially this  describes an infinite duration two-player game played on an alternating graph where Player~$\square$ wants to visit all vertices infinitely often, by keep moving the pebble everywhere around and forever,
while Player~$\ocircle$ works against by trying to rule out at least one vertex from a certain time moment onwards.
This model is named \emph{Update Game (UG)}~in~\cite{DK99,Din00,Bod01}.
Dinneen and Khoussainov~\cite{DK99} showed that deciding who's the winner in a given UG costs $O(mn)$ time, where $n$ is the number of vertices and $m$ is that of the arcs.
Solving UGs turns out to be a foundamental subproblem when solving Explicit McNaughton-M\"uller Games in polynomial time as in the algorithm of Horn~\cite{Horn08}.
Now we can solve UGs in linear~time. In turn the complexity of Explicit McNaughton-M\"uller Games improves from cubic to quadratic.

\subsection{Results and Organization}\label{sect:results}
To begin, Section~\ref{subsect:notation}~and~\ref{section:ascc} provide some background notions and formal notation.

In order to accomplish our tasks, a novel notion of alternating reachability is introduced in Section~\ref{subsect:safe-reachability},
  namely, \emph{safe-alternating reachability}, where Player~$\square$ attempts to reach vertices without leaving a prescribed subset of the vertices while Player~$\ocircle$~works~against.

In Section~\ref{sect:stcc}, it is shown that every alternating graph uniquely decomposes into safe-alternating strongly-connected components
  where Player~$\square$ can visit infinitely often each vertex within a given component without having to ever leave out the component itself.
Our main result goes as follows.

\begin{MainRes}
The DFS and SCCs algorithms of~\cite{Tar72, TARJAN2024} admit a non-trivial generalization to alternating graphs, this allows us to improve by a linear factor the time complexity of Update Games~\cite{DK99,Din00,Bod01} and Explicit McNaughton-M\"uller Games~\cite{Horn08}.
The resulting DFS/SCCs algorithms lean on safe-alternating reachability and run in linear time on RAMs~\cite{GT85} and at least Ackermann-linear\footnote{\ie $O(n+m\alpha(m,n))$, where $n$ is the number of vertices, $m$ that of the arcs, and $\alpha()$ is the inverse Ackermann's~\cite{Tar75}. 
We leave open the existence of a neat linear-time solution on pointer machines (this is discussed in Section~\ref{ref:relatedfutureworks}).} on pointer machines~\cite{Tar75}.
\end{MainRes}

Both the underlying graph structures (which are analyzed in Section~\ref{subsect:graphstructures}) and the algorithm (outlined and analyzed in Section~\ref{sect:algo-slip})
generalize the classical decomposition of a directed graph into SCCs.
The proposed linear-time decomposition is given in Section~\ref{sect:aDFS}.

As a first direct application of the proposed algorithmics we obtain the following neat result on Update Games~\cite{DK99,Din00,Bod01}.
The best previously known upper bound was $O(|V||A|)$, as shown by Dinneen and Khoussainov~in~\cite{DK99}.
Section~\ref{sect:UG} offers formal definitions and more details on UGs.
\begin{Cor}\label{thm:UG_linear}
  Deciding who wins a given Update Game $\A=(V, A, \langle V_\square, V_\ocircle \rangle)$ takes time $\Theta(|V|+|A|)$.
\end{Cor}
\begin{proof}[Proof Sketch]
  Player~$\square$ wins if and only if $\A$ has \emph{only one} safe-alternating strongly-connected component; when there are at least two components the winner is Player~$\ocircle$.
To decide this on input $\A$, it is sufficient one run of our proposed decomposition algorithm, $\textit{safe-$\alpha$SCC}(\A)$ (Algorithm~\ref{algo:STCC}) given in Section~\ref{sect:algo-slip}.

    Correctness and complexity will follow from that of Algorithm~\ref{algo:STCC}, see Section~\ref{sect:algo-slip} and Appendix~A.
\end{proof}

\emph{McNaughton-M\"uller Games (MMGs)}~\cite{Horn08} also provide a useful model for the synthesis of controllers in reactive systems, but their complexity depends on the representation of the winning conditions.
The most straightforward way to represent a M\"uller winning condition $\F\subseteq 2^V$ is to provide an explicit list $\F=\{F_i\}_{i=1}^\ell$ of subsets of vertices as in~\cite{Horn08}.
So-called \emph{Explicit MMGs} can be solved in polynomial time, \eg with Horn's algorithm~\cite{Horn08}, by repeatedly deciding UGs as a basic subproblem.
As aftermath of Corollary~\ref{thm:UG_linear}, the complexity of Explicit McNaughton-M\"uller Games also improves from cubic to quadratic.

This is established again in Section~\ref{sect:UG}, where the formal definition of Explicit MMGs is~recalled.
\begin{Cor}\label{thm:explicit-mg}
    Deciding who wins a given Explicit MMG $(\A,\F)$ takes time $O\big(|\F|\cdot (|\A|+|\F|)\big)$.
\end{Cor}

\subsection{Notation and Preliminaries}\label{subsect:notation}
An \emph{alternating graph ($\alpha$graph)} $\A = (V, A, \langle V_{\square}, V_{\ocircle}\rangle)$ is a
  finite directed simple graph  $G_\A\doteq (V, A)$ (\ie there are no loops nor parallel arcs)
 whose vertex set is split into the set of existential vertices $V_{\square}$ owned by Player~$\square$,
    and the set of universal vertices $V_{\ocircle}$ owned by Player~$\ocircle$.
Notice $G_\A$ is not required to be a bipartite graph on colour classes $V_{\square}$ and $V_{\ocircle}$.
Also let $[k]\doteq \{1, \ldots, k\}$ for any $k\in\N$.

The ingoing and outgoing neighbourhoods of any $u\in V$ are denoted by
  $N^{\text{in}}_\A(u)\doteq \{v\in V\mid (v,u)\in A\}$ and $N^{\text{out}}_\A(u)\doteq \{v\in V\mid (u,v)\in A\}$, respectively.

An $\alpha$graph can serve as an arena on which games can be played for infinitely many rounds by moving a pebble along the arcs from one vertex to an adjacent one. Initially the pebble is put on a starting position $s\in V$.
At each round, if the pebble is over $v\in V_i$, for some $i\in\{\square, \ocircle\}$,
Player~$i$ chooses an arc $(v,v')\in A$ and then the next round starts with the pebble on $v'$.

A finite (or infinite) \emph{path} in $G_\A$ is a sequence $v_0v_1\ldots v_n\ldots\in V^*$ (or $V^{\omega}$)
  such that $\forall{j\geq 0}\; (v_j, v_{j+1})\in A$, where the \emph{length} of $v_0v_1\ldots v_n$ is $n$.
A \emph{play path} (or simply, a \emph{play}) on $\mathcal{A}$ is any finite or infinite path in $G_\A$.
A \emph{strategy} for Player~$i$, where $i\in \{\square,\ocircle\}$,
is a map $\sigma_i:V^*\times V_i\rightarrow V$ such that for every finite path $p'v$ in $G_\A$,
where $p'\in V^*$ and $v\in V_i$, it holds that $(v, \sigma_i(p', v))\in A$.
The set of all strategies of Player~$i$ in $\mathcal{A}$ is denoted by $\Sigma^{\mathcal{A}}_i$.
A play $v_0v_1\ldots v_n\ldots $ is \emph{consistent} with some	$\sigma\in\Sigma^\A_i$ if $v_{j+1} = \sigma(v_0v_1\ldots v_{j-1}, v_j)$ whenever $v_j\in V_i$.
Given two strategies $\sigma_{\square} \in\Sigma^\A_\square$ and $\sigma_{\ocircle} \in\Sigma^{\mathcal{A}}_{\ocircle}$, and some $s\in V$,
the \emph{outcome} play $\rho_{\mathcal{A}}(s, \sigma_{\square}, \sigma_{\ocircle})$ is the (unique) play
that starts at $s$ and is consistent with both $\sigma_{\square}$ and $\sigma_{\ocircle}$.
For any $v\in V$, we denote by $\rho_{\mathcal{A}}(s, \sigma_{\square}, \sigma_{\ocircle})_{\leq v}$
the (unique) prefix of $\rho_{\mathcal{A}}(s, \sigma_{\square}, \sigma_{\ocircle})$ 
which ends at the first occurence of $v$ (\ie $v$ is a part of the prefix).
For any finite (or infinite) path $p\in V^*$ (or $p\in V^{\omega}$), 
the \emph{alphabet} $\Gamma(p)$ is the set of vertices appearing~in~$p$.
The $\square$-\emph{attractor} set $T_\square$ of any vertex $t\in V$
  is defined as follows: \textit{i)} $t \in T_\square$; \textit{ii)} if $x\in V_\square$ has an outgoing arc $(x,y)$ such that $y\in T_\square$, then $x\in T_\square$;
\textit{iii)} if $x\in V_\ocircle$ has all outgoing arcs $(x,y)$ such that $y\in T_\square$, 
then $x\in T_\square$; \textit{iv)} nothing else is in $T_\square$.

Let $T = (V_T, A_T)$ be an inward directed tree rooted at $r_T \in V_T$, 
\ie a tree in which each arc is oriented toward the root.
We simply write $u\in T$ for $u\in V_T$.
For each $u\in T$, there is only one \emph{path $p_u$} going from $u$ to $r_T$; the \emph{depth} $d(u)$ of $u$ is the length of $p_u$.
An \emph{ancestor} of $u\in T$ is any $v\in \Gamma(p_u)$; it is a \emph{proper ancestor} if $v\neq u$,
it is the \emph{parent} $\pi_T(u)$ of $u$ if $(u,v)\in A_T$.
The \emph{children} of $u\in T$ are all the $v\in T$ such that $\pi_T(v)=u$.
A \emph{descendant} of $u\in T$ is any $v\in T$ such that $u\in \Gamma(p_v)$; it is a \emph{proper descendant} if $v\neq u$.
A \emph{leaf} of $T$ is any $u\in T$ having no children.
The \emph{lowest common ancestor (LCA)} $\gamma_S$ of a subset of vertices $S\subseteq T$ is:
  \[ \gamma_S\doteq\arg\max \big\{ d(\gamma)\in \N \mid \gamma\in T
       \text{ is an ancestor of each vertex in } S\big\}.\]
The maximal subtree of $T$ that is rooted at any $u\in T$ is denoted by $T_u$.
  Given a LIFO stack $\textit{St}$ containing some element $v\in \textit{St}$,
    then $\textit{St}(v)$ denotes the set of all elements $u\in \textit{St}$ going from the top of $\textit{St}$ down until the first occurence of $v$ (extremes included).

    \section{Alternating Strongly-Connected Components}\label{section:ascc}
    This section deepens \emph{alternating strongly-connectedness} and its \emph{safe} form.
    We shall see that both concepts can be built bottom-up
     (\ie as a natural generalization of strongly-connectedness in directed graphs)
    and that they are sound and applicable (\ie they enjoy a clear characterization in terms of quotient sets
    	of reachability equivalence relations, and they can be directly applied for faster solving tasks concerning infinite games on graphs).

    Firstly, we consider alternating reachability and alternating strongly-connected components as
    the most natural notion in the neighborhood of possible definitions, already presenting some technical pitfalls compared to the traditional setting.
    Secondly, aiming at providing a linear-time decomposition algorithm and a sound structural graph characterization,
    we introduce \emph{safe-alternating reachability}, a novel notion of alternating reachability on $\alpha$graphs that
    	will form the backbone on which the forthcoming theory will sustain.
    Upon this, \emph{safe-alternating strongly-connectedness} is introduced (in turn, a novel notion of alternating strongly-connectedness).
    It is offered a sound definition of \emph{safe-alternating strongly-connected components} in
    	terms of safe-alternating reachability quotient sets (\ie equivalence classes).

    Safe-alternating reachability captures in a natural way the fundamental invariant property lying
    at the ground of both the forthcoming graph structures and linear-time decomposition algorithms --
    this is actually the reason why it seems necessary and not just interesting to introduce the safe form.

    To conclude the section, we shall observe that both alternating strongly-connectedness and its safe form
    	can be employed to solve Update Games, and thus Explicit McNaughton-M\"uller Games as shown in~\cite{Horn08}.
    The algorithm of Section~\ref{sect:algo-slip} will ultimately provide a faster solution to those two games.
    Let us start recalling alternating reachability.
    \begin{Def}[\cite{Chandra1981,Greenlaw1995}]
    Let $\A$ be an $\alpha$graph on vertex set $V$, and let $u,v\in V$ be any two vertices.

    We say that $v$ is \emph{alternating reachable ($\alpha$reachable)}	from $u$ in $\A$ if and only if
    there exists a strategy $\sigma_{\square}\in \Sigma^{\A}_{\square}$ such that for every $\sigma_{\ocircle}\in \Sigma^{\A}_{\ocircle}$ it holds that the target
    $v$ lies in the outcome play which starts at $u$ and proceeds consistently with the given strategies, \ie if and only if
    \[\exists{\sigma_{\square}\in \Sigma^{\A}_{\square}}\,
    \forall{\sigma_{\ocircle}\in \Sigma^{\A}_{\ocircle}}\;\;	v\in \Gamma\big(\rho_{\mathcal{A}}(u, \sigma_{\square}, \sigma_{\ocircle})\big).\]
    \end{Def}
    The $\alpha$reachability relation between $u$ and $v$ by strategy $\sigma_{\square}$ will be compactly denoted by $\sigma_{\square}:u \leadsto v$.

    Then let's consider a natural notion of alternating strongly-connectedness, also clarified in Example~\ref{ex:arena1}.
    \begin{Def}\label{def:strongly-trap-connected} Let $\A$ be an $\alpha$graph on vertex set $V$.
    We say $U\subseteq V$ is an \emph{alternating strongly-connected set ($\alpha$sc set)} if and only if
    $\forall{(u,v)\in U\times U}$ $\exists{\sigma_{\square}\in \Sigma^{\A}_{\square}}$ such that $\sigma_{\square}:u \leadsto v$.
    \end{Def}

    \begin{figure}[h]
    \begin{center}
    \begin{tikzpicture}[arrows=->,scale=.8,node distance=1.5 and 1.5]
      \node[node] (A) {$a$};
      \node[squa, xshift=6ex, above=of A, label={above right, yshift=-1ex: $\beta$}] (B) {$b$};
      \node[squa, right=of A, label={above right,  xshift=2ex, yshift=-1ex: $\gamma$}] (C) {$c$};
      \draw[] (A) to [bend left=30] node[below] {} (B);
      \draw[] (B) to [bend left=30] node[above] {} (C);
      \draw[] (A) to [bend left=30] node[above] {} (C);
      \draw[] (C) to [bend left=30] node[above] {} (A);

    	\draw[dashed, ultra thin, rounded corners=15pt] (0,2.75) rectangle (2,1.5);
    	\draw[dashed, ultra thin, rounded corners=15pt] (-1.5,.65) rectangle (3.5,-.65);
    \end{tikzpicture}
    \caption{An $\alpha$graph on vertex set $\{a,b,c\}$
      and its $\alpha$SCCs $\gamma=\{a,c\}$ and $\beta=\{b\}$, as in Example~{\ref{ex:arena1}}.}\label{fig:arena1}
    \end{center}
    \end{figure}

    \begin{Def}
    $\sim_{\alpha\text{sc}}$ is the binary relation on $V$ defined as follows:
    		$\sim_{\alpha\text{sc}}\doteq \big\{ (u,v)\in V\times V \mid	\{u,v\} \text{ is $\alpha$sc} \big\}$.
    \end{Def}

    It is easy to check that $\sim_{\alpha\text{sc}}$ is an equivalence relation on $V$, where every equivalence class is $\alpha$sc.
    \begin{Def}\label{def:aSCCs}
    Let $\A$ be an $\alpha$graph on vertex set $V$. Let $\C\subseteq V$ be a subset of the vertices
    and consider the relation $\sim_{\alpha\text{sc}}$.
    We say that $\C$ is an \emph{alternating strongly-connected component ($\alpha$SCC)}
    of $\A$ precisely when it is an equivalence class of $\sim_{\alpha\text{sc}}$.
    \end{Def}

    Otherwise stated, an $\alpha$SCC is any maximal (under set inclusion) $\alpha$sc set of vertices in the $\alpha$graph $\A$.

    \begin{Exa}\label{ex:arena1} Consider the $\alpha$graph $\mathcal{A} = (V, A, \langle V_\square, V_\ocircle \rangle )$, where 
	$V = V_\square \cup V_\ocircle$ with $V_\square = \{b, c\}$, $V_\ocircle = \{a\}$, and 
	$A = \{(a,c), (a,b), (b,c), (c,a)\}$. Figure~\ref{fig:arena1} illustrates the corresponding decomposition 
	into $\alpha$SCCs, $\{\beta, \gamma\}$. 
	Now, consider the component $\gamma = \{a, c\}$. If the play starts from $a$, the pebble will eventually reach $c$; 
	however, to do so, it might first escape from $\gamma$ by reaching $b$. That is, although $c$ is $\alpha$-reachable 
	from $a$, Player~$\square$ has no strategy to ensure that the play remains within $\gamma$. 
	Player~$\ocircle$ can always force the pebble out to $\beta$, preventing Player~$\square$ from maintaining safety inside $\gamma$.

    \end{Exa}

    Decomposing an $\alpha$graph into its $\alpha$SCCs can be achieved in $O(|V||A|)$ time. 
For each vertex $t \in V$, one can compute its $\square$-attractor set $T_\square$ in $O(|A|)$ time. 
Two vertices $u, v \in V$ belong to the same $\alpha$SCC if and only if each lies in the $\square$-attractor of the other. 
The existence of sub-quadratic time algorithms for computing $\alpha$SCCs is not addressed in this paper and remains an open problem.

These properties highlight key differences between $\alpha$SCCs and their classical counterparts in directed graphs. 
Nevertheless, $\alpha$SCCs are well-defined, as they correspond to the equivalence classes of a natural equivalence relation on the vertex set.

We explicitly analyze these fundamental properties because they will be extended to the \emph{safe} notion of alternating strong connectivity.
To formalize this, we now introduce safe-$\alpha$reachability.

\subsection{Safe-Alternating Reachability}\label{subsect:safe-reachability}
		
In the context of strategic interactions modeled by $\alpha$graphs, 
reachability is a fundamental property used to determine whether a target vertex can be reached under adversarial conditions.
		
\textit{Alternating reachability} is a well-established notion in this setting,  
where two players---one controlling existential choices ($\square$) and the other adversarial choices ($\ocircle$)---compete 
to either enforce or obstruct reaching a designated node.

However, in many applications, simply reaching the target is not enough; 
it is equally crucial that the path taken satisfies specific constraints. 
This leads to the refinement we introduce: \textit{safe-alternating reachability}. 

The main difference between these two notions is the inclusion of an additional safety condition:
\begin{itemize}
    \item \textbf{Alternating Reachability}: A vertex $v$ is \textit{alternatingly reachable} from $u$ if there exists a strategy for $\square$ that ensures $v$ is eventually reached, regardless of the adversarial strategy chosen by $\ocircle$.
    \item \textbf{Safe-Alternating Reachability}: In addition to ensuring reachability, 
	the traversal must remain entirely within a predefined subset of ``safe" vertices $U$. 
	That is, the path to $v$ must never leave $U$, irrespective of the adversary’s moves.
\end{itemize}

	The upcoming formal definition  of \emph{safe-$\alpha$reachability} rigorously establishes this concept, 
	ensuring a precise mathematical foundation for its applications.
    
    \begin{Def}\label{def:Treachability}
    Given an $\alpha$graph $\A$ on vertex set $V$, let $U\subseteq V$ and $u,v\in U$.
    A vertex $v$ is \emph{$U$-safe-$\alpha$reachable} from $u$ when
    \emph{there exists} a strategy $\sigma_{\square}\in\Sigma^{\mathcal{A}}_{\square}$ such that
    \emph{for every} adversarial strategy $\sigma_{\ocircle}\in\Sigma^{\mathcal{A}}_{\ocircle}$:
    \begin{itemize}
    	\item[] \emph{[$\alpha$reachability]} $v$ is eventually reached by playing $\sigma_\square$ starting from $u$,
    		\ie $v\in \Gamma\big[\rho_{\mathcal{A}}(u, \sigma_{\square}, \sigma_{\ocircle})\big]$;
    	\item[] \emph{[safety]} the pebble never leaves $U$ until it reaches $v$, \ie  
		$\Gamma\big[\rho_{\mathcal{A}}(u, \sigma_{\square}, \sigma_{\ocircle})_{\leq v}\big] \subseteq U$.
		(Notice that this implies that $v$ must be in $U$.)
    \end{itemize}
    \end{Def}
    In that case denote $\sigma_{\square}:u \overset{U}{\leadsto} v$, or $u\overset{U}{\leadsto} v$ when $\sigma_{\square}$ is implicit;
    if $U=V$, denote $\sigma_{\square}:u \leadsto v$ or $u\leadsto v$.

    \emph{Remark:} By convention, any $u\in U$ is $U$-safe-$\alpha$reachable from itself for every non-empty $U\subseteq V$. \qed

\subsubsection{Why Safe-Alternating Reachability?}

While alternating reachability is a fundamental concept, it does not ensure that the path to the target satisfies safety constraints. 
In many practical settings, ensuring reachability without violating safety constraints is critical. Consider the following motivating cases:

\begin{itemize}
    \item \textbf{Verification and Synthesis}: In verification and synthesis of reactive systems, a controller may need to ensure that a system reaches a safe operating state while avoiding hazardous conditions. Standard alternating reachability cannot guarantee that the system does not enter an unsafe region.
    \item \textbf{Games on Graphs}: In two-player games on graphs, a player may want to ensure that a winning condition is met without ever stepping into an opponent’s stronghold. Alternating reachability alone does not suffice, as it does not constrain the path taken.
    \item \textbf{AI and Robotics}: In AI planning and motion strategy problems, an agent may need to navigate towards a goal while ensuring that it never leaves a valid operational space. Traditional alternating reachability only guarantees eventual arrival but does not prevent excursions into restricted zones.
\end{itemize}

By explicitly incorporating a \textit{safety condition}, safe-alternating reachability refines the classical model to better capture scenarios where both goal achievement and constraint satisfaction are crucial. 

\paragraph{Strict Safety Constraints.} 
A natural alternative to the definition we adopt in this work is a strict notion of safe-alternating reachability, 
where the path is required to remain within $U$ for the entire play even beyond reaching $v$.

The strict notion can be formalized as follows:
\[
\text{\emph{[strict safety]} The pebble never leaves } U, \text{ \ie } \Gamma\big[\rho_{\mathcal{A}}(u, \sigma_{\square}, \sigma_{\ocircle})\big] \subseteq U.
\]
This strict condition can be relevant in applications where long-term safety is required, 
and once a system enters a safe region, it must never leave. Some examples include:
\begin{itemize}
    \item \textbf{Safety-Critical Systems}: In control theory and automated verification, where systems must always operate within predefined safety constraints, such as air traffic control or autonomous vehicle operation.
    \item \textbf{Security and Isolation}: In cybersecurity, where execution must remain within a sandboxed environment to prevent unauthorized access or data leakage.
    \item \textbf{Regulated Multi-Agent Systems}: In distributed AI and robotic systems, where agents are confined to bounded regions and must never leave their designated operational space.
\end{itemize}
In this work, we adopt the weaker Definition~\ref{def:Treachability}, 
which requires that the play stays within $U$ only until reaching $v$, but not necessarily afterward.
Remarkably, for the objective of studying safe-alternating strongly-connected components, 
both the weak and strict definitions lead to the same results. 
This is because if a component is weakly safe-alternating strongly-connected, 
then once a vertex within the component is reached, every other vertex in the component must still be reachable 
from it while staying entirely within $U$. 
Consequently, whether we impose the safety condition only up to $v$ (weak version) or throughout 
the entire play (strict version), the set of reachable nodes within the strongly-connected component remains unchanged. 

We are now ready to revisit alternating strong connectivity under safe-alternating constraints.

    \subsection{Safe-Alternating Strongly-Connected Components}\label{sect:stcc}

	\begin{Def}\label{def:safe-asc}
		Let $\mathcal{A}$ be an $\alpha$graph on vertex set $V$. 
		A subset $U \subseteq V$ is \emph{safe-alternating strongly connected} (safe-$\alpha$SC) if and only if, 
		for every pair of vertices $u, v \in U$, $v$ is $U$-safe-$\alpha$reachable from $u$.
		
		That is, there exists a strategy $\sigma_{\square} \in \Sigma^{\mathcal{A}}_{\square}$ such that:
		\[
			\sigma_{\square}: u \overset{U}{\leadsto} v.
		\]
	\end{Def}

    Notice that $\emptyset$ and $\{v\}$ are $\{v\}$-safe-$\alpha$sc for every $v\in V$. \qed

    \begin{figure}[h]
    \begin{center}
    \begin{tikzpicture}[arrows=->,scale=.8,node distance=1.5 and 1.5]
      \node[node, label={above left, xshift=.25ex, yshift=-.25ex: $\alpha$}] (A) {$a$};
      \node[squa, xshift=6ex, above=of A, label={above right, yshift=-1ex: $\beta$}] (B) {$b$};
      \node[squa, right=of A, label={above right,  xshift=.75ex, yshift=-1ex: $\gamma$}] (C) {$c$};
      \draw[] (A) to [bend left=30] node[below] {} (B);
      \draw[] (B) to [bend left=30] node[above] {} (C);
      \draw[] (A) to [bend left=30] node[above] {} (C);
      \draw[] (C) to [bend left=30] node[above] {} (A);
    	\draw[dashed, ultra thin, rounded corners=15pt] (0,2.75) rectangle (2,1.5);
    	\draw[dashed, ultra thin, rounded corners=15pt] (-1,.65) rectangle (1,-.65);
    	\draw[dashed, ultra thin, rounded corners=15pt] (1.35,.65) rectangle (3.25,-.65);
    \end{tikzpicture}
    \caption{An $\alpha$graph on vertex set $\{a,b,c\}$, and its safe-$\alpha$SCCs $\alpha=\{a\}, \beta=\{b\}, \gamma=\{c\}$.}\label{fig:safe-asc}
    \end{center}
    \end{figure}

\begin{Exa}\label{ex:safe-asc}
Figure~\ref{fig:safe-asc} depicts an $\alpha$graph with vertex set $\{a, b, c\}$, 
where directed arcs represent possible transitions between vertices. 
The $\alpha$graph is decomposed into its safe-alternating strongly connected components (safe-$\alpha$SCCs), 
which are enclosed by dashed boundaries. In this case, every vertex forms a singleton component: 
$\alpha = \{a\}$, $\beta = \{b\}$, and $\gamma = \{c\}$.

According to Definition~\ref{def:safe-asc}, a set $U \subseteq V$ is a safe-$\alpha$SCC if every vertex in $U$ is 
$U$-safe-$\alpha$reachable from every other vertex in $U$. In this example, no two distinct vertices are 
mutually safe-alternating reachable within a common set while satisfying the safety constraints. 
As a result, the decomposition consists solely of singleton components.
\end{Exa}

Example~\ref{ex:safe-asc} highlights how safe-alternating reachability can restrict connectivity,  
resulting in a finer decomposition than classical SCCs. 
It also illustrates the impact of safety constraints on the structure of the graph, 
where even direct connections between nodes do not necessarily imply safe-alternating 
reachability within a larger set.

    Next, let us observe the following \emph{composition} property concerning safe-$\alpha$sc sets.
    \begin{Lem}\label{lemma:safe-asc-union}
    Let $V_1,V_2\subseteq V$ be two safe-$\alpha$sc sets.
    If $V_1\cap V_2\neq \emptyset$, then $V_1\cup V_2$ is safe-$\alpha$sc.
    \end{Lem}
    \begin{proof} 
	Let \(u,v\in V_1\cup V_2\).
	If \(\{u,v\}\subseteq V_i\) for some \(i\in\{1,2\}\), we are done since \(V_i\) is safe-\(\alpha\)sc.
	Otherwise, \textit{w.l.o.g.} assume \(u\in V_1\setminus V_2\) and \(v\in V_2\setminus V_1\) (the other cross case is symmetric).
	Pick $z\in V_1\cap V_2$.
	Since $\{u,z\}\subseteq V_1$, and since $V_1$ is safe-$\alpha$sc,
    there exists some strategy $\sigma_{\square}(u,z)\in\Sigma^{\mathcal{A}}_{\square}$ such that:
    \[\sigma_{\square}(u,z):u \overset{V_1}{\leadsto} z,\] similarly,
    there is some other strategy $\sigma_{\square}(z,v)\in\Sigma^{\mathcal{A}}_{\square}$
    such that: \[\sigma_{\square}(z,v):z \overset{V_2}{\leadsto} v.\]
    Then, consider the strategy $\sigma_{\square}(u,v)\in \Sigma^{\mathcal{A}}_\square$ constructed by gluing $\sigma_{\square}(u,z), \sigma_{\square}(u,v)$ in sequence:
    \[ \sigma_{\square}(u,v)\doteq
    	\left\{
    	\begin{array}{l}
    	\text{\emph{(1)} Starting from $u$, play $\sigma_{\square}(u,z)$ until $z$ is firstly reached; then,}	\\
    	\text{\emph{(2)} once on $z$, play $\sigma_{\square}(z,v)$ until $v$ is finally reached.}
    	\end{array} \right.
    \]
    Clearly, $\sigma_{\square}(u,v):u\overset{V_1\cup V_2}{\leadsto} v$.
    Since $u$ and $v$ were chosen arbitrarily, then $V_1\cup V_2$ is safe-$\alpha$sc.
    \end{proof}

    Lemma~\ref{lemma:safe-asc-union} allows us to define and study the following binary relation on $V$.
    \begin{Def}
    	The binary relation $\sim_{\text{safe}}\subseteq V\times V$ is defined as:
    \[	\sim_{\text{safe}}\doteq \big\{ (u,v)\in V\times V \mid \exists{U\subseteq V} \text{ such that } U \text{ is safe-$\alpha$sc and } \{u,v\}\subseteq U \big\}. \]
    \end{Def}

    \begin{Lem}\label{lemma:safe-equiv}
    $\sim_{\text{safe}}$ is an equivalence relation on $V$.
    \end{Lem}
    \begin{proof}
    To begin, (i) $\sim_{\text{safe}}$ is clearly \emph{reflexive}: for any $u\in V$, let $U\doteq \{u\}$;
    then, $u\overset{U}{\leadsto} u$, so $U$ is safe-$\alpha$sc; this shows $u\sim_{\text{safe}} u$.
    (ii) $\sim_{\text{safe}}$ is \emph{symmetric}, (actually, by definition): for any $u,v\in V$, assume $u\sim_{\text{safe}} v$;
    then, there exists some $U\subseteq V$ which is safe-$\alpha$sc and $u,v\in U$; so, the same set $U$ certifies $v\sim_{\text{safe}} u$.
    Finally, (iii) $\sim_{\text{safe}}$ is \emph{transitive}: indeed, for any $a,b,c\in V$, assume  $a\sim_{\text{safe}} b$ and $b \sim_{\text{safe}} c$.
    Since $a\sim_{\text{safe}} b$, there exists $V_1$ which is safe-$\alpha$sc and such that $a,b\in V_1$;
    similarly, there exists $V_2$ which is safe-$\alpha$sc and such that $b,c\in V_2$.
    Consider $U\doteq V_1\cup V_2$. Since $b\in V_1\cap V_2$, and $V_1,V_2$ are both safe-$\alpha$sc,
    then $U$ is safe-$\alpha$sc by Lemma~\ref{lemma:safe-asc-union}. Moreover, $a,c\in U$. So, $a\sim_{\text{safe}} c$.

    Thus $\sim_{\text{safe}}$ is an equivalence relation.
    \end{proof}

    Let us point out some interesting properties of $\sim_{\text{safe}}$ equivalence classes.
    \begin{Lem}\label{lemma:STCC}
    Let $\{\mathcal{C}_i\}_{i=1}^k$ be all the distinct equivalence classes of $\sim_{\text{safe}}$ on $V$.
    Then, the following holds.
    \begin{enumerate}
    	\item If $U\subseteq V$ is safe-$\alpha$sc and $U\cap \mathcal{C}_i\neq \emptyset$ for some $i\in [k]$,
    					then $U\subseteq \mathcal{C}_i$;
    	\item $\mathcal{C}_i$ is safe-$\alpha$sc for each $i\in [k]$;
    	\item Let $U\subseteq V$ be safe-$\alpha$sc. Then, $\mathcal{C}_i\subsetneq U$ for \emph{no} $i\in [k]$.
    	\end{enumerate}
    \end{Lem}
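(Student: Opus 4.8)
The plan is to first establish that $\sim_{\text{stc}}$ is an equivalence relation, and then to obtain the three numbered items as short consequences of Lemma~\ref{lemma:strongly-trap-connected-union} combined with a single monotonicity remark about trap-reachability. For the equivalence relation, \emph{reflexivity} will follow from the Remark after Definition~\ref{def:Treachability}: the singleton $\{u\}$ is strongly-trap-connected and contains $u$, so $u\sim_{\text{stc}}u$. \emph{Symmetry} is immediate, since the defining predicate ``there exists a strongly-trap-connected $U$ with $\{u,v\}\subseteq U$'' is already symmetric in $u,v$. \emph{Transitivity} is the only nontrivial case and is exactly where Lemma~\ref{lemma:strongly-trap-connected-union} enters: given $u\sim_{\text{stc}}v$ witnessed by a strongly-trap-connected $U_1\supseteq\{u,v\}$ and $v\sim_{\text{stc}}w$ witnessed by a strongly-trap-connected $U_2\supseteq\{v,w\}$, one has $v\in U_1\cap U_2\neq\emptyset$, so $U_1\cup U_2$ is strongly-trap-connected; as $\{u,w\}\subseteq U_1\cup U_2$, this witnesses $u\sim_{\text{stc}}w$.

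Before the numbered items, I would record the \emph{monotonicity} of trap-reachability: if $\sigma_{\square}:u\overset{U}{\leadsto}v$ and $U\subseteq U'$, then $\sigma_{\square}:u\overset{U'}{\leadsto}v$ as well, because the reachability clause of Definition~\ref{def:Treachability} is untouched while the entrapment clause $\Xi(\cdot)\subseteq U\subseteq U'$ only weakens as the trap set grows. Item~(1) is then direct: fixing any $z\in U\cap\mathcal{C}_i$, every $u\in U$ satisfies $\{u,z\}\subseteq U$ with $U$ strongly-trap-connected, hence $u\sim_{\text{stc}}z$, so $u$ belongs to the class $\mathcal{C}_i$ of $z$; as $u$ was arbitrary, $U\subseteq\mathcal{C}_i$.

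For item~(2), take an arbitrary pair $u,v\in\mathcal{C}_i$. Since they lie in one class, $u\sim_{\text{stc}}v$, so some strongly-trap-connected $U$ contains $\{u,v\}$; by item~(1) applied to this $U$ (which meets $\mathcal{C}_i$ at $u$), we get $U\subseteq\mathcal{C}_i$. The witnessing strategy $\sigma_{\square}:u\overset{U}{\leadsto}v$ then upgrades, by the monotonicity remark, to $\sigma_{\square}:u\overset{\mathcal{C}_i}{\leadsto}v$, which is precisely what Definition~\ref{def:strongly-trap-connected} demands for the pair $(u,v)$; since the pair was arbitrary, $\mathcal{C}_i$ is strongly-trap-connected. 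Finally, item~(3) I would prove by contradiction: if a strongly-trap-connected $U$ satisfied $\mathcal{C}_i\subsetneq U$, then from $\mathcal{C}_i\subseteq U$ and $\mathcal{C}_i\neq\emptyset$ we have $U\cap\mathcal{C}_i=\mathcal{C}_i\neq\emptyset$, so item~(1) forces $U\subseteq\mathcal{C}_i$, whence $U=\mathcal{C}_i$, contradicting the strict inclusion.

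I do not expect a genuine obstacle; the argument is almost mechanical once Lemma~\ref{lemma:strongly-trap-connected-union} is in hand. The one place that demands care is item~(2): the temptation is to try to synthesize a \emph{single} global strategy that is uniform over all pairs of $\mathcal{C}_i$, which would be much harder to control. This is unnecessary, because Definition~\ref{def:strongly-trap-connected} only asks, for each pair, for \emph{some} strategy, and the monotonicity remark is exactly what lets a $U$-entrapment certificate serve unchanged as a $\mathcal{C}_i$-entrapment certificate.
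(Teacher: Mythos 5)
Your proof is correct and takes essentially the same route as the paper's: reflexivity via the singleton $\{u\}$, symmetry directly from the symmetric form of the defining predicate, transitivity via Lemma~\ref{lemma:strongly-trap-connected-union} applied to $U_1\cup U_2$, and items (1)--(3) established in the same order by the same arguments (item (1) by picking $z\in U\cap\mathcal{C}_i$, item (2) by shrinking the witness $U$ into $\mathcal{C}_i$ via item (1), item (3) by applying item (1) to get $U=\mathcal{C}_i$). The only difference is presentational: you isolate the monotonicity of trap-reachability in the trap set as an explicit remark, whereas the paper invokes it silently in the step ``since $u\overset{U}{\leadsto}v$ and $U\subseteq\mathcal{C}_i$, then $u\overset{\mathcal{C}_i}{\leadsto}v$''.
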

	\begin{proof}
		\textit{Proof of (1).} Since $U\cap \mathcal{C}_i\neq \emptyset$, it's possible to pick $z\in U\cap \mathcal{C}_i$. 
		Pick $v\in U$, arbitrarily. Since $U$ is safe-$\alpha$sc and $z,v\in U$, then $v\sim_{\text{safe}} z$. 
		So, $v\in \mathcal{C}_i$ (because $z\in \mathcal{C}_i$, which is an equivalence class of $\sim_{\text{safe}}$).
		
		\textit{Proof of (2).} Let $u,v\in \mathcal{C}_i$, arbitrarily. Then, $u\sim_{\text{safe}} v$. 
		So, there exists some $U\subseteq V$ which is safe-$\alpha$sc and such that $u,v\in U$. 
		Thus, $u\overset{U}{\leadsto} v$. Notice, $u,v\in U\cap\mathcal{C}_i\neq \emptyset$. 
		Then, by item~1 of Lemma~\ref{lemma:STCC}, $U\subseteq \mathcal{C}_i$. 
		Since $u\overset{U}{\leadsto} v$ and $U\subseteq \mathcal{C}_i$, then $u\overset{\mathcal{C}_i}{\leadsto} v$. 
		So, $\mathcal{C}_i$ is safe-$\alpha$sc.
		
		\textit{Proof of (3).} Assume that $\mathcal{C}_i\subseteq U$, for some $i\in [k]$, 
		and some $U\subseteq V$ which is safe-$\alpha$sc. Then, since $U\cap \mathcal{C}_i = \mathcal{C}_i\neq\emptyset$, 
		by item~1 of Lemma~\ref{lemma:STCC} we have $U\subseteq \mathcal{C}_i$. So, $\mathcal{C}_i = U$.
	\end{proof}

    \begin{Def}\label{def:STCC}
    	Let $\A$ be an $\alpha$graph on vertex set $V$. Let $\C\subseteq V$ be a
    	subset of the vertices and consider the binary equivalence relation $\sim_{\text{safe}}$ on $V$.
    	We say that $\C$ is an \emph{alternating strongly-connected component ($\alpha$SCC)}
    	of $\A$ precisely when it is an equivalence class of $\sim_{\text{safe}}$.
    	\end{Def}

    	Otherwise stated, by Lemma~\ref{lemma:STCC}, an $\alpha$SCC is any maximal (under set inclusion) safe-$\alpha$sc vertex subset of the $\alpha$graph.

    Moreover, since safe-$\alpha$sc is a more constrained form of $\alpha$sc, the former implies the latter (as below).
    \begin{Prop}\label{prop:refinement}
    The $\sim_{\text{safe}}$ equivalence relation is \emph{finer} than $\sim_{\alpha\text{sc}}$.
    \end{Prop}
    \begin{proof}
    	It is enough to point out that every equivalence class of $\sim_{\text{safe}}$ is a subset of an equivalence class of $\sim_{\alpha\text{sc}}$
    		(and thus every equivalence class of the latter is a union of equivalence classes of the former).

    			This is clear as every safe-$\alpha$sc set is $\alpha$sc too.
    \end{proof}

    \subsection{Applications to Update Games and McNaughton-M\"uller Games}\label{sect:UG}
    An \emph{Update Game (UG)}~\cite{DK99,Din00,Bod01} is played on an $\alpha$graph $\A$ with vertex set $V$ and arc set $A$ for an infinite number of rounds.
    Here a \emph{play} is an infinite path $\rho=v_0 v_1 v_2 \ldots \in V^{\omega}$ such that $(v_i,v_{i+1}) \in A$ $\forall i\in \N$.
    Let $\text{Inf}(\rho)$ be the set of all the vertices $v\in V$ appearing infinitely often in $\rho$;
    namely, \[\text{Inf}[\rho]\doteq \big\{v\in V\mid \forall{j\in\N}\; \exists{k\in\N}, k > j, \text{ such that } v=v_k \big\},
    	\text{ provided } \rho = v_0 v_1 v_2 \dots v_k \ldots\in V^{\omega}.\]
    Player~$\square$ wins the UG played on $\A$ if and only if there exists $\sigma_{\square}\in\Sigma^{\A}_{\square}$ such that,
    	for every $\sigma_{\ocircle}\in\Sigma^{\A}_{\ocircle}$,
    every vertex is visited infinitely often in the unique play that is consistent with $\sigma_{\square}$ and $\sigma_{\ocircle}$,
    independently \wrt the starting position $s\in V$; namely, if and only if the following holds:
    	\[\exists{\sigma_{\square}\in\Sigma^{\A}_{\square}}\forall{\sigma_{\ocircle}\in\Sigma^{\A}_{\ocircle}} \forall{s\in V}\;
    		\text{Inf}\big[\rho_{\A}(s, \sigma_{\square}, \sigma_{\ocircle})\big]=V; \]
    otherwise, Player~$\ocircle$ wins. When Player~$\square$ wins an UG $\A$,	then $\A$ is called \emph{Update Network (UN)}~\cite{DK99,Din00,Bod01}.

    In order to decide who wins an UG, we can check whether the whole vertex set $V$ is either safe-$\alpha$sc or simply $\alpha$sc
    	(indifferently, as clearly the two conditions are equivalent for the whole vertex set,
    		\ie notice that the whole vertex set $V$ is safe-$\alpha$sc if and only if it is $\alpha$sc).
    \begin{Prop}\label{prop:UNequivsafe-asc}
    Let $\A$ be an UG on vertex set $V$.
    	Player~$\square$ wins the UG played on $\A$ if and only if $V$ is safe-$\alpha$sc; or equivalently (since $V$ is the whole vertex set),
    	if and only if $V$ is $\alpha$sc.
    \end{Prop}
    \begin{proof}
    If Player~$\square$ wins the UG played on $\A$, then $V$ is safe-$\alpha$sc (it follows directly from definitions,
    	as every vertex can be visited infinitely often then every vertex is $\alpha$reachable from any other one).
    Conversely, if $V$ is safe-$\alpha$sc, and $v_0, \ldots, v_{|V|-1}$ is a vertex ordering,
      for every $i$ there is $\sigma_\square(i)\in\Sigma^{\A}_\square$ such that $\sigma_\square(i) : v_i \leadsto v_{i'}$,
    			where $i'\doteq (i+1) \mod |V|$ for every $i\in \{0, \ldots, |V|-1\}$.
    Starting from any $v_i$, Player~$\square$ can visit infinitely often all vertices in $V$
      by playing forever $(\sigma_\square(i), \sigma_\square(i'), \sigma_\square(i''), \ldots)$ cascade.
    For the whole vertex set $V$, the same argument works if we consider $\alpha$sc instead of safe-$\alpha$sc.
    \end{proof}

    The fact is that we are not currently aware of any sub-quadratic	time algorithm for checking $\alpha$sc.
    Instead, our propsed solution for checking safe-$\alpha$sc runs in linear-time (as if we were computing all $|V|$ attractors in $O(|A|)$ aggregate time).
    Thus we employ safe-$\alpha$sc for solving UGs.

    Let us consider also
    	\emph{McNaughton-M\"uller Games (MMGs)}~\cite{Horn08}. They provide a useful model for the synthesis of controllers in reactive systems,
    but their complexity depends on the representation of the winning conditions.
    The most straightforward way to represent a M\"uller winning condition $\F\subseteq 2^V$
    is to provide an explicit list of subsets of vertices as in~\cite{Horn08},
    \ie $\F=\{\F_i\subseteq V\mid 1 \leq i \leq \ell \}$ for some $\ell\in\N$.

    A play $\rho\in V^\omega$ is winning for Player~$\square$ if and only if $\text{Inf}[\rho]\in \F$.
    So-called \emph{Explicit MMGs} can be solved in polynomial time, \eg with Horn's algorithm~\cite{Horn08}.
    Concerning time complexity, given an input $\alpha$graph $\A$ and explicit winning condition $\F$,
    there are at most $|\F|$ loops in a run of that algorithm,
    and the most time consuming operation at each iteration is precisely to decide an UG of size at most $|\A|+|\F|$, see~\cite{Horn08}.

    Thus deciding whether the whole vertex set of a game is safe-$\alpha$sc/$\alpha$sc is relevant to EMMGs~too.

    By Corollary~\ref{thm:UG_linear}, we can decide an UG in $\Theta(|\A|+|\F|)$ linear-time.
    As a consequence, the time complexity of Horn's algorithm \cite{Horn08} improves by a factor $|\A|+|\F|$ (\ie from cubic to quadratic).

    In summary, from Corollary~\ref{thm:UG_linear} and Horn's algorithm~\cite{Horn08}, we obtain Corollary~\ref{thm:explicit-mg} (\cfr Section~\ref{sect:results}).

    \section{Safe-Alternating Depth-First Search}\label{sect:aDFS}

This section introduces $\alpha$DFS, a depth-first reverse exploration algorithm for $\alpha$graphs, 
inspired by classical Depth-First Search (DFS). 

\subsection{Classical DFS}

Instead of presenting $\alpha$DFS in an abstract, standalone manner, 
we motivate its development by showing how it naturally arises when applying depth-first exploration principles to 
$\alpha$graphs. In particular, we build upon the classical DFS structures identified 
by~\cite{Tar72}, which we first recall before introducing $\alpha$DFS.

    \textbf{Palm-trees and jungles.}
In~\cite{Tar72}, several fundamental properties and applications of DFS were analyzed. 
In particular, the study introduced two key underlying graph structures, 
called \emph{palm-trees} and \emph{jungles}. 
These structures enabled the development of a now-classic linear-time algorithm for computing 
strongly connected components (SCCs), commonly referred to as \emph{The SCCs algorithm}.

    Following~\cite{Tar72, Cormen2001}, the recursive strategy of the DFS is to search deeper in the graph whenever possible.
    Initially all vertices are unexplored. Start from some vertex $u$ and choose an \emph{outgoing} arc to follow.
    Recursively, the arcs are firstly explored out of the most recently discovered vertex $v$ that still has unexplored arcs leaving it,
     by scanning the adjacency list of the already discovered vertex $v$.
    When all of $v$'s arcs have been explored, the search backtracks one step back to explore the
      remaining arcs leaving that vertex from which $v$ was discovered just before.
    This process continues until we have discovered all the vertices that are reachable from the original source vertex $u$.
    If any undiscovered vertices remain, then one of them is picked as a brand new source and the search is repeated from that.
    The entire process is repeated until all vertices are discovered.
    Besides exploring the graph the DFS also timestamps each vertex twice, where each timestamp is a natural number:
    the first one, named $\text{open}[v]:V\rightarrow \N$, records when $v\in V$ is first discovered; the second timestamp
    $\text{close}[v]$ records when the search finishes examining $v$'s adjacency list.
    These timestamps are used in many algorithms and are generally helpful in reasoning about the behavior of the DFS.
    Let us call it \emph{forward}-DFS, for, at each step the chosen arc is \emph{outgoing}.

    Concerning \emph{palm-trees}~\cite{Tar72}, observe in more detail what happens when DFS runs.
    The set of arcs $A_\pi$ first leading to an unexplored vertex, when traversed during the search, 
	forms a family of outward directed rooted \emph{trees} $T$, \ie where each arc is oriented away from the root.
    The predecessor\footnote{The symbol $\pi$ in $A_{\pi}$ simply stands for “predecessor”.} subgraph $(V,A_{\pi})$ of a DFS is thus a forest defined as:
	\[
	A_{\pi} \doteq \big\{ (\pi_v, v) \mid \pi_v, v \in V \text{ and $v$ is first discovered from $\pi_v$ during the DFS} \big\}.
	\]

    All of the other arcs of the input graph $G$ fall into four categories: (i) some arcs are running from ancestors to descendants in $T$,
    these may well be ignored as (even if we remove them from the graph) they do not affect the strongly-connectedness of $G$;
    still, (ii) some other arcs run from descendants to ancestors in $T$, these are quite relevant to determine strongly-connectedness instead,
    	and they are called \emph{fronds};
    (iii) other arcs run from one subtree to another within the same tree $T$, these are also relevant and named \emph{internal cross-links};
    (iv) suppose to continue the DFS until all arcs are explored,
    the process creates a family of trees which contains all vertices, \ie a \emph{spanning forest} $F=(V,A_\pi)$ of $G$,
     plus sets of (fronds and) cross-links which may also connect two different trees in $F$, and these would be called the \emph{external cross-links}.
    Notice that any (internal or external) cross-link $(u, v)$ always has $\textit{open}[u] > \textit{open}[v]$.

    Any tree $T$ of $F$, comprising fronds and cross-links, is called \emph{palm-tree}.

    A directed graph consisting of a spanning forest, plus fronds and cross-links, is named \emph{jungle},
    \ie a family of palm-trees plus external cross-links; this is a natural representation of the graph reachability relations of the input directed graph $G$.

    \begin{figure}[tb]
\centering
\subfloat[An $\alpha$graph $\A$.\label{fig:example_arena_dfs}]{
\begin{tikzpicture}[arrows=->, scale=.6]
	\node[circ] (A) {$C$};
	\node[squa, right=of A, yshift=0ex, xshift=0ex] (B) {$B$};
	\node[squa, below=of B] (C) {$A$};
	\node[squa, left=of C, xshift=0ex, yshift=0ex] (D) {$H$};
	\node[squa, above left=of A, xshift=0ex, yshift=0ex] (E) {$D$};
	\node[circ, above right=of B, yshift=0ex, xshift=0ex] (F) {$G$};
	\node[squa, below right=of C, xshift=0ex, yshift=0ex] (G) {$F$};
	\node[squa, below left=of D, yshift=0ex, xshift=0ex] (H) {$E$};

	\draw[] (A) to [xshift=0ex, yshift=0ex] node[below, xshift=0ex, yshift=0ex] {} (B);
	\draw[] (B) to [xshift=0ex, yshift=0ex] node[below, xshift=0ex, yshift=0ex] {} (C);
	\draw[] (D) to [xshift=0ex, yshift=0ex] node[below, xshift=0ex, yshift=0ex] {} (C);
	\draw[] (A) to [xshift=0ex, yshift=0ex] node[below, xshift=0ex, yshift=0ex] {} (D);
	\draw[] (E) to [xshift=0ex, yshift=0ex] node[below, xshift=0ex, yshift=0ex] {} (F);
	\draw[] (F) to [xshift=0ex, yshift=0ex] node[below, xshift=0ex, yshift=0ex] {} (G);
	\draw[] (G) to [xshift=0ex, yshift=0ex] node[below, xshift=0ex, yshift=0ex] {} (H);
	\draw[] (H) to [xshift=0ex, yshift=0ex] node[below, xshift=0ex, yshift=0ex] {} (E);
	\draw[] (E) to [xshift=0ex, yshift=0ex] node[below, xshift=0ex, yshift=0ex] {} (A);
	\draw[] (F) to [xshift=0ex, yshift=0ex] node[below, xshift=0ex, yshift=0ex] {} (B);
	\draw[] (C) to [xshift=0ex, yshift=0ex] node[below, xshift=0ex, yshift=0ex] {} (G);
	\draw[] (D) to [xshift=0ex, yshift=0ex] node[below, xshift=0ex, yshift=0ex] {} (H);
\end{tikzpicture}
}
\qquad
\subfloat[A reverse-palm-tree, with timestamps of vertices.\label{fig:example_palmtree_dfs}]{
\begin{tikzpicture}[arrows=->, scale=.45]
	\node[squa, label={right : \tiny $1|16$}] (tA) {$A$};
	\node[squa, below=of tA, label={right : \tiny $2|15$}, yshift=0ex, xshift=0ex] (tB) {$B$};
	\node[circ, below=of tB, label={right : \tiny $3|14$}] (tC) {$C$};
	\node[squa, below=of tC, label={right : \tiny $4|13$}, xshift=0ex, yshift=0ex] (tD) {$D$};
	\node[squa, below=of tD, label={right : \tiny $5|12$}, xshift=0ex, yshift=0ex] (tE) {$E$};
	\node[squa, below left=of tE, label={right : \tiny $6|9$}, yshift=0ex, xshift=0ex] (tF) {$F$};
	\node[circ, below left=of tF, label={right : \tiny $7|8$}, xshift=0ex, yshift=0ex] (tG) {$G$};
	\node[squa, below right=of tE, label={right : \tiny $10|11$}, yshift=0ex, xshift=0ex] (tH) {$H$};

	\draw[thick] (tB) to [xshift=0ex, yshift=0ex] node[ xshift=0ex, yshift=0ex] {\tiny\bf tree} (tA);
	\draw[thick] (tC) to [xshift=0ex, yshift=0ex] node[ xshift=0ex, yshift=0ex] {\tiny\bf tree} (tB);
	\draw[thick] (tD) to [xshift=0ex, yshift=0ex] node[ xshift=0ex, yshift=0ex] {\tiny\bf tree} (tC);
	\draw[thick] (tE) to [xshift=0ex, yshift=0ex] node[ xshift=0ex, yshift=0ex] {\tiny\bf tree} (tD);
	\draw[thick] (tG) to [xshift=0ex, yshift=0ex] node[ xshift=0ex, yshift=0ex] {\tiny\bf tree} (tF);
	\draw[thick] (tF) to [xshift=0ex, yshift=0ex] node[ xshift=0ex, yshift=0ex] {\tiny\bf tree} (tE);
	\draw[thick] (tH) to [xshift=0ex, yshift=0ex] node[ xshift=0ex, yshift=0ex] {\tiny\bf tree} (tE);

	\draw[dashed] (tA) to [bend right=20, xshift=0ex, yshift=0ex] node[ xshift=0ex, yshift=0ex] {\tiny frond} (tF);
	\draw[dashed] (tD) to [bend right=30, xshift=0ex, yshift=0ex] node[ xshift=0ex, yshift=0ex] {\tiny frond} (tG);
	\draw[dashed] (tC) to [bend left=55, xshift=0ex, yshift=0ex] node[ xshift=0ex, yshift=-1ex] {\tiny frond} (tH);
\end{tikzpicture}
}
\qquad
\subfloat[The order of arcs' exploration.]{
\begin{tikzpicture}[arrows=->, scale=.1, node distance = 0 and 0]
	\node[] (eA) {$\text{\tiny 1.} (B,A)$};
	\node[below=of eA, yshift=0ex, xshift=0ex] (eB) {$\text{\tiny 2.} (C,B)$};
	\node[below=of eB] (eC) {$\text{\tiny 3.} (D,C)$};
	\node[below=of eC, xshift=0ex, yshift=0ex] (eD) {$\text{\tiny 4.} (E,D)$};
	\node[below=of eD, xshift=0ex, yshift=0ex] (eE) {$\text{\tiny 5.} (F,E)$};
	\node[below=of eE, yshift=0ex, xshift=0ex] (eF) {$\text{\tiny 6.} (A,F)$};
	\node[right=of eA, xshift=0ex, yshift=0ex] (eG) {$\text{\tiny 7.} (G,F)$};
	\node[below=of eG, yshift=0ex, xshift=0ex] (eH) {$\text{\tiny 8.} (D,G)$};
	\node[below=of eH, xshift=0ex, yshift=0ex] (eI) {$\text{\tiny 9.} (H,E)$};
	\node[below=of eI, yshift=0ex, xshift=0ex] (eL) {$\text{\tiny 10.} (C,H)$};
	\node[below=of eL, xshift=0ex, yshift=0ex] (eM) {$\text{\tiny 11.} (G,B)$};
	\node[below=of eM, yshift=0ex, xshift=0ex] (eN) {$\text{\tiny 12.} (H,A)$};

\end{tikzpicture}
}
\caption{A reverse-palm-tree (b), generated by reverse-DFS (c) starting at $A$.}\label{fig:arena-palm-tree-example}
\end{figure}

    \textbf{Reverse-DFS, palm-trees and jungles.}
    As we are dealing with $\alpha$graphs, it turns out we need to impose an \emph{opposite} direction \wrt that in which the arcs are explored;
    \ie at each step of the DFS, we actually choose an \emph{ingoing} arc to follow instead of an outgoing one.
    This reversal is due to the fact that, on one side, Player~$\square$ has no control on the arc choices of the opponent, and on the other side,
    we still aim at exploring the $\alpha$graph in a depth-first manner but meanwhile preserving $\alpha$reachability relations;
     we will see that we can achieve this but we have to reverse the direction of exploration
      so that to mimics the backward tread of computing a $\square$-attractor.
    Let us call the corresponding search algorithm, \emph{reverse}-DFS (think of it as if we had reversed the direction of each arc).
    A moment's reflection reveals that, if run on a directed graph, all the basic properties of the DFS are still there (by symmetry).
    For instance, if the vertices are numbered in the order in which they are reached during the reverse-DFS,
    \eg by $\textit{open}[v]:V\rightarrow \N$,	now a cross-link $(u, v)$ always has $\textit{open}[u] < \textit{open}[v]$.
    A forest of inward directed \emph{reverse-palm-trees} is constructed during a reverse-DFS.
    Let us call \emph{reverse-jungle} the underlying predecessor subgraph structure,
     that is a family of reverse-palm-trees comprising \emph{fronds} and \emph{cross-links}.
     Also notice that, if run on an $\alpha$graph having $V_\ocircle=\emptyset$,
       the reverse-palm-tree of a reverse-DFS is actually a $\square$-attractor strategy.
    Since we will only deal with the reversed variants, from now on in the forthcoming sections
    we shall refer to them simply as ``DFS", ``palm-trees" and ``jungles".
	
\begin{Exa}\label{ex:arena-palm-tree-example}
Figure~\ref{fig:arena-palm-tree-example} illustrates an $\alpha$graph $\mathcal{A}$ and its corresponding 
reverse-palm-tree decomposition, derived from a reverse depth-first search traversal.

Subfigure~\ref{fig:example_arena_dfs} presents the original $\alpha$graph, where directed arcs represent 
possible transitions between vertices, controlled either by Player~$\square$ or Player~$\ocircle$. 
This structure defines the alternating reachability conditions within the game.

Subfigure~\ref{fig:example_palmtree_dfs} depicts the reverse-palm-tree decomposition of the same graph. 
The numbers next to each vertex indicate the opening and closing timestamps assigned during the DFS traversal. 
Thick arcs denote tree arcs, which form the backbone of the depth-first search tree, 
while other arcs represent different structural relationships between vertices. 

The reverse-palm-tree decomposition provides a hierarchical view of the graph, also 
revealing dependencies in alternating reachability. 
By examining the DFS tree, one can determine how each vertex is explored and how the game structure influences traversal order. 
This example visually demonstrates how depth-first search interacts with alternating graphs and 
lays the groundwork for subsequent observations.
\end{Exa}

Example~\ref{ex:arena-palm-tree-example} illustrates how reverse depth-first traversal structures 
an $\alpha$graph into a reverse-palm-tree. However, while classical DFS correctly captures graph reachability in 
standard directed graphs, it does not necessarily preserve alternating reachability in $\alpha$graphs. 
This raises the need for an adaptation that respects the strategic constraints imposed by alternating moves. 
To address this, we now introduce $\alpha$DFS, a depth-first exploration strategy tailored to $\alpha$graphs.

\subsection{Key Ideas of $\alpha$DFS()}

The reverse-palm-tree decomposition provides a structured way to explore an $\alpha$graph, 
but it remains essential to analyze whether this structure correctly reflects alternating reachability. 
In classical DFS, any vertex is reachable from its descendants within a palm-tree. 
However, in an $\alpha$graph, reachability is governed by the interplay between Player~$\square$ and Player~$\ocircle$, 
meaning that a vertex’s strategic accessibility may differ from its position in the palm-tree. 
This motivates a closer examination of reachability within jungles and palm-trees derived from DFS.

With this in mind, let us attempt again to explore an \emph{$\alpha$graph} $\mathcal{A}$ using a classical reverse DFS. 
Let $\mathcal{J}$ be the resulting jungle, and let $T$ be any palm-tree in $\mathcal{J}$. 
An example is provided in Figure~\ref{fig:example_arena_dfs}, with the corresponding palm-tree $T$ shown in Figure~\ref{fig:example_palmtree_dfs}. 
In this representation, each vertex $v$ is annotated with its opening and closing timestamps using the notation 
$\langle \textit{open}[v] \rangle \mid \langle \textit{close}[v] \rangle$. 

To begin, observe that in any palm-tree $T = (V_T, A_T)$, graph reachability is trivial: 
for any two vertices $u, v \in T$, if $v$ is an ancestor of $u$ in $T$, 
then there exists a simple path from $u$ to $v$ within $T$, \ie $v$ is reachable from $u$ in $T$.

At this point, let us shift our focus from graph reachability to \emph{alternating reachability}. 
Unlike in standard graph traversal, a palm-tree $T$ constructed via classical DFS does not necessarily preserve $\alpha$reachability. 
For instance, consider the two vertices $F, B \in V_{\square}$ in the palm-tree $T$ shown in Figure~\ref{fig:example_palmtree_dfs}. 
Although $B$ is an ancestor of $F$ in $T$, Player~$\square$ has no strategy to ensure reaching $B$ from $F$. 
Regardless of Player~$\square$'s choices, Player~$\ocircle$ can always force the play away from $B$:

- Any play starting from $F$ must first reach $D$. If Player~$\square$ then chooses $(D, G)$, 
  Player~$\ocircle$ can move back to $F$ via $(G, F)$.  
  
- Alternatively, if Player~$\square$ plays $(D, C)$, Player~$\ocircle$ can respond with $(C, H)$, reaching $H$. 
  Once at $H$, the play is forced to return to $D$, preventing access to $B$.  

Thus, starting from $F$, Player~$\ocircle$ can always prevent Player~$\square$ from reaching $B$. 
This example illustrates that classical DFS structures do not inherently respect alternating reachability, 
motivating the need for a more refined depth-first search method adapted to $\alpha$graphs.

Our goal is to extend classical DFS, along with palm-trees and jungles, to $\alpha$graphs in a way that preserves reachability 
\emph{within} a suitably adapted notion of palm-trees. Ideally, a well-defined ``DFS for $\alpha$graphs" should satisfy the following property:  
For any such palm-tree $T$, if $u, v \in T$ and $v$ is an ancestor of $u$ in $T$, 
there must exist a strategy $\sigma_{\square} \in \Sigma^{\mathcal{A}}_{\square}$ that allows Player~$\square$ to eventually reach $v$ 
starting from $u$, \emph{without leaving $T$}, regardless of the adversarial strategy chosen by Player~$\ocircle$. 
		
The discussion above highlights a fundamental issue: classical DFS structures, such as palm-trees, 
do not necessarily preserve $\alpha$reachability. This motivates the need for a refined approach that respects 
the strategic interplay between Player~$\square$ and Player~$\ocircle$. 
A crucial aspect of adapting DFS to $\alpha$graphs is understanding how vertices are discovered and processed 
during traversal. In standard DFS, the search progression is typically represented by a vertex coloring scheme, 
which helps track the state of exploration. Before introducing modifications for $\alpha$DFS, 
we briefly recall this basic coloring mechanism, as in~\cite{Cormen2001}.

Imagine reverse-DFS runs on a directed graph, color the vertices during the search to indicate their state.
Initially each vertex $v$ is \emph{white} to mean unexplored, then $v$ becomes \emph{grey} when it is first discovered
  (\ie when $\textit{open}[v]$ is assigned), then \emph{black} when the search backtracks (\ie when $\textit{close}[v]$ is assigned).
 Each vertex changes color only twice, from white to grey and then blackened.

    A fundamental underlying invariant property of \emph{Safe-Alternating Depth-First Search~($\alpha$DFS)} goes as prescribed in the box below
			(this will be formally proved in Proposition~\ref{prop:safe-reachability}).
Recall that the exploration of the vertices goes backward like in reverse-DFS meanwhile building up a palm-tree $T$ in post-ordering;
now the task is precisely to decide \emph{which} particular post-ordering to follow, \ie \emph{when} to explore any given vertex.
    \begin{mybox}
	\textbf{Condition for Attaching a Vertex to the $\alpha$DFS Palm-Tree}
	
    During the $\alpha$DFS() exploration of an input $\alpha$graph $\A$ on vertex set $V$, a new vertex $u\in V$ is visited and attached to the $\alpha$DFS's palm-tree $T$ under formation (\ie that one comprising at least one grey vertex) only when the \emph{$T$-safe-$\alpha$reachability} of its root $r_T$
    	becomes guaranteed starting from $u$ in such a way that any safe-$\alpha$reachability finite play path can only move through the non-white vertices of $T$.

    This happens only after that a certain set of out-neighbours of $u$ becomes non-white in $T$:
    \emph{all} of $u$'s out-neighbours must have been colored grey or black if $u\in V_\ocircle$; and \emph{at least one} if $u\in V_\square$.
    \end{mybox}

    (So, safe-$\alpha$reachability is invariantly preserved in the palm-trees instead of just graph reachability)

    Of course we will need additional (non-trivial) arguments to ensure the algorithm runs in linear-time.
    For instance, when a new vertex $u\in V_\ocircle$ attaches to the $\alpha$DFS's palm-tree $T$ under	formation,
    the parent $\gamma$ of $u$ in $T$ must be chosen very carefully. The following \emph{attraction-rule} stands out.

    \begin{mybox}
	\textbf{Parent Selection Rules in $\alpha$DFS Palm-Tree}
	
    	During the $\alpha$DFS() exploration of the given input $\alpha$graph $\A$ on vertex set $V$,
    		assume that a new vertex $u\in V$ now attaches to the $\alpha$DFS's palm-tree $T$ under formation (\ie that comprising at least one grey vertex).
    		If $u\in V_\square$, the parent $\gamma$ of $u$ in $T$ can be any of the grey out-neighbours of $u$ in $T$ (\emph{$\square$-attraction-rule});
    		otherwise, if $u\in V_\ocircle$, the parent $\gamma$ of $u$ in $T$ is precisely the 
			\emph{Lowest Common Ancestor (LCA)} (which is grey colored at that time) of all 
			the out-neighbours of $u$ in $T$ (and all these must be non-white colored at that time) (\emph{$\ocircle$-attraction-rule}).
    \end{mybox}

		So, safe-$\alpha$reachability is preserved from $u\in V_\ocircle$ to the LCA of its out-neighbours.

		\begin{figure}[t!]
			\centering
			\begin{tikzpicture}[scale=.7]
					\node [squa, fill=black!10] (0) at (0, 7) {$r$};
					\node [style=none] (1) at (-9, 0) {};
					\node [style=none] (2) at (9, 0) {};
					\node [style=none] (3) at (9, 0) {};
					\node [style=none] (4) at (3, 2) {};
					\node [squa, fill=black!10] (5) at (1, 4) {$\gamma$};
					\node [style=none] (6) at (0, 2) {};
					\node [style=none] (7) at (-2.75, 2) {};
					\node [squa, fill=black!10] (8) at (3, 2) {$v$};
					\node [circ, fill=black!40] (9) at (0, 2) {$y$};
					\node [squa, fill=black!40] (10) at (-2.5, 2) {$x$};
					\node [style=none] (11) at (-6, 0) {};
					\node [style=none] (12) at (5, 0) {};
					\node [circ] (14) at (6, .4) {$u$};
					\node [style=none] (15) at (1.75, 2.25) {};
					\node [style=none] (16) at (2.25, 3) {};
					\node [style=none] (17) at (1.25, 3.25) {};
					\node [style=none] (18) at (1.5, 4.75) {};
					\node [style=none] (19) at (0.25, 5) {};
					\node [style=none] (20) at (0.5, 6) {};
					\draw [thick] (0.south west) to (1.center);
					\draw (0.south east) to node[above, xshift=-10ex, yshift=10ex] {$T$}  (3.center);
					\draw [thick] (1.center) to (12.center);
					\draw (12.center) to (3.center);
					\draw [thick] (5.south west) to (10.north east);
					\draw (5.south) to (9.north);
					\draw [thick] (10.south west) to (11.center);
					\draw [thick] (8.south east) to (12.center);
					\draw [arrows=->] (8.west) to (15.center);
					\draw [arrows=->] (15.center) to (16.center);
					\draw [arrows=->] (16.center) to node[below, xshift=0ex, yshift=0ex] {\tiny tree} (17.center);
					\draw [arrows=->] (17.center) to (5.south);
					\draw [arrows=->, dotted, bend left=45] (14.west) to node[xshift=0ex, yshift=0ex] {\tiny stalk} (8.south);
					\draw [arrows=->, dotted, bend left, looseness=.75] (14.south west) to node[xshift=0ex, yshift=0ex] {\tiny stalk} (9.south east);
					\draw [arrows=->, dotted, bend right=315, looseness=0.75] (14.south) to node[xshift=0ex, yshift=0ex] {\tiny stalk} (10.south);
					\draw [arrows=->] (5.north) to (18.center);
					\draw [arrows=->] (18.center) to (19.center);
					\draw [arrows=->] (19.center) to (20.center);
					\draw [arrows=->] (20.center) to (0.south);
					\draw [arrows=->, dashed, thick, bend right=25] (14.north) to node[xshift=1.5ex, yshift=0ex] {\tiny tree} (5.east);
				\end{tikzpicture}
				  \caption{An illustration of the \emph{$\ocircle$-attraction-rule} during \textit{$\alpha$DFS()}}\label{fig:alphaDFS}
		\end{figure}

\begin{Exa}
Figure~\ref{fig:alphaDFS} illustrates the application of the \emph{$\ocircle$-attraction-rule} during 
the execution of $\alpha$DFS(). 
The diagram represents a partial $\alpha$DFS exploration, 
highlighting how a newly discovered vertex $u$ is incorporated into 
the $\alpha$DFS palm-tree $T$ under formation.

The tree structure $T$ is rooted at $r$ and consists of a set of vertices, including $\gamma$, 
$x$, $y$, and $v$, which are colored either 
grey or black. The vertex $u$, currently unprocessed, is depicted outside the main tree and is connected to 
multiple vertices 
within $T$ via \emph{stalk} arcs (dotted arrows), indicating that $u$ has out-neighbors that are 
already part of the search tree.

According to the \emph{$\ocircle$-attraction-rule}, since $u$ belongs to $V_\ocircle$, 
its parent in the palm-tree $T$ must be 
precisely the \emph{Lowest Common Ancestor (LCA)} of all its out-neighbors within $T$. 
In the figure, this corresponds to the 
vertex $\gamma$, which at the time of attachment is grey. 
The rule ensures that the attachment respects alternating reachability 
constraints, preventing premature or unsafe insertions into the tree.

The figure also differentiates between different types of arcs:

- tree arcs (solid arrows): represent standard depth-first traversal paths within the forming palm-tree.

- stalk arcs (dotted arrows): indicate connections from $u$ to existing vertices in $T$, representing potential entry points.

- dashed arcs: highlight paths that confirm attachment to the appropriate LCA.

This structure guarantees that, when $u$ is added to $T$, it does not disrupt the alternating reachability 
properties necessary for correct $\alpha$DFS execution. 
The $\ocircle$-attraction-rule thus plays a crucial role in maintaining the integrity of 
the depth-first exploration strategy tailored for $\alpha$graphs.

\end{Exa}

A detailed low-level description of the algorithm comes next, where some additional technical machinery
			(\eg counters, stacks, and disjoint-sets data structure) is employed for running time efficiency.

    As it starts to make sense, a major technical issue will be that to perform LCAs lookups efficiently.
	
    \subsection{Description of $\alpha$DFS}
The main procedure, \textit{$\alpha$DFS()}, 
is defined in Algorithm~\ref{algo:aDFS}, while vertex visitation is handled by the 
subprocedure \textit{$\alpha$DFS-visit()} (Procedure~\ref{algo:aDFS-visit}). 
The corresponding pseudocode is provided in Algorithm~\ref{algo:aDFS} and Procedure~\ref{algo:aDFS-visit} below.

    \begin{algorithm}[H]
    \caption{Safe-Alternating DFS}\label{algo:aDFS}
    \DontPrintSemicolon
    \nonl \SetKwProg{Fn}{Procedure}{}{}
    \normalsize
    \Fn{$\textit{$\alpha$DFS}(\mathcal{A})$}{
    		\SetKwInOut{Input}{input}
    		\SetKwInOut{Output}{output}

    \Input{An $\alpha$graph $\mathcal{A}=(V, A, \langle V_{\ocircle}, V_{\square} \rangle )$.}
    \Output{An $\alpha$jungle $\mathcal{J}_{\mathcal{A}}$.}
    $A_{\pi}, A_{f}, A_{s}, A_{c}\leftarrow \emptyset$; \label{algo:aDFS:l1}\;
    \ForEach{$u\in V$}{ \label{algo:aDFS:l2}
    	$\textit{open}[u]\leftarrow +\infty$; \label{algo:aDFS:l3}\;
    	$\textit{close}[u]\leftarrow +\infty$;\label{algo:aDFS:l4}\;
    	$\textit{rSt}[u]\leftarrow \emptyset$; \label{algo:aDFS:l5}\;
    	\If{$u\in V_{\ocircle}$}{ \label{algo:aDFS:l6}
    	$\textit{cnt}[u]\leftarrow |N_{\A}^{\text{out}}(u)|$; \label{algo:aDFS:l7}
    	}
    }
    $\textit{time}\leftarrow 0$; \tcp{global time variable} \label{algo:aDFS:l8}
    \ForEach{$u\in V_{\square}$ \label{algo:aDFS:l9}}{
    	\If{$\textit{open}[u]=+\infty$ \label{algo:aDFS:l10}}{
    		$\textit{$\alpha$DFS-visit}(u, \A)$;\label{algo:aDFS:l11} \;
    	}
    }
    \ForEach{$u\in V_{\ocircle}$\label{algo:aDFS:l12}}{
    	\If{$\textit{open}[u]=+\infty$\label{algo:aDFS:l13}}{
    		$\textit{open}[u]\leftarrow \textit{time}$; \; \label{algo:aDFS:l14}
    		$\textit{close}[u]\leftarrow \textit{time}$; \; \label{algo:aDFS:l14}
    		$\textit{time}\leftarrow \textit{time}+1$; \label{algo:aDFS:l15}
    	}
    }
    $A'\leftarrow A_{\pi} \cup A_{f} \cup A_{s} \cup A_{c}$; \label{algo:aDFS:l16}\;
    \Return{$\mathcal{J}_{\mathcal{A}}\leftarrow (V, A', (V_{\square}, V_{\ocircle}))$ }; \label{algo:aDFS:l17}
    }
    \end{algorithm}

    The starting point for describing how everything works is recalling the reverse-DFS. In fact
     $\textit{$\alpha$DFS}(\A)$ (Algo.~\ref{algo:aDFS}) can be viewed as a gamification of the latter,
     in the sense that, if $V_\ocircle=\emptyset$, it works like a reverse-DFS and the output forest $\J_{\A}$ is a jungle.

     Indeed, given an $\alpha$graph $\A$ on vertex set $V=V_\square\cup V_\ocircle$, a forest $\alpha$graph $\J_{\A}$
     can be built during the search process (like the traditional DFS constructs a jungle) and returned as output.
    So $\J_{\A}$ will comprise a forest of trees, each called \emph{alternating palm-tree ($\alpha$palm-tree)}, having \emph{fronds} and \emph{cross-links}.

    During the exploration, arcs $(u,v)\in A$ will be classified into four categories according to the state (color) of the tail vertex $u$ that is touched when
    the arc is first explored, namely, tree-arcs $A_{\pi}$ (white), fronds $A_{f}$ (grey), stalk-arcs $A_{s}$ (white $u\in V_\ocircle$), and cross-links $A_{c}$ (black);
    at the end, their union $A'$ will be the whole arc set of what we call the \emph{alternating jungle ($\alpha$jungle)} $\J_{\A}$.

    An index, named $\textit{open}:V\rightarrow\N\cup\{+\infty\}$,
    	timestamps the vertices in the order in which they are firstly visited
    (\ie the timestamp opens at the beginning of the visiting subprocedure);
    initially all vertices are unvisited, so $\forall{u\in V}\; \textit{open}[u]\leftarrow +\infty$.
    Another index, $\textit{close}:V\rightarrow\N\cup\{+\infty\}$, timestamps the vertices in the order in which they are closed
    (\ie the closing assignment happens at the end of the visiting subprocedure).
    In the pseudocode we assume $\textit{open}[], \textit{close}[], \textit{rSt}[], \textit{cnt}[], time$ are all global variables.

    \begin{procedurealgo}[h!]
    \caption{Visit Procedure of Safe-Alternating DFS}\label{algo:aDFS-visit}
    \DontPrintSemicolon
    \nonl \SetKwProg{Fn}{Procedure}{}{}
    \normalsize
    \Fn{$\textit{$\alpha$DFS-visit}(v, \A)$}{
    		\SetKwInOut{Input}{input}
    		\SetKwInOut{Output}{output}
     \Input{One vertex $v\in V$ of $\A$.}
    $\textit{open}[v]\leftarrow (\textit{time}\leftarrow \textit{time}+1)$; \label{algo:aDFS-visit:l1} \;
    \ForEach{$u\in N_\A^{\text{in}}(v)$}{ \label{algo:aDFS-visit:l2}
    	\If{$\textit{open}[u] = +\infty$}{ \label{algo:aDFS-visit:l3}
    		\If{$u\in V_\square$}{ \label{algo:aDFS-visit:l6}
    			add $(u,v)$ to $A_{\pi}$;\label{algo:aDFS-visit:l4}\;
    			$\textit{$\alpha$DFS-visit}(u, \A)$; \label{algo:aDFS-visit:l5} \;
    		}\Else{ \label{algo:aDFS-visit:l6}
    			$\textit{cnt}[u]\leftarrow \textit{cnt}[u]-1$; \label{algo:aDFS-visit:l7} \;
    			\If{$\textit{cnt}[u]=0$ \textbf{and} $\exists$(LCA of $N_{\A}^{\text{out}}(u)$ in $(V, A_{\pi})$)}{ \label{algo:aDFS-visit:l8}
    					$\gamma\leftarrow $ the LCA of $N_{\A}^{\text{out}}(u)$ in $(V, A_{\pi})$; \label{algo:aDFS-visit:l9}\;
    					$\textit{rSt}[\gamma].\textit{push}(u)$; \label{algo:aDFS-visit:l10} \;
    			}
    		}
    	}\ElseIf{$\textit{open}[u]<+\infty \textbf{ and } \textit{close}[u]=+\infty$ \label{algo:aDFS-visit:l11}}{
    		add $(u,v)$ to $A_{f}$; \label{algo:aDFS-visit:l12} \;
    		\lElse{
    			add $(u,v)$ to $A_{c}$; \label{algo:aDFS-visit:l13}
    		}
    	}
    }
	\tcp{Check the ready-stack of $v$, \ie $\textit{rSt}[v]$}
    \While{$\textit{rSt}[v]\neq\emptyset$}{ \label{algo:aDFS-visit:l14}
    	$u\leftarrow \textit{rSt}[v].\textit{pop}()$;\label{algo:aDFS-visit:l15} \tcp{$u\in V_{\ocircle}$}
    	add $(u,v)$ to $A_{\pi}$; \label{algo:aDFS-visit:l16}\;
    	\textit{\bf for each} $t\in N^{\text{out}}_{\mathcal{A}}(u)$ \textit{\bf{do}} add $(u,t)$ to $A_{\textit{stalk}}$; \label{algo:aDFS-visit:l17}\;
    	$\textit{$\alpha$DFS-visit}(u, \A)$; \label{algo:aDFS-visit:l18} \;
    }
    $\textit{close}[v]\leftarrow (\textit{time}\leftarrow \textit{time}+1)$;\label{algo:aDFS-visit:l19}\;
    }
    \end{procedurealgo} 

    We say vertex $u\in V$ is \emph{active (grey)} if $\textit{open}[u]<+\infty$ and $\textit{close}[u]=+\infty$,
    say that $u$ has been \emph{visited (black)} if $\textit{open}[u]<+\infty$ and $\textit{close}[u]<+\infty$,
    and that $u$ is \emph{unvisited (white)} if $\textit{open}[u]=\textit{close}[u]=+\infty$.

    Now, imagine that the search exploration proceeds by visiting and backtracking vertices like in a reverse-DFS.
		Any $u\in V_\square$ is visited, and so it joins $\J_{\A}$, as soon as it is firstly discovered in the in-neighbourhood of \emph{some} active vertex (\ie precisely as in the reverse-DFS).

    Let's say by convention that any vertex $u\in V$ \emph{joins} $\J_{\A}$ precisely when it becomes active and the tree-arc $(u,v)$ is added to $A_{\pi}$ for some $v\in V$.

    	The \emph{$\ocircle$-attraction-rule} (\ie that allowing any $u\in V_\ocircle$ to be visited) is more involved:
    	 any $u\in V_{\ocircle}$ becomes active joining $\J_{\A}$, by attaching to some parent vertex $\pi_u$,
    	only when \emph{all} of $u$'s out-neighbours $v\in N^{\text{out}}_{\A}(u)$ have already~done so.
			So the visiting step of any circled $u$ has to be delayed \wrt the (possibly repeated) discovery of $u$ as an in-neighbour of (possibly many) active vertices $v$ such that $(u,v)\in A$; the exact moment being when the search backtracks, after the lastly visited out-neighbour $v\in N^{\text{out}}_{\A}(u)$, up to the corresponding parent vertex $\pi_u$.
    And when $u\in V_{\ocircle}$ joins $\J_\A$ with parent $\pi_u$ (\ie if $u\in V_{\ocircle}$
    	\emph{and} $(u,\pi_u)\in A_{\pi}$ for some $\pi_u\in V$), then $\pi_u$ is prescribed by the $\ocircle$-rule to be the \emph{LCA} $\gamma$
    	of $N_{\A}^{\text{out}}(u)$ in the $\alpha$palm-tree under formation; at that point
    	\emph{all} of the original outgoing arcs of $u$ are labeled~\emph{stalk-arcs}.

    	Indeed besides fronds and cross-links, $\alpha$palm-trees have an additional arc category:
    	\emph{stalk-arcs}, that are all the original outgoing arcs of any $u\in V_\ocircle$ which joined $\J_\A$.

    	Notice that if $u\in V_{\ocircle}$ joins $\J_\A$ with parent $\pi_u$, and since $\pi_u$ is the LCA of $N_{\A}^{\text{out}}(u)$,
    		then the arc $(u,\pi_u)\in A_{\pi}$ may be a totally brand new arc,
    			\ie it might not have been in the original arc set $A$ of the input $\alpha$graph $\A$
    				(in that case $A_\pi \not\subseteq A$ and $(u,\pi_u)$ is \emph{not} labeled as a stalk-arc).
    				The possibility that $A_\pi \not\subseteq A$ is a distinctive point with respect to reverse-DFS and Tarjan's jungles,
    				where all tree-arcs belonged to the input directed graph.

    In order to implement the \emph{$\ocircle$-attraction-rule} efficiently, an additional counter of out-neighbour vertices $\textit{cnt}:V_\ocircle \rightarrow \N$ is employed, constantly checked and updated.
    		The following invariant $I_\textit{cnt}$ is kept maintained:
    \[\forall{u\in V_\ocircle}\; \textit{cnt}[u]=\big|\{v\in N^{\text{out}}_\A(u)\mid \textit{open}[v]=+\infty\}\big|. \tag{$I_\textit{cnt}$}\]
    Also, for each $v\in V$ it is employed a LIFO stack of vertices named $\textit{rSt}[v]$ (named, the ready stack).
    Its role, during the $\textit{$\alpha$DFS-visit()}$ subprocedure, is to memorize that a certain vertex $\pi_u\in V$ had been
    	identified as the parent of some other vertex $u\in V_\ocircle$ (\ie when $\textit{cnt}[u]=0$ and $\pi_u=\gamma$
    		is the LCA of $N	_{\A}^{\text{out}}(u)$); at that point $u$ would be promptly pushed to the ready stack $\textit{rSt}[\pi_u]$.
    	Then $u$ will have to join $\J_\A$ when visited by the search, this happens when the visit backtracks from $u$ up to his parent $\pi_u$.

By construction, the $\ocircle$-attraction-rule ensures that safe-$\alpha$reachability 
is preserved within the $\alpha$palm-tree decomposition, as shown in Proposition~\ref{prop:safe-reachability}. 
In particular, the tree structure induced by $\alpha$DFS guarantees that every vertex remains safely connected 
to its ancestors according to the game rules.

Additionally, the graph \( (V, A_{\pi}) \), where each vertex \( u \) follows its designated 
parent \( \pi(u) \) as assigned by $\alpha$DFS, forms a forest. This follows from the fact that every 
vertex (except the root of each tree) is attached exactly once and never revisited. 
A more detailed structural characterization of the resulting jungle \( \mathcal{J}_{\mathcal{A}} \) 
is deferred to Proposition~\ref{prop:aDFS-jungle}.

The following proposition formalizes this safe reachability guarantee within each 
$\alpha$palm-tree $\mathcal{P}_i$ of the decomposition. 
Here, $\{\mathcal{P}_i\}_{i=1}^k$ denotes the vertex-disjoint collection of $\alpha$palm-trees 
constructed during the $\alpha$DFS exploration.

    		\begin{Prop}\label{prop:safe-reachability}
    			Assume that $\alpha$DFS() runs on a given input $\alpha$graph $\A$.
    		Consider the forest of $\alpha$palm-trees $\{\mathcal{P}_i\}_{i=1}^k$ that are constructed during the visiting process; say that
    		$\mathcal{P}_i=(V_i, A_i, \langle {V_{\square}}_i, {V_{\ocircle}}_i \rangle )$ is the $i$-th $\alpha$palm-tree,
				on vertex set $V_i$ and arc set $A_i$ for each $i \in [k]$.
    		For any two vertices $u,v\in V_i$ any $i\in [k]$, if $u$ is a descendant of $v$ in $\mathcal{P}_i$,
    			then $v$ is $V_i$-safe-$\alpha$reachable from $u$ \wrt $\A$. Particularly,
    				this holds thanks to the following strategy $\sigma_{\square} \in \Sigma^{\A}_{\square}$,
    				where $\pi(u)$ denotes the parent of any $u\in V_\square$ in the forest $(V, A_{\pi})$:
    				\[
    				\forall{u\in V_\square}\;	\sigma_{\square}(u) \doteq \left\{
    					\begin{array}{ll}
    						\pi(u), 	&
    								\text{if } u \text{ is \emph{not} the root of any $\alpha$palm-tree } \mathcal{P}_i;  \\
    						\text{any } u'\in N^{\text{out}}_{\A}(u), & \text{if } u \text{ is the root of some $\alpha$palm-tree } \mathcal{P}_i.
    					\end{array}
    						\right.
    				\]
    		\end{Prop}
    		\begin{proof}
    		Assume $u,v\in V_i$ where $u$ is a descendant of $v$ in the $\alpha$palm-tree $\mathcal{P}_i$, for some $i\in [k]$ fixed arbitrarily.
    		Recall that during the $\alpha$DFS() all vertices are given an index so that
    			$\textit{open}[v]<\textit{open}[u]$ if $v$ is a proper ancestor of $u$ in some $\alpha$palm-tree.
    		Let us proceed arguing by induction on $\textit{open}[u]$.
    		Let $z\doteq \min_{x\in V_{i}} \textit{open}[x]$ be the vertex with minimum index in $\mathcal{P}_i$.
    		Assume $\textit{open}[u]=z$ as a base case. So, $u$ is the root of $\mathcal{P}_i$.
    		Then $v=u$, so there is actually nothing to prove. Now, let $\textit{open}[u]>z$. Let w.l.o.g $u\neq v$.
    		Assume as induction hypothesis the thesis for every vertex $x\in V_i$ such that $\textit{open}[x] < \textit{open}[u]$.

    		Let us break the forthcoming analysis in two cases, according to whether $u\in V_\square$ or $u\in V_\ocircle$.
    		\begin{itemize}
    		\item If $u\in V_\square$, since $u$ is not the root of $\mathcal{P}_i$,
    			then $\sigma_{\square}(u)=\pi(u)$.
    		By construction, $\textit{open}[\pi(u)] < \textit{open}[u]$.
    		Since $\pi(u)$ is the parent of $u$ in $\mathcal{P}_i$ and $u\neq v$,
    			then $\pi(u)$ is still a descendant of $v$ in $\mathcal{P}_i$ (possibly, $\pi(u)=v$);
    		thus, by induction hypothesis:
    			\[\sigma_{\square}:\pi(u)\overset{V_{i}}{\leadsto} v.\]
    		Since $\sigma_{\square} : u \overset{V_{i}}{\leadsto} \pi(u)$
    			and $\sigma_{\square}:\pi(u)\overset{V_{i}}{\leadsto} v$, therefore by composition $\sigma_{\square} : u \overset{V_{i}}{\leadsto} v$.

    		\item If $u\in V_\ocircle$, recall that by definition of $\alpha$DFS,	$\pi(u)$ is the LCA of the out-neighbours of $u$~in~$\A$,
    		\ie the LCA of $N^{\text{out}}_{\A}(u) = \{u'\in V\mid (u,u')\in A_{s}\}$.
    		Fix some $u'\in N^{\text{out}}_{\A}(u)$, arbitrarily.
    		Notice that $u'$ is still a descendant of $\pi(u)$ in $\mathcal{P}_i$ (possibly, $u' = \pi(u)$),
    		just because $\pi(u)$ is the LCA of $N^{\text{out}}_{\A}(u)$ in $\mathcal{P}_i$.
    		Thus, since $\pi(u)$ is a descendant of $v$ in $\mathcal{P}_i$ (possibly, $\pi(u)=v$),
    			then by transitivity $u'$ is also a descendant of $v$ in $\mathcal{P}_i$.
    		And, by definition of $\alpha$DFS, it must be that $\textit{open}[u']<\textit{open}[u]$.
    		Therefore, by induction hypothesis: \[\sigma_{\square}:u'\overset{V_{i}}{\leadsto} v.\]
    		Since $u'$ was chosen arbitrarily, the latter assertion holds for every $u'\in N^{\text{out}}_{\A}(u)$;	so, $\sigma_{\square}:u\overset{V_{i}}{\leadsto} v$.
    		\end{itemize}
    		This concludes the inductive step of the proof. So, anyway, $\sigma_{\square}:u\overset{V_{i}}{\leadsto} v$.

        It's also clear at this point that, at anytime during the execution of $\alpha$DFS(), any such safe-$\alpha$reachability finite play path
          (that goes from descendants up to ancestors) can only move through the non-white vertices of its $\alpha$palm-tree.
    		\end{proof}

    			\paragraph{More Details.}
    								Let us further provide some lower-level implementation details of $\alpha$DFS~(Algo.~\ref{algo:aDFS}).

    	  Concerning stacks and counters, $\textit{rSt}[u]$ is initialized to be empty for every $u\in V$ and,
				for every $u\in V_\ocircle$, it is initialized $\textit{cnt}[u]\leftarrow |N_{\A}^{\text{out}}(u)|$
				(see lines~\ref{algo:aDFS:l5}-\ref{algo:aDFS:l7} of Algo.~\ref{algo:aDFS}).
    	  Then $\textit{cnt}[u]$ is decremented whenever some out-neighbour $v$ of $u$ is visited during the search process.
    	  When $\textit{cnt}[u]=0$ (see line~\ref{algo:aDFS-visit:l8} of Proc.~\ref{algo:aDFS-visit}), all out-neighbours of $u$ have already joined the $\alpha$jungle $\J_\A$.

    			Notice, if any two out-neighbours of $u$ belong to two distinct $\alpha$palm-trees in $\J_\A$,
    			there is no way to preserve safe-$\alpha$reachability because Player~$\ocircle$ might choose to move from $u$ to any of the two shafts at will,
    			and the LCA $\gamma$ of $N^{\text{out}}_\A(u)$ might not exist in $(V, A_{\pi})$;
    			still, if all out-neighbours of $u$ belong to the same $\alpha$palm-tree, the LCA $\gamma$ does exist in $(V, A_{\pi})$.
    			So, when $\textit{cnt}[u]=0$,	firstly we seek for the LCA $\gamma$ and if it exists	we push $u$ on top of $\textit{rSt}[\gamma]$ (\cfr lines~\ref{algo:aDFS-visit:l8}-\ref{algo:aDFS-visit:l10} of $\textit{$\alpha$DFS-visit()}$, Proc.~\ref{algo:aDFS-visit}).

    			In so doing, $u\in V_\ocircle$ will join $\J_\A$ only when \textit{$\alpha$DFS-visit()} backtracks,
    		  from the last out-neighbour $v$ of $u$ that has been visited, up to $\gamma$ (possibly $\gamma=v$).
    			At that point (see lines~\ref{algo:aDFS-visit:l14}-\ref{algo:aDFS-visit:l18}), as $\textit{rSt}[\gamma]$ will be checked and $u$ will be found therein, $(u,\gamma)$ will be added to $A_{\pi}$;
    			and, for each $t\in N^{\text{out}}_\A(u)$ the arc $(u,t)$ will be added to $A_{s}$ (possibly, $(u,\gamma)\in A_{\pi}\cap A_{s}$).
    			Finally $\textit{$\alpha$DFS-visit}(u, \A)$ will be invoked for recursively visiting $u$. 
				In this way every vertex is visited exactly once.

    		During $\textit{$\alpha$DFS-visit}(v, \A)$, when it is explored some in-neighbour $u$ of $v$ such that $\textit{open}[u]\neq+\infty$,
    		if $u$ is still active (grey) then $(u,v)$ is added to the fronds $A_{f}$,
    		otherwise $u$ is inactive (black) and $(u,v)$ goes to cross-links $A_{c}$.

    There's still one detail which is worth mentioning as it helps keeping the presentation smooth.
    Firstly all $u\in V_\square$ are considered as roots of the $\alpha$palm-trees,
    \ie no $u\in V_\ocircle$ ever becomes a root of an $\alpha$palm-tree due
    to lines~\ref{algo:aDFS:l9}-\ref{algo:aDFS:l11} of $\textit{$\alpha$DFS()}$ (Algo.~\ref{algo:aDFS}).
    After the visiting is completed, for each $u\in V_{\ocircle}$ which still remained unvisited,
    	$\textit{open}[u]$ is assigned incrementally and the visiting process is not invoked anymore.

    Indeed, w.l.o.g we can assume that for all $v\in V$ $|N^{\text{out}}_\A(v)|\geq 2$. For this we just preprocess $\A$ as follows:
    	for any $v\in V$, if $N^{\text{out}}_\A(v)=\emptyset$, remove $v$ from the $\alpha$graph; if $N^{\text{out}}_\A(v)=\{v'\}$ is a singleton,
    	add $(u,v')$ to $A$ for each $u\in N^{\text{in}}_\A(v)$ and then remove $v$ from the $\alpha$graph.
    So doing, observe that even if $\textit{$\alpha$DFS-visit}(v,\A)$ would've been invoked for some $v\in V_{\ocircle}$,
    	say at line~\ref{algo:aDFS:l14} of $\textit{$\alpha$DFS()}$,
    there would've been no actual $\alpha$palm-tree to visit,
    	\ie no vertex $u$ such that $(u,v)\in A_{\pi}$. Of course all reachability relations are preserved after the preprocessing.
    So this self-reduction is fine, and it keeps simpler the presentation of the algorithm.

    This ends the detailed description of $\textit{$\alpha$DFS()}$ (Algo.~\ref{algo:aDFS}).
    Let us now begin to analyze its complexity.

    \begin{figure}[tb]
\centering
\subfloat[An $\alpha$graph $\mathcal{A}$.\label{fig:example_arena_A-PT}]{
\begin{tikzpicture}[arrows=->, scale=.7]
	\node[circ] (A) {$C$};
	\node[squa, right=of A, yshift=0ex, xshift=0ex] (B) {$B$};
	\node[squa, below=of B] (C) {$A$};
	\node[squa, left=of C, xshift=0ex, yshift=0ex] (D) {$H$};
	\node[squa, above left=of A, xshift=0ex, yshift=0ex] (E) {$E$};
	\node[squa, above right=of B, yshift=0ex, xshift=0ex] (F) {$D$};
	\node[squa, below right=of C, xshift=0ex, yshift=0ex] (G) {$G$};
	\node[circ, below left=of D, yshift=0ex, xshift=0ex] (H) {$F$};

	\draw[] (A) to [xshift=0ex, yshift=0ex] node[below, xshift=0ex, yshift=0ex] {} (B);
	\draw[] (B) to [xshift=0ex, yshift=0ex] node[below, xshift=0ex, yshift=0ex] {} (C);
	\draw[] (D) to [xshift=0ex, yshift=0ex] node[below, xshift=0ex, yshift=0ex] {} (C);
	\draw[] (A) to [xshift=0ex, yshift=0ex] node[below, xshift=0ex, yshift=0ex] {} (D);
	\draw[] (E) to [xshift=0ex, yshift=0ex] node[below, xshift=0ex, yshift=0ex] {} (F);
	\draw[] (G) to [xshift=0ex, yshift=0ex] node[below, xshift=0ex, yshift=0ex] {} (F);
	\draw[] (H) to [xshift=0ex, yshift=0ex] node[below, xshift=0ex, yshift=0ex] {} (G);
	\draw[] (H) to [xshift=0ex, yshift=0ex] node[below, xshift=0ex, yshift=0ex] {} (E);
	\draw[] (A) to [xshift=0ex, yshift=0ex] node[below, xshift=0ex, yshift=0ex] {} (E);
	\draw[] (F) to [xshift=0ex, yshift=0ex] node[below, xshift=0ex, yshift=0ex] {} (B);
	\draw[] (C) to [xshift=0ex, yshift=0ex] node[below, xshift=0ex, yshift=0ex] {} (G);
	\draw[] (H) to [xshift=0ex, yshift=0ex] node[below, xshift=0ex, yshift=0ex] {} (D);
\end{tikzpicture}
}
\qquad
\subfloat[The $\alpha$palm-tree generated by $\alpha$DFS rooted at $A$,
		with timestamps of vertices and labelled arcs.\label{fig:example_alternating-palm-tree} ]{
\begin{tikzpicture}[arrows=->, scale=.6]
	\node[squa, label={above : \tiny $1|16$}] (tA) {$A$};
	\node[squa, below=of tA, label={right : \tiny $2|9$}, yshift=0ex, xshift=0ex] (tB) {$B$};
	\node[circ, below left=of tB, label={below left, xshift=1.5ex, yshift=.25ex: \tiny $12|13$}, yshift=0ex, xshift=0ex] (tC) {$C$};
	\node[squa, below=of tB, label={right : \tiny $3|8$}, yshift=0ex, xshift=0ex] (tD) {$D$};
	\node[squa, below left=of tD, label={right : \tiny $4|5$}, yshift=-3ex, xshift=0ex] (tE) {$E$};
	\node[squa, below right=of tD, label={right :\tiny $6|7$}, yshift=-1ex, xshift=0ex] (tG) {$G$};
	\node[circ, below right=of tG, label={right : \tiny $14|15$}, yshift=1ex, xshift=-1ex] (tF) {$F$};
	\node[squa, left=of tC, label={below left, xshift=2ex, yshift=.5ex: \tiny $10|11$}, yshift=-2ex, xshift=-5ex] (tH) {$H$};

	\draw[thick] (tB) to [xshift=0ex, yshift=0ex] node[right,xshift=0ex, yshift=0ex] {\tiny\bf tree} (tA);
	\draw[dotted] (tC) to [xshift=0ex, yshift=0ex] node[xshift=0ex, yshift=0ex] {\tiny stalk} (tB);
	\draw[thick] (tD) to [xshift=0ex, yshift=0ex] node[right,xshift=0ex, yshift=0ex] {\tiny\bf tree} (tB);
	\draw[thick] (tE) to [xshift=0ex, yshift=0ex] node[right,xshift=0ex, yshift=0ex] {\tiny\bf tree} (tD);
	\draw[dotted] (tC) to [xshift=0ex, yshift=0ex] node[xshift=0ex, yshift=-1ex] {\tiny stalk} (tE);
	\draw[dotted] (tF) to [bend left=20, xshift=0ex, yshift=0ex] node[xshift=0ex, yshift=0ex] {\tiny stalk} (tE);
	\draw[thick] (tG) to [xshift=0ex, yshift=0ex] node[right,xshift=0ex, yshift=0ex] {\tiny\bf tree} (tD);
	\draw[dashed] (tA) to [bend left = 40, xshift=0ex, yshift=0ex] node[xshift=0ex, yshift=0ex] {\tiny frond} (tG);
	\draw[dotted] (tF) to [xshift=0ex, yshift=0ex] node[xshift=0ex, yshift=0ex] {\tiny stalk} (tG);
	\draw[thick] (tH) to [bend left=10, xshift=0ex, yshift=0ex] node[left,xshift=0ex, yshift=0ex] {\tiny\bf tree} (tA);
	\draw[dotted] (tC) to [xshift=0ex, yshift=0ex] node[xshift=0ex, yshift=.5ex] {\tiny stalk} (tH);
	\draw[dotted] (tF) to [bend left=40, xshift=0ex, yshift=0ex] node[xshift=0ex, yshift=0ex] {\tiny stalk} (tH);
	\draw[thick] (tC) to [bend left=10, xshift=0ex, yshift=0ex] node[xshift=-2ex, yshift=0ex] {\tiny\bf tree} (tA);
	\draw[thick] (tF) to [bend right=50, xshift=0ex, yshift=0ex] node[xshift=-.75ex, yshift=-2ex] {\tiny\bf tree} (tA);
\end{tikzpicture}
}
\qquad
\subfloat[The order of arcs' exploration.\label{fig:example_alternating-palm-tree-order}]{
\begin{tikzpicture}[arrows=->, scale=.1, node distance = 0 and 0]
	\node[] (eA) {$\text{\tiny 1.} (B,A)$};
	\node[below=of eA, yshift=0ex, xshift=0ex] (eB) {$\text{\tiny 2.} (D,B)$};
	\node[below=of eB, yshift=0ex, xshift=0ex] (eC) {$\text{\tiny 3.} (E,D)$};
	\node[below=of eC, yshift=0ex, xshift=0ex] (eD) {$\text{\tiny 4.} (C,E)$};
	\node[below=of eD, yshift=0ex, xshift=0ex] (eE) {$\text{\tiny 5.} (F,E)$};
	\node[below=of eE, yshift=0ex, xshift=0ex] (eE2) {$\text{\tiny 6.} (G,D)$};
	\node[right=of eA, yshift=0ex, xshift=0ex] (eG) {$\text{\tiny 7.} (A,G)$};
	\node[below=of eG, yshift=0ex, xshift=0ex] (eGA) {$\text{\tiny 8.} (F,G)$};
	\node[below=of eGA, yshift=0ex, xshift=0ex] (eF) {$\text{\tiny 9.} (C,B)$};
	\node[below=of eF, yshift=0ex, xshift=0ex] (eH) {$\text{\tiny 10.} (H,A)$};
	\node[below=of eH, yshift=0ex, xshift=0ex] (eHC) {$\text{\tiny 11.} (C,H)$};
	\node[below=of eHC, yshift=0ex, xshift=0ex] (eHF) {$\text{\tiny 12.} (F,H)$};
\end{tikzpicture}
}
\caption{An $\alpha$graph (a) and an $\alpha$palm-tree (b) generated by $\alpha$DFS (c).}\label{fig:palm_tree_example}
\end{figure}

    \begin{figure}[t!]
\begin{minipage}{\columnwidth}
\centering
\subfloat[An $\alpha$graph $\mathcal{A}$.\label{fig:example_arena_II}]{
\begin{tikzpicture}[arrows=->, scale=.5]
	\node[circ] (A) {$C$};
	\node[squa, right=of A, yshift=0ex, xshift=0ex] (B) {$B$};
	\node[squa, below=of B] (C) {$A$};
	\node[squa, left=of C, xshift=0ex, yshift=0ex] (D) {$G$};
	\node[squa, above left=of A, xshift=0ex, yshift=0ex] (E) {$H$};
	\node[circ, above right=of B, yshift=0ex, xshift=0ex] (F) {$D$};
	\node[squa, below right=of C, xshift=0ex, yshift=0ex] (G) {$E$};
	\node[circ, below left=of D, yshift=0ex, xshift=0ex] (H) {$F$};

	\draw[] (B) to [xshift=0ex, yshift=0ex] node[below, xshift=0ex, yshift=0ex] {} (A);
	\draw[] (B) to [xshift=0ex, yshift=0ex] node[below, xshift=0ex, yshift=0ex] {} (C);
	\draw[] (D) to [xshift=0ex, yshift=0ex] node[below, xshift=0ex, yshift=0ex] {} (C);
	\draw[] (A) to [xshift=0ex, yshift=0ex] node[below, xshift=0ex, yshift=0ex] {} (D);
	\draw[] (F) to [xshift=0ex, yshift=0ex] node[below, xshift=0ex, yshift=0ex] {} (E);
	\draw[] (G) to [xshift=0ex, yshift=0ex] node[below, xshift=0ex, yshift=0ex] {} (F);
	\draw[] (H) to [xshift=0ex, yshift=0ex] node[below, xshift=0ex, yshift=0ex] {} (G);
	\draw[] (H) to [xshift=0ex, yshift=0ex] node[below, xshift=0ex, yshift=0ex] {} (E);

	\draw[] (A) to [xshift=0ex, yshift=0ex] node[below, xshift=0ex, yshift=0ex] {} (E);
	\draw[] (F) to [xshift=0ex, yshift=0ex] node[below, xshift=0ex, yshift=0ex] {} (B);
	\draw[] (G) to [xshift=0ex, yshift=0ex] node[below, xshift=0ex, yshift=0ex] {} (C);
	\draw[] (D) to [xshift=0ex, yshift=0ex] node[below, xshift=0ex, yshift=0ex] {} (H);
\end{tikzpicture}
}
\subfloat[The $\alpha$palm-tree generated by an $\alpha$DFS rooted at $A$, with timestamps of vertices and labelled arcs.]{
\begin{tikzpicture}[arrows=->, scale=.5]
	\node[squa, label={above : \tiny $1|8$}] (tA) {$A$};
	\node[squa, below=of tA, label={right : \tiny$2|3$}, yshift=0ex, xshift=0ex] (tB) {$B$};
	\node[color=black!15,circ, below left=of tB, label={above : \tiny\bf n.a.}, yshift=0ex, xshift=0ex] (tC) {$C$};
	\node[color=black!15,circ, below=of tB, label={right : \tiny\bf n.a.}, yshift=0ex, xshift=0ex] (tD) {$D$};
	\node[squa, below right=of tA, label={right : \tiny$4|5$}, yshift=0ex, xshift=5ex] (tE) {$E$};
	\node[color=black!15,circ, below=of tE, label={above right : \tiny\bf n.a.}, yshift=0ex, xshift=0ex] (tF) {$F$};
	\node[squa, below left=of tC, label={left : \tiny$6|7$}, yshift=0ex, xshift=0ex] (tG) {$G$};

	\draw[thick] (tB) to [xshift=0ex, yshift=0ex] node[xshift=0ex, yshift=0ex] {\tiny\bf tree} (tA);
	\draw[dotted] (tD) to [xshift=0ex, yshift=0ex] node[below,xshift=0ex, yshift=0ex] {} (tB);
	\draw[thick] (tE) to [xshift=0ex, yshift=0ex] node[right,xshift=0ex, yshift=0ex] {\tiny\bf tree} (tA);
	\draw[dotted] (tF) to [bend left=0, xshift=0ex, yshift=0ex] node[xshift=0ex, yshift=0ex] {} (tE);
	\draw[thick] (tG) to [bend left=20, xshift=0ex, yshift=0ex] node[left,xshift=0ex, yshift=0ex] {\tiny\bf tree} (tA);
	\draw[dotted] (tC) to [bend left=10, xshift=0ex, yshift=0ex] node[xshift=0ex, yshift=0ex] {} (tG);
\end{tikzpicture}
}
\qquad
\subfloat[The order of arcs' exploration.]{
\begin{tikzpicture}[arrows=->, scale=.1, node distance = 0 and 0]
	\node[] (eA) {\text{\tiny{1.}}$(B,A)$};
	\node[below=of eA, yshift=0ex, xshift=0ex] (eC) {\text{\tiny{2.}}$(D,B)$};
	\node[below=of eC, yshift=0ex, xshift=0ex] (eD) {\text{\tiny{3.}}$(E,A)$};
	\node[below=of eD, yshift=0ex, xshift=0ex] (eE2) {\text{\tiny{4.}}$(F,E)$};
	\node[below=of eE2, yshift=0ex, xshift=0ex] (eG) {\text{\tiny{5.}}$(G,A)$};
	\node[below=of eG, yshift=0ex, xshift=0ex] (eGA) {\text{\tiny{6.}}$(C,G)$};
\end{tikzpicture}
}
\end{minipage}
\begin{minipage}{\columnwidth}
\centering
\subfloat[The $\alpha$palm-tree generated by an $\alpha$DFS rooted at $H$.]{
\begin{tikzpicture}[arrows=->, scale=.45]
	\node[color=black!15, squa, label={above: \tiny $1|8$}] (tA) {$A$};
	\node[color=black!15, squa, below=of tA, label={right: \tiny$2|3$}, yshift=0ex, xshift=0ex] (tB) {$B$};
	\node[color=black!15, circ, below left=of tB, label={above: \tiny\bf n.a.}, yshift=0ex, xshift=0ex] (tC) {$C$};
	\node[squa, below=of tC, label={ below, xshift=.75ex, yshift=.75ex : \tiny$9|10$}, yshift=0ex, xshift=0ex] (tH) {$H$};
	\node[color=black!15, circ, below=of tB, label={right : \tiny\bf n.a.}, yshift=0ex, xshift=0ex] (tD) {$D$};
	\node[color=black!15, squa, below right=of tA, label={right: \tiny$4|5$}, yshift=0ex, xshift=5ex] (tE) {$E$};
	\node[color=black!15, circ, below=of tE, label={right : \tiny\bf n.a.}, yshift=0ex, xshift=0ex] (tF) {$F$};
	\node[color=black!15, squa, below left=of tC, label={left: \tiny$6|7$}, yshift=0ex, xshift=0ex] (tG) {$G$};

	\draw[color=black!15,thick] (tB) to [xshift=0ex, yshift=0ex] node[xshift=0ex, yshift=0ex] {\tiny\bf tree} (tA);
	\draw[color=black!15,dotted] (tD) to [xshift=0ex, yshift=0ex] node[below,xshift=0ex, yshift=0ex] {} (tB);
	\draw[color=black!15,thick] (tE) to [xshift=0ex, yshift=0ex] node[right,xshift=0ex, yshift=0ex] {\tiny\bf tree} (tA);
	\draw[color=black!15,dotted] (tF) to [bend left=0, xshift=0ex, yshift=0ex] node[xshift=0ex, yshift=0ex] {} (tE);
	\draw[color=black!15,thick] (tG) to [bend left=20, xshift=0ex, yshift=0ex] node[left,xshift=0ex, yshift=0ex] {\tiny\bf tree} (tA);
	\draw[color=black!15,dotted] (tC) to [bend left=10, xshift=0ex, yshift=0ex] node[xshift=0ex, yshift=0ex] {} (tG);
	\draw[dotted] (tC) to [xshift=0ex, yshift=0ex] node[right, xshift=0ex, yshift=0ex] {} (tH);
	\draw[dotted] (tD) to [xshift=0ex, yshift=0ex] node[right, xshift=0ex, yshift=0ex] {} (tH);
	\draw[dotted] (tF) to [bend left=30, xshift=0ex, yshift=0ex] node[xshift=0ex, yshift=0ex] {} (tH);
\end{tikzpicture}
}
\qquad
\subfloat[The order of arcs' exploration.]{
\begin{tikzpicture}[arrows=->, scale=.1, node distance = 0 and 0]
	\node[] (eA) {\text{\tiny{7.}}$(C,H)$};
	\node[below=of eA] (eB) {\text{\tiny{8.}}$(D,H)$};
	\node[below=of eB] (eC) {\text{\tiny{9.}}$(F,H)$};
\end{tikzpicture}
}
\qquad
\subfloat[The $\alpha$palm-trees generated by an $\alpha$DFS rooted at $C, D, F$.]{
\begin{tikzpicture}[arrows=->, scale=.4]
	\node[color=black!15, squa, label={above : \tiny $1|8$}] (tA) {$A$};
	\node[color=black!15, squa, below=of tA, label={right : \tiny $2|3$}, yshift=0ex, xshift=0ex] (tB) {$B$};
	\node[circ, below left=of tB, label={above,xshift=.75ex,yshift=-.5ex :\tiny $11|12$}, yshift=0ex, xshift=0ex] (tC) {$C$};
	\node[color=black!15, squa, below=of tC, label={below, xshift=.75ex, yshift=.75ex : \tiny $9|10$}, yshift=0ex, xshift=0ex] (tH) {$H$};
	\node[circ, below=of tB, label={below,yshift=.5ex : \tiny $13|14$}, yshift=0ex, xshift=0ex] (tD) {$D$};
	\node[color=black!15, squa, below right=of tA, label={right :\tiny $4|5$}, yshift=0ex, xshift=5ex] (tE) {$E$};
	\node[circ, below=of tE, label={right :\tiny $15|16$}, yshift=0ex, xshift=0ex] (tF) {$F$};
	\node[color=black!15, squa, below left=of tC, label={left :\tiny $6|7$}, yshift=0ex, xshift=0ex] (tG) {$G$};

	\draw[color=black!15,thick] (tB) to [xshift=0ex, yshift=0ex] node[xshift=0ex, yshift=0ex] {\tiny\bf tree} (tA);
	\draw[dotted] (tB) to [xshift=0ex, yshift=0ex] node[xshift=0ex, yshift=0ex] {} (tC);
	\draw[color=black!15,dotted] (tD) to [xshift=0ex, yshift=0ex] node[below, xshift=0ex, yshift=1.5ex] {} (tB);
	\draw[color=black!15,thick] (tE) to [xshift=0ex, yshift=0ex] node[right,xshift=0ex, yshift=0ex] {\tiny\bf tree} (tA);
	\draw[color=black!15,dotted] (tF) to [bend left=0, xshift=0ex, yshift=0ex] node[xshift=0ex, yshift=0ex] {} (tE);
	\draw[dotted] (tE) to [bend left=0, xshift=0ex, yshift=0ex] node[xshift=0ex, yshift=0ex] {} (tD);
	\draw[color=black!15,thick] (tG) to [bend left=20, xshift=0ex, yshift=0ex] node[left,xshift=0ex, yshift=0ex] {\tiny\bf tree} (tA);
	\draw[color=black!15,dotted] (tC) to [bend left=10, xshift=0ex, yshift=0ex] node[xshift=0ex, yshift=0ex] {} (tG);
	\draw[dotted] (tG) to [bend right=70, xshift=0ex, yshift=0ex] node[xshift=0ex, yshift=0ex] {} (tF);
	\draw[color=black!15,dotted] (tC) to [xshift=0ex, yshift=0ex] node[right, xshift=0ex, yshift=0ex] {} (tH);
	\draw[color=black!15,dotted] (tD) to [xshift=0ex, yshift=0ex] node[xshift=1ex, yshift=0ex] {} (tH);
	\draw[color=black!15,dotted] (tF) to [bend left=30, xshift=0ex, yshift=0ex] node[xshift=0ex, yshift=0ex] {} (tH);
\end{tikzpicture}
}
\qquad
\subfloat[The order of arcs' exploration.]{
\begin{tikzpicture}[arrows=->, scale=.1, node distance = 0 and 0]
	\node[] (eA) {\text{\tiny{10.}}$(B,C)$};
	\node[below=of eA, yshift=0ex, xshift=0ex] (eB) {\text{\tiny{11.}}$(E,D)$};
	\node[below=of eB, yshift=0ex, xshift=0ex] (eC) {\text{\tiny{12.}}$(G,F)$};
\end{tikzpicture}
}
\end{minipage}
\begin{minipage}{\columnwidth}
\centering
\subfloat[The resulting $\alpha$jungle, which is generated by multiple
$\alpha$DFSs rooted at $A,H,C,D$ and $F$.]{
\begin{tikzpicture}[arrows=->, scale=.5]
	\node[squa, label={above : \tiny $1|8$}] (tA) {$A$};
	\node[squa, below=of tA, label={right : \tiny $2|3$}, yshift=0ex, xshift=0ex] (tB) {$B$};
	\node[circ, below left=of tB, label={above, xshift=.75ex, yshift=-.5ex : \tiny $11|12$}, yshift=0ex, xshift=0ex] (tC) {$C$};
	\node[squa, below=of tC, label={below, xshift=.75ex, yshift=.75ex: \tiny $9|10$}, yshift=0ex, xshift=0ex] (tH) {$H$};
	\node[circ, below=of tB, label={below,yshift=.5ex : \tiny $13|14$}, yshift=0ex, xshift=0ex] (tD) {$D$};
	\node[squa, below right=of tA, label={right : \tiny $4|5$}, yshift=0ex, xshift=5ex] (tE) {$E$};
	\node[circ, below=of tE, label={right : \tiny $15|16$}, yshift=0ex, xshift=0ex] (tF) {$F$};
	\node[squa, below left=of tC, label={left : \tiny $6|7$}, yshift=0ex, xshift=0ex] (tG) {$G$};

	\draw[thick] (tB) to [xshift=0ex, yshift=0ex] node[xshift=0ex, yshift=0ex] {\tiny\bf tree} (tA);
	\draw[dotted] (tB) to [xshift=0ex, yshift=0ex] node[xshift=0ex, yshift=0ex] {} (tC);
	\draw[dotted] (tD) to [xshift=0ex, yshift=0ex] node[below, xshift=0ex, yshift=1.5ex] {} (tB);
	\draw[thick] (tE) to [xshift=0ex, yshift=0ex] node[right,xshift=0ex, yshift=0ex] {\tiny\bf tree} (tA);
	\draw[dotted] (tF) to [bend left=0, xshift=0ex, yshift=0ex] node[xshift=0ex, yshift=0ex] {} (tE);
	\draw[dotted] (tE) to [bend left=0, xshift=0ex, yshift=0ex] node[xshift=0ex, yshift=0ex] {} (tD);
	\draw[thick] (tG) to [bend left=20, xshift=0ex, yshift=0ex] node[left,xshift=0ex, yshift=0ex] {\tiny\bf tree} (tA);
	\draw[dotted] (tC) to [bend left=10, xshift=0ex, yshift=0ex] node[xshift=0ex, yshift=0ex] {} (tG);
	\draw[dotted] (tG) to [bend right=70, xshift=0ex, yshift=0ex] node[xshift=0ex, yshift=0ex] {} (tF);
	\draw[dotted] (tC) to [xshift=0ex, yshift=0ex] node[right, xshift=0ex, yshift=0ex] {} (tH);
	\draw[dotted] (tD) to [xshift=0ex, yshift=0ex] node[xshift=1ex, yshift=0ex] {} (tH);
	\draw[dotted] (tF) to [bend left=30,xshift=0ex, yshift=0ex] node[xshift=0ex, yshift=0ex] {} (tH);
\end{tikzpicture}
}
\end{minipage}
\caption{An $\alpha$graph (a), and the construction of a corresponding $\alpha$jungle (b-h).}\label{fig:tr-jungle}
\end{figure}

    \begin{Prop}\label{prop:aDFS_complexity}
    Assume that $\alpha$DFS() (Algo.~\ref{algo:aDFS}) runs on a given input $\alpha$graph $\A$ on vertex set $V$ and arc set $A$.
    Each vertex $v\in V$ is timestamped by $\textit{open}[v]$ exactly once,
    and the algorithm halts in time $\Theta\big(|V| + |A| + \text{Time}[\text{\footnotesize LCA}]\big)$,
    	consuming space $\Theta\big(|V| + |A| + \text{Space}[\text{\footnotesize LCA}]\big)$,
    where $\text{Time}[\text{\footnotesize LCA}]$ ($\text{Space}[\text{\footnotesize LCA}]$)
    	is the aggregate total time (space) taken by all LCA computations that are done at lines~\ref{algo:aDFS-visit:l8}-\ref{algo:aDFS-visit:l9} of $\textit{$\alpha$DFS-visit}()$~(Proc.~\ref{algo:aDFS-visit}).
    \end{Prop}
    \begin{proof}
    	The initialization phase takes $\Theta(|V|+|A|)$ time (see lines~\ref{algo:aDFS:l1}-\ref{algo:aDFS:l7} of Algo.~\ref{algo:aDFS}).
    Recall Algo.~\ref{algo:aDFS} performs multiple calls to $\textit{$\alpha$DFS-visit}(v,\A)$ (Proc.~\ref{algo:aDFS-visit}), each for some $v\in V$.
    Any of these happens if and only if $\textit{open}[v]=+\infty$, and then $\textit{open}[v]$ is set to some non-zero value.
    Thus, the total number of invocations of \textit{$\alpha$DFS-visit()} (Proc.~\ref{algo:aDFS-visit}) is at most $|V|$.
(Indeed, calls are issued only for vertices that actually get visited; vertices in $V_{\ocircle}$ that remain unvisited are timestamped in the final loop without calling \textit{$\alpha$DFS-visit()}.) 
In any case, each vertex $v\in V$ is assigned $\textit{open}[v]$ exactly once (either during a visit or in the final loop).

    Concerning time complexity, consider each of such visits independently from one another, where the in-neighbourhood
    	$N^{\text{in}}_\A(v)$ is explored. For some $u\in N^{\text{in}}_\A(v)\cap V_\ocircle$, the LCA of $N^{\text{out}}_\A(u)$ might be computed, but notice that all the other operations about $N^{\text{in}}_\A(v)$ can be done in constant time per single $u\in N^{\text{in}}_\A(v)$.
    At the end of each visit the stack $\textit{rSt}[v]$ is emptied, still, due to the condition $\textit{cnt}[u]=0$ any $u\in V_\ocircle$ can be pushed on $\textit{rSt}[v]$ at most once and for at most one $v\in V$.
    Therefore, the $\Theta\big(|V| + |A| + \text{Time}[\text{\footnotesize LCA}]\big)$ aggregate time bound holds.

    Concerning space usage, a similar argument shows that the aggregate total space of storing $\{\textit{rSt}[v]\}_{v\in V}$ is $O(|V|)$.
    Also, the total size of $\textit{open}[]$, $\textit{close}[]$ and $\textit{cnt}[]$ is $\Theta(|V|)$, and that of $A'$ is $\Theta(|A|)$.
    \end{proof}

    Later on in [Section~\ref{subsect:lca-dsf}, Theorem~\ref{thm:aDFS:complexityRAM}],
    	 the aggregate total time and space of all LCA computations	(\ie $\text{Time}[\text{\footnotesize LCA}]$ and $\text{Space}[\text{\footnotesize LCA}]$) will be bounded linearly.
    	Before that, in the following Section~\ref{subsect:graphstructures}, let us read out and carefully analyze the graph structure of the $\alpha$jungle $\J_\A$.
    
	\begin{Exa}
Figure~\ref{fig:palm_tree_example} illustrates a complete execution of the $\alpha$DFS algorithm on an 
example $\alpha$graph. Subfigure~\ref{fig:example_arena_A-PT} shows the initial $\alpha$graph $\mathcal{A}$, 
comprising vertices labeled from $A$ to $H$, 
partitioned into vertices controlled by Player~$\square$ or Player~$\ocircle$, 
connected by directed arcs representing available moves in the alternating game.

Subfigure~\ref{fig:example_alternating-palm-tree} presents the resulting $\alpha$palm-tree formed by 
performing the $\alpha$DFS traversal rooted at vertex $A$. Each vertex is annotated with timestamps indicating 
when it was first discovered and when the search finished exploring all its descendants, 
denoted by $\langle\textit{open}[v]\rangle \mid \langle\textit{close}[v]\rangle$. 
Arcs within this palm-tree are categorized into distinct types: \emph{tree arcs} (thick solid lines) indicating the 
DFS backbone structure, \emph{stalk arcs} (dotted lines) that represent connections from vertices 
to ancestors in the palm-tree, and possibly other arcs (dashed lines) highlighting 
further structural relations relevant to $\alpha$reachability. 
Subfigure~\ref{fig:example_alternating-palm-tree-order} explicitly enumerates the exact order in which the arcs were explored during the 
execution of the $\alpha$DFS algorithm. 

Together, these subfigures provide a comprehensive visualization of how $\alpha$DFS systematically 
constructs an $\alpha$palm-tree, highlighting the interplay between alternating reachability constraints 
and depth-first traversal principles.
	\end{Exa}
	
	\subsection{Graph Structures}\label{subsect:graphstructures}
    Let's start by formalizing the structural properties of the $\alpha$palm-trees.	Examples are given in \figref{fig:palm_tree_example}~and~\ref{fig:tr-jungle}.

    \begin{Def}\label{def:tr-pt}
    An \emph{alternating palm-tree ($\alpha$palm-tree)} is a triplet $(\mathcal{P},\textit{open}[], \textit{close}[])$, where:

    (i) $\mathcal{P}= (V, A, \langle V_{\square}, V_{\ocircle}\rangle)$ is an $\alpha$graph on
    $V = V_{\square}\cup V_{\ocircle}$ and
    	$A = A_{\pi}\cup A_{f} \cup A_{s} \cup A_{c}$, so
    	the vertex set is split in squares and circles whereas the arc set into four categories.

    (ii) $\textit{open}[], \textit{close}[]:V\rightarrow \N$ timestamp the vertex set $V$ in pre and post order respectively;

    (iii) the following four main properties hold:
    \begin{enumerate}
    	\item[($\alpha$pt-1)] $\mathcal{T}_{\mathcal{P}}\doteq (V, A_{\pi})$ is an inward directed rooted tree such that:

    		(a) the root $r_{ \mathcal{T}_{\mathcal{P}}}$ of $\mathcal{T}_{\mathcal{P}}$ is controlled by Player~$\square$,
    					\ie $r_{ \mathcal{T}_{\mathcal{P}} } \in V_{\square}$;

    		(b) $\textit{open}[v]<\textit{open}[u]<\textit{close}[u]<\textit{close}[v]$ whenever $(u,v)\in A_{\pi}$,
    			\ie if $v=\pi(u)$ is the parent of $u$ in $\mathcal{T}_{\mathcal{P}}$;

    		\item[($\alpha$pt-2)] Each frond-arc $(u,v)\in A_{f}$ connects
    			some $u\in V_{\square}$ to one of its proper descendants $v\in V$ in $\mathcal{T}_{\mathcal{P}} $;

    		\item[($\alpha$pt-3)] Each stalk-arc $(u,v)\in A_{s}$ connects some $u\in V_{\ocircle}$
    		to one of the descendants $v$ of its parent $\pi(u)$ (\ie possibly to $\pi(u)$ itself);
    		particularly, given any $u\in V_{\ocircle}$, the following three properties hold:

    		(a) $\{v\in V \mid (u, v)\in A_{s}\} \cup \{\pi(u)\} = N^{\text{out}}_{\mathcal{P}}(u)$;

    		(b) $\pi(u)$ is the LCA of $\{v\in V \mid (u, v)\in A_{s}\}$ in $\mathcal{T}_{\mathcal{P}}$;

    		(c) $\textit{open}[v]<\textit{close}[v]<\textit{open}[u]<\textit{close}[u]$ for every $v\in N^{\text{out}}_{\mathcal{P}}(u)\setminus \{\pi(u)\}$.

    	\item[($\alpha$pt-4)] Each cross-arc $(u,v)\in A_{c}$ connects some $u\in V_{\square}$ to some $v\in V$ such that:

    		(a) $v$ is not a descendant of $u$ in $\mathcal{T}_{\mathcal{P}}$;

    		(b) either $v$ is a proper ancestor of $u$ in $\mathcal{T}_{\mathcal{P}}$ (in that case $\textit{open}[v]<\textit{open}[u]<\textit{close}[u]<\textit{close}[v]$),
    			or $\textit{open}[u]<\textit{close}[u]<\textit{open}[v]<\textit{close}[v]$.
    \end{enumerate}
    \end{Def}

    An $\alpha$jungle) is formed by a disjoint union of $\alpha$palm-trees (see $\alpha$jn-1 and $\alpha$jn-2),
    	possibly with external cross-arcs connecting two distinct $\alpha$palm-trees (see $\alpha$jn-3),
    	plus a (possibly empty) set of circled vertices each one having out-neighbours lying in at least two distinct $\alpha$palm-trees (see $\alpha$jn-4).
    \begin{Def}\label{def:tr-jungle}
    An \emph{alternating jungle ($\alpha$jungle)} is an $\alpha$graph $\mathcal{J} = (V, A, \langle V_{\square}, V_{\ocircle}\rangle)$
    comprising a family of vertex-disjoint $\alpha$palm-trees $\{(\mathcal{P}_i, \textit{open}[]_i, \textit{close}[]_i)\}_{i=1}^k$,
    	whose vertices are timestamped, and these hold:

    ($\alpha$jn-1) $\forall{i\in [k]}$ $\mathcal{P}_i = (V_i, A_i, \langle{V_{\square}}_i, {V_{\ocircle}}_i\rangle)$,
    			where ${V_{\square}}_i\subseteq V_{\square}, {V_{\ocircle}}_i\subseteq V_{\ocircle}, A_i\subseteq A$;

    ($\alpha$jn-2) $\forall{i,j\in [k]}$ $V_i\cap V_j=\emptyset$ if $i\neq j$;

    ($\alpha$jn-3) If $(u,v)\in A$ for some $u\in V_i$ and $v\in V_j$ such that $i\neq j$, then $u\in {V_{\square}}_i$ and $i < j$;

    ($\alpha$jn-4) If $v\in V\setminus \bigcup_{i=1}^k V_i$, then $v\in V_{\ocircle}$ and $N^{\text{out}}_{\J}(v)\subseteq V_i$ for \emph{no} $i\in [k]$.
    \end{Def}

    Proposition~\ref{prop:aDFS-jungle} shows that $\alpha$DFS() (Algo.~\ref{algo:aDFS}) really constructs an $\alpha$jungle.
    	It's worth introducing a technical but conceptually simple notion, that of \emph{support} for an $\alpha$jungle.
    The support of $\J$ is just the same $\alpha$graph deprived of all the arcs in $\{(u,v)\in A_{\pi}\setminus A_s\mid u\in V_\ocircle\}$,
    	\ie those arcs that are added by $\textit{$\alpha$DFS()}$ (Algo.~\ref{algo:aDFS}) but that were not in the input $\alpha$graph.
    More formal details below.
    \begin{Def}
    	Given an $\alpha$palm-tree $(\mathcal{P}, \textit{open}[], \textit{close}[])$, 
		for $\mathcal{P}=(V, A, \langle V_{\square}, V_{\ocircle}\rangle)$,
    	$A=A_{\pi}\cup A_{f} \cup A_{s} \cup A_{c}$,
    	the \emph{support} of $\mathcal{P}$ is the $\alpha$graph $\mathcal{P}_*\doteq (V, A_*, \langle V_{\square}, V_{\ocircle}\rangle )$, where
    	$A_* \doteq  \big\{(u,v)\in A\mid u\in V_{\square}\big\} \cup A_{s}$.

    Notice that $A_* = A\setminus \big\{(u,v)\in A_{\pi}\setminus A_s\mid u\in V_\ocircle\big\}$ holds by ($\alpha$pt-3).

    Given an $\alpha$jungle $\mathcal{J}$ with family of $\alpha$palm-trees $\{\mathcal{P}_i\}_{i=1}^k$,
    let $\overline{V}\doteq V\setminus \bigcup_{i=1}^k V_i$ (where $V_i$ is the vertex set of $\mathcal{P}_i$).
    The \emph{support} of $\mathcal{J}$ is the $\alpha$graph $\mathcal{J}_*$ obtained from $\mathcal{J}$
    by replacing each $\mathcal{P}_i$ with its support $({\mathcal{P}_i})_*$, and by leaving intact
    all the vertices in $\overline{V}$ and all arcs $(u,v)$ of $\mathcal{J}$ such that:
    either, (i) $u\in V_i$ and $v\in V_j$ for some $i\neq j$ (\ie all external cross-arcs);
    or, (ii) $u\in \overline{V}$ or $v\in \overline{V}$ (possibly both).
    \end{Def}

    Let us now argue more formally that an $\alpha$jungle really traces down the behaviour of $\alpha$DFS() (Algo.~\ref{algo:aDFS}).
    \begin{Prop}\label{prop:aDFS-jungle}
    Let $\mathcal{A}=(V, A, \langle  V_{\square}, V_{\ocircle} \rangle )$ be an $\alpha$graph. The following two propositions hold.
    \begin{enumerate}
    	\item Let $J$ be the $\alpha$graph constructed by executing $\textit{$\alpha$DFS}(\mathcal{A})$ (Algo.~\ref{algo:aDFS}). Then, $J$ is an $\alpha$jungle.
    	\item Let $J$ be an $\alpha$jungle with support $J_*$.
    		Then, $\textit{$\alpha$DFS}(J_*)$ (Algo.~\ref{algo:aDFS}) reconstructs $J$ itself, \ie $\J_{J_*}=J$.
    \end{enumerate}
    \end{Prop}
    \begin{proof}[Proof of (1)]
    Recall, $\textit{$\alpha$DFS}(\mathcal{A})$ (Algo.~\ref{algo:aDFS}) performs a sequence of invocations to $\textit{$\alpha$DFS-visit}(\cdot, \A)$ (Proc.~\ref{algo:aDFS-visit}).
    Let $k$ be the total number of times that $\textit{$\alpha$DFS-visit()}$ is invoked \emph{only} at line~\ref{algo:aDFS:l11} of $\textit{$\alpha$DFS()}$ (Algo.~\ref{algo:aDFS}).
    For each $i=1, 2, \ldots, k$, let $u_i\in V_\square$ be the vertex that is passed as a parameter to the $i$-th invocation,
    	\ie assume $\textit{$\alpha$DFS-visit}(u_i,\A)$ is the $i$-th call;
    notice $u_i\in V_\square$ by line~\ref{algo:aDFS:l9} of $\textit{$\alpha$DFS()}$ (Algo.~\ref{algo:aDFS}).
    Let $V_i\subseteq V$ be the set of all vertices timestamped by $\textit{open}[]$ during the $i$-th invocation (recursive calls included).
    Similarly, let $A_i$ be the set of arcs that are explored during that invocation (recursive calls included),
    	and consider the \emph{internal arcs} \ie ${A_i}_{\text{int}}\doteq \{(a,b) \in A_i\mid \text{both } a,b\in V_i\}$.
    Finally let $\mathcal{P}_i\doteq (V_i, {A_i}_{\text{int}}, \langle V_\square \cap V_i, V_\ocircle \cap V_i \rangle )$.
    It is easy to check that $\mathcal{P}_i$ is an $\alpha$palm-tree since it	satisfies all properties from \emph{($\alpha$pt-1)} to \emph{($\alpha$pt-4)}.
    We also claim that $J$ is an $\alpha$jungle with $\alpha$palm-tree family $\{\mathcal{P}_i\}_{i\in [k]}$.
    Clearly, we are given a family $\{\mathcal{P}_i\}_{i\in [k]}$ of vertex-disjoint $\alpha$palm-trees,
    	so properties \emph{($\alpha$jn-1)} and \emph{($\alpha$jn-2)} hold.
    Concerning \emph{($\alpha$jn-3)}, let $(u,v)\in A$ by any arc such that $u\in V_i$ and $v\in V_j$ with $i\neq j$;
    then $u\in {V_\square}_i$ (we can argue this by exclusion:
    	since $\mathcal{P}_i$ is an $\alpha$palm-tree, \emph{($\alpha$pt-3)} holds for $V_\ocircle$,
    		so the tail $u$ of an external cross-link connecting two distinct $\alpha$palm-trees must be a square);
    also, $i<j$ since otherwise $u$ would've joined $\mathcal{P}_j$ instead of $\mathcal{P}_i$
    	(\cfr lines~\ref{algo:aDFS-visit:l3}-\ref{algo:aDFS-visit:l5} of $\textit{$\alpha$DFS-visit}()$).
    Concerning \emph{($\alpha$jn-4)}, let $v\in V\setminus \bigcup_{i=1}^k V_i$,
    	then $v\in V_{\ocircle}$ (\cfr lines~\ref{algo:aDFS:l9}-\ref{algo:aDFS:l15} of $\textit{$\alpha$DFS}()$);
    also, $N^{\text{out}}_{\J}(v)\subseteq V_i$ holds for \emph{no} $i\in [k]$, otherwise $v$ would've joined $\mathcal{P}_i$ thanks to lines~9-11 and~15-19 of $\textit{$\alpha$DFS-visit}()$.
    All in, $J$ is an $\alpha$jungle.
    \end{proof}
    \begin{proof}[Proof of (2)]
    Recall that the support $J_*$ can be obtained from $J$ simply by removing from the $\alpha$palm-trees of $J$
    	all the arcs $(u,v)\in A_{\pi}\setminus A_s$ such that $u\in V_\ocircle$.
     Consider the total ordering $<_{\textit{open}}$ on the vertex set $V$ induced by the opening timestamp $\textit{open}[]$ of $J$,
     \ie $\forall{a,b\in V}\, a<_{\textit{open}} b \iff \textit{open}[a]<\textit{open}[b]$.
    	Encode an adjacency list of $J_*$ such that: (i) the main list of vertices is ordered according to $<_{\textit{open}}$;
    		(ii) for each $u\in V$, also the in-neighbourhood $N^{\text{in}}_{J_*}(u)$ is ordered according to $<_{\textit{open}}$.
    	Since $J$ satisfies all properties from \emph{($\alpha$jn-1)} to \emph{($\alpha$jn-4)} and
    		their $\alpha$palm-trees satisfy all properties from \emph{($\alpha$pt-1)} to \emph{($\alpha$pt-4)},
    		it's straightforward to check inductively that $\textit{$\alpha$DFS}(J_*)=J$.
    \end{proof}

    Still it remains to be seen how to perform efficiently, \ie in linear-time, all the LCAs computations that are needed
    	at lines~9-10 of $\textit{$\alpha$DFS-visit()}$ (Proc.~\ref{algo:aDFS-visit}).
    In the next subsection, we suggest to adopt a disjoint-set forest data structure with a non-ranked union and a classical \textit{Find} primitive based on path-compression.

    \subsection{Computing LCAs by Disjoint-Set Forest}\label{subsect:lca-dsf}
    A \emph{disjoint-set forest (dsf)} data structure~\cite{Tar75}, hereby denoted $\mathcal{D}$,
    is a data structure that keeps track of a set of elements partitioned into a number of disjoint (non-overlapping) subsets,
    	each of which is represented by a rooted tree. This is also known as union-find data structure or merge-find set.

    The following three operations are supported:
    $\mathcal{D}.\textit{MakeSet}(\cdot)$, $\mathcal{D}.\textit{Union}(\cdot, \cdot)$ and $\mathcal{D}.\textit{Find}(\cdot)$, where:

    \emph{(dsf-1)} The representative element of each disjoint set is the root of that set's tree;

    \emph{(dsf-2)} $\textit{MakeSet}(v)$ initializes the parent of a vertex $v\in V$ to be $v$ itself, \ie a singleton vertex tree;

    \emph{(dsf-3)} $\textit{Union}(u,v)$ combines two trees, $T_1$ rooted at $u$ and $T_2$ rooted at $v$,
    		into a new tree $T_3$ which is still rooted at $v$, \ie $u$ simply becomes a child of $v$ (this is a \emph{non-ranked} union).

    \emph{(dsf-4)} $\textit{Find}(v)$, starting from $v$, traverses the ancestors of $v$ until the root $r$ of the tree containing $v$ is finally reached.
    		While doing this, $\textit{Find}(v)$ changes each ancestor's parent reference to directly point to $r$ (this is \emph{path-compression});
    			the resulting tree is much flatter, speeding up future operations, not only on these traversed elements but also on those referencing them from the downstairs of the tree.

    Let us describe how to implement the LCAs computations at lines~9-10 of 
	$\textit{$\alpha$DFS-visit}()$ (Proc.~\ref{algo:aDFS-visit}).
    The resulting algorithm is named \textit{dsf-$\alpha$DFS}, based on a global dsf data structure $\mathcal{D}$.

    The main procedure of dsf-$\alpha$DFS() (Algo.~\ref{algo:dsf-aDFS}) is 
	almost the same as $\alpha$DFS() (Algo.~\ref{algo:aDFS}), 
		in the pseudocode the lines that differ are highlighted in grey to emphasize the modifications, 
	the only additions being:

    \emph{(dsf-init-1)} $\mathcal{D}.\textit{MakeSet}(v)$ is executed for each $v\in V$;

    \emph{(dsf-init-2)} An array indexed by circled vertices 
	$\textit{low\_ready}[]:V_\ocircle\rightarrow \N\cup\{+\infty\}$ is initialized as $\textit{low\_ready}[v]\leftarrow~+~\infty$ for every $v\in V_{\ocircle}$.
    Its role is tracking the $\textit{open}[]$ timestamp of the unique out-neighbour of $v\in V_{\ocircle}$
    	which is visited firstly and before all other out-neighbours (\ie the out-neighbour having minimum index).
    So, given $\A$ in input, the following invariant property will be maintained:
    		\[
    			\forall{v\in V_{\ocircle}}\; \textit{low\_ready}[v] =
    				 	\min\big\{\textit{open}[u]\in\N\cup\{+\infty\} \mid u\in N^{\text{out}}_{\A}(v) \big\}.
    					 \tag{$\text{I}_{\text{low}}$}
    		\]
    Lemma~\ref{lemma:gamma_is_LCA} shows that $\textit{low\_ready}[v]$ can be used as a compass needle for making LCA lookups;
    	indeed, because of the two forthcoming rules, the LCA that we need to find turns out to be the root of the disjoint set tree
    			containing precisely the vertex indexed by $\textit{low\_ready}[v]$.

    Let us now describe in more detail the distinctive rules of the dsf-$\alpha$DFS() algorithm. 
	The visiting subprocedure is given in Proc.~\ref{algo:dsf-aDFS-visit}, it goes as follows.	
	Let $v\in V$, then:

    \emph{(dsf-visit-1)} Whenever the visiting subprocedure, 
	$\textit{dsf-$\alpha$DFS-visit}(v, \A)$ (Proc.~\ref{algo:dsf-aDFS-visit}), makes a recursive call on some ingoing
    	neighbour $u\in N^{\text{in}}_\A(v)\cup \textit{rSt}[v]$ (see lines~6~and~19 of Proc.~\ref{algo:aDFS-visit}),
    	soon after that, it is executed $\mathcal{D}.\textit{Union}(u, v)$.
    Doing so, as soon as the recursive call on $u$ returns, the disjoint set tree of the child $u$ is merged with that of its parent~$v$;
    thus, parent-children ordering relations are preserved.
    This allows for fast lookup of the subtrees' roots (\ie the LCAs) that are needed in the \emph{(dsf-visit-2)} rule coming next.

    \emph{(dsf-visit-2)} Suppose that $\textit{dsf-$\alpha$DFS-visit}(v, \A)$ is currently visiting some $v\in V$,
    and that it comes to consider some in-neighbour $u\in N^{\text{in}}_{\A}(v)\cap V_\ocircle$ (at line~3~and~7).
    Then, assume at line~8, $\textit{low\_ready}$ is updated as follows: \[\textit{low\_ready}[u]\leftarrow \min(\textit{low\_ready}[u], \textit{open}[v]);\]
     this aims at satisfying the $\text{I}_{\text{low}}$ invariant.
    Next, $\textit{cnt}[u]$ is decremented (\cfr at line~8 of Proc.~\ref{algo:aDFS-visit}).

    If the condition $\textit{cnt}[u]=0$ is met at line~11 of $\textit{dsf-$\alpha$DFS-visit}(v, \A)$ (Proc.~\ref{algo:dsf-aDFS-visit}), the following is done:

    (a) It is identified the unique $x\in N^{\text{out}}_\A(u)$ s.t. $\textit{open}[x]=\textit{low\_ready}[u]$, and it is assigned to $\textit{low\_}v\leftarrow x$;

    (b) Then, we lookup for the root $\gamma$ of the corresponding disjoint set tree: $\gamma\leftarrow \mathcal{D}.\textit{Find}(\textit{low\_}v)$;

    (c) We say that any \( u \in V \) is \emph{active} if the predicate  
\( \textit{active}[u] \doteq (\textit{open}[u] < +\infty \ \text{and} \ \textit{close}[u] = +\infty) \) holds.
	So, if $\textit{active}[\gamma]=\textit{true}$, then $\gamma$ is pushed to the ready stack $\textit{rSt}[\gamma]$;
    	indeed, in that case, we can prove (see Lemma~\ref{lemma:gamma_is_LCA})
    		that the LCA of $N^{\text{out}}_{\A}(u)$ in $(V, A_{\pi})$ exists and it is really $\gamma$ (\ie the root of $\textit{low\_}v$).

    The rest of dsf-$\alpha$DFS-visit() (Proc.~\ref{algo:dsf-aDFS-visit}) is the same as Proc.~\ref{algo:aDFS-visit},  
	 the lines that differ are highlighted in grey to emphasize modifications. 
	This ends the description of dsf-$\alpha$DFS() (Algo.~\ref{algo:dsf-aDFS}).

    \begin{algorithm}[H]
    \caption{Safe-Alternating DFS with Disjoin-Set Forest}\label{algo:dsf-aDFS}
    \DontPrintSemicolon
    \nonl \SetKwProg{Fn}{Procedure}{}{}
    \normalsize
    \Fn{$\textit{dsf-$\alpha$DFS}(\mathcal{A})$}{
    		\SetKwInOut{Input}{input}
    		\SetKwInOut{Output}{output}

    \Input{An $\alpha$graph $\mathcal{A}=(V, A, \langle V_{\ocircle}, V_{\square} \rangle )$.}
    \Output{An $\alpha$jungle $\mathcal{J}_{\mathcal{A}}$.}
    $A_{\pi}, A_{f}, A_{s}, A_{c}\leftarrow \emptyset$; \label{algo:dsf-aDFS:l1}\;
    \ForEach{$u\in V$}{ \label{algo:dsf-aDFS:l2}
    	$\textit{open}[u]\leftarrow +\infty$; \label{algo:dsf-aDFS:l3}\;
    	$\textit{close}[u]\leftarrow +\infty$;\label{algo:dsf-aDFS:l4}\;
    	$\textit{rSt}[u]\leftarrow \emptyset$; \label{algo:dsf-aDFS:l5}\;
		
		\colorbox{gray!20}{$\mathcal{D}.\textit{make\_set}(u)$; \tcp{\emph{(dsf-init-1)}} } \label{algo:STCC:l5b}\;
      $\textit{rSt}[u]\leftarrow \emptyset$; \label{algo:STCC:l6b}\;
		
    	\If{$u\in V_{\ocircle}$}{ \label{algo:dsf-aDFS:l6}
		\colorbox{gray!20}{$\textit{low\_ready}[u]\leftarrow +\infty$; \tcp{\emph{(dsf-init-2)}} } \label{algo:STCC:l8b}\;
    	$\textit{cnt}[u]\leftarrow |N_{\A}^{\text{out}}(u)|$; \label{algo:dsf-aDFS:l7}
    	}
    }
    $\textit{time}\leftarrow 0$; \tcp{global time variable} \label{algo:dsf-aDFS:l8}
    \ForEach{$u\in V_{\square}$ \label{algo:dsf-aDFS:l9}}{
    	\If{$\textit{open}[u]=+\infty$ \label{algo:dsf-aDFS:l10}}{
    		$\textit{$\alpha$DFS-visit}(u, \A)$;\label{algo:dsf-aDFS:l11} \;
    	}
    }
    \ForEach{$u\in V_{\ocircle}$\label{algo:dsf-aDFS:l12}}{
    	\If{$\textit{open}[u]=+\infty$\label{algo:dsf-aDFS:l13}}{
    		$\textit{open}[u]\leftarrow \textit{time}$; \; \label{algo:dsf-aDFS:l14}
    		$\textit{close}[u]\leftarrow \textit{time}$; \; \label{algo:dsf-aDFS:l14}
    		$\textit{time}\leftarrow \textit{time}+1$; \label{algo:dsf-aDFS:l15}
    	}
    }
    $A'\leftarrow A_{\pi} \cup A_{f} \cup A_{s} \cup A_{c}$; \label{algo:dsf-aDFS:l16}\;
    \Return{$\mathcal{J}_{\mathcal{A}}\leftarrow (V, A', (V_{\square}, V_{\ocircle}))$ }; \label{algo:dsf-aDFS:l17}
    }
    \end{algorithm}

    \begin{procedurealgo}[h!]
    \caption{Visit Procedure of Safe-Alternating DFS with Disjoin-Set Forest}\label{algo:dsf-aDFS-visit}
    \DontPrintSemicolon
    \nonl \SetKwProg{Fn}{Procedure}{}{}
    \normalsize
    \Fn{$\textit{dsf-$\alpha$DFS-visit}(v, \A)$}{
    		\SetKwInOut{Input}{input}
    		\SetKwInOut{Output}{output}
     \Input{One vertex $v\in V$ of $\A$.}
    $\textit{open}[v]\leftarrow (\textit{time}\leftarrow \textit{time}+1)$; \label{algo:dsf-aDFS-visit:l1} \;
    \ForEach{$u\in N_\A^{\text{in}}(v)$}{ \label{algo:dsf-aDFS-visit:l2}
    	\If{$\textit{open}[u] = +\infty$}{ \label{algo:dsf-aDFS-visit:l3}
    		\If{$u\in V_\square$}{ \label{algo:dsf-aDFS-visit:l6}
    			add $(u,v)$ to $A_{\pi}$;\label{algo:dsf-aDFS-visit:l4}\;
    			$\textit{dsf-$\alpha$DFS-visit}(u, \A)$; \label{algo:dsf-aDFS-visit:l5}\;
				\colorbox{gray!20}{$\mathcal{D}.\textit{Union}(u, v)$; \tcp{\emph{(dsf-visit-1)}}} \label{algo:dsf-aDFS-visit:l5b} \;
    		}\Else{ \label{algo:dsf-aDFS-visit:l6}
				\colorbox{gray!20}{$\textit{low\_ready}[u]\leftarrow \min(\textit{low\_ready}[u], \textit{open}[v])$; \tcp{\emph{(dsf-visit-2)}}} \label{algo:dsf-aDFS-visit:l7b} \;
    			$\textit{cnt}[u]\leftarrow \textit{cnt}[u]-1$; \label{algo:dsf-aDFS-visit:l7}\;
				
				\If{\colorbox{gray!20}{$\textit{cnt}[u]=0$}}{ \label{algo:dsf-aDFS-visit:l8}
					\colorbox{gray!20}{$\textit{low\_}v\leftarrow $ the unique $x$ such that $\textit{open}[x]=\textit{low\_ready}[u]$;} \label{algo:dsf-aDFS-visit:l8b} \;
					\colorbox{gray!20}{$\gamma\leftarrow \mathcal{D}.\textit{find}(\textit{low\_}v)$;} \label{algo:dsf-aDFS-visit:l9} \;
    				\If{\colorbox{gray!20}{$\textit{active}[\gamma]=\textit{true}$} \label{algo:dsf-aDFS:l10b}}{
						$\textit{rSt}[\gamma].\textit{push}(u)$; \label{algo:dsf-aDFS-visit:l10} \;
					}
    			}
    		}
    	}\ElseIf{$\textit{open}[u]<+\infty \textbf{ and } \textit{close}[u]=+\infty$ \label{algo:dsf-aDFS-visit:l11}}{
    		add $(u,v)$ to $A_{f}$; \label{algo:dsf-aDFS-visit:l12} \;
    		\lElse{
    			add $(u,v)$ to $A_{c}$; \label{algo:dsf-aDFS-visit:l13}
    		}
    	}
    }
	\tcp{Check the ready-stack of $v$, \ie $\textit{rSt}[v]$}
    \While{$\textit{rSt}[v]\neq\emptyset$}{ \label{algo:dsf-aDFS-visit:l14}
    	$u\leftarrow \textit{rSt}[v].\textit{pop}()$;\label{algo:dsf-aDFS-visit:l15} \tcp{$u\in V_{\ocircle}$}
    	add $(u,v)$ to $A_{\pi}$; \label{algo:dsf-aDFS-visit:l16}\;
    	\textit{\bf for each} $t\in N^{\text{out}}_{\mathcal{A}}(u)$ \textit{\bf{do}} add $(u,t)$ to $A_{\textit{stalk}}$; \label{algo:dsf-aDFS-visit:l17}\;
    	$\textit{dsf-$\alpha$DFS-visit}(u, \A)$; \label{algo:dsf-aDFS-visit:l18} \;
		\colorbox{gray!20}{$\mathcal{D}.\textit{Union}(u, v)$; \tcp{\emph{(dsf-visit-1)}} } \label{algo:dsf-aDFS-visit:l18b} \;
    }
    $\textit{close}[v]\leftarrow (\textit{time}\leftarrow \textit{time}+1)$;\label{algo:dsf-aDFS-visit:l19}\;
    }
    \end{procedurealgo} 

    At this point we prove that the above mentioned claim concerning $\gamma$ and LCAs really holds.

    \begin{Lem}\label{lemma:gamma_is_LCA}
    Suppose $\textit{dsf-$\alpha$DFS-visit}(v, \A)$ visits some $v\in V$
     and considers an in-neighbour $u\in N^{\text{in}}_{\A}(v)\cap V_{\ocircle}$.
    Assume that $u$ is still unvisited, \ie $\textit{open}[u]=+\infty$,
    	and that $v$ is the last out-neighbour of $u$ that is being visited,
    	\ie that $\textit{cnt}[u]=0$.
    Let $\gamma$ be the vertex returned by $\mathcal{D}.\textit{find}(\textit{low\_}v)$, \ie the root of the disjoint set tree of $\textit{low\_}v$,
    where $\textit{low\_}v$ is the unique $x\in V$ such that $\textit{open}[x]=\textit{low\_ready}[u]$.
    If $\textit{active}[\gamma]=\textit{true}$ holds at that time, then the LCA of $N^{\text{out}}_{\A}(u)$ in $(V, A_{\pi})$ is really $\gamma$.
    \end{Lem}
		\begin{figure}[h!]
			\centering
			\begin{tikzpicture}[scale=.65]
					\node [squa, fill=black!10] (0) at (0, 7) {$r$};
					\node [style=none] (1) at (-9, 0) {};
					\node [style=none] (2) at (9, 0) {};
					\node [style=none] (3) at (9, 0) {};
					\node [style=none] (4) at (3, 2) {};
					\node [squa, fill=black!10] (5) at (1, 4) {$\gamma$};
					\node [style=none] (6) at (0, 2) {};
					\node [style=none] (7) at (-2.75, 2) {};
					\node [squa, fill=black!10] (8) at (3, 2) {$v$};
					\node [circ, fill=black!40] (9) at (0, 2) {$x$};
					\node [squa, fill=black!40] (10) at (-2.5, 2) {$\textit{low\_}v$};
					\node [style=none] (11) at (-6, 0) {};
					\node [style=none] (12) at (5, 0) {};
					\node [circ] (14) at (6, .4) {$u$};
					\node [style=none] (15) at (1.75, 2.25) {};
					\node [style=none] (16) at (2.25, 3) {};
					\node [style=none] (17) at (1.25, 3.25) {};
					\node [style=none] (18) at (1.5, 4.75) {};
					\node [style=none] (19) at (0.25, 5) {};
					\node [style=none] (20) at (0.5, 6) {};
					\draw [thick] (0.south west) to (1.center);
					\draw (0.south east) to node[above, xshift=-10ex, yshift=10ex] {$\mathcal{T}_\mathcal{P}$}  (3.center);
					\draw [thick] (1.center) to (12.center);
					\draw (12.center) to (3.center);
					\draw [thick] (5.south west) to (10.north east);
					\draw (5.south) to (9.north);
					\draw [thick] (10.south west) to (11.center);
					\draw [thick] (8.south east) to node[above, xshift=5ex, yshift=2.5ex] {$\mathcal{T}_\mathcal{P}\setminus \mathcal{T}_v$} (12.center);
					\draw [arrows=->] (8.west) to (15.center);
					\draw [arrows=->] (15.center) to (16.center);
					\draw [arrows=->] (16.center) to node[below, xshift=0ex, yshift=0ex] {\tiny tree} (17.center);
					\draw [arrows=->] (17.center) to (5.south);
					\draw [arrows=->, dotted, bend left=45] (14.west) to node[xshift=0ex, yshift=0ex] {\tiny stalk} (8.south);
					\draw [arrows=->, dotted, bend left, looseness=.75] (14.south west) to node[xshift=0ex, yshift=0ex] {\tiny stalk} (9.south east);
					\draw [arrows=->, dotted, bend right=315, looseness=0.75] (14.south) to node[xshift=0ex, yshift=0ex] {\tiny stalk} (10.south);
					\draw [arrows=->] (5.north) to (18.center);
					\draw [arrows=->] (18.center) to node[above, xshift=0ex, yshift=0ex] {$p_v$} (19.center);
					\draw [arrows=->] (19.center) to node[above, xshift=-6ex, yshift=-4ex] {$\mathcal{T}_v$} (20.center);
					\draw [arrows=->] (20.center) to (0.south);
				\end{tikzpicture}
					\caption{An illustration of Lemma~\ref{lemma:gamma_is_LCA}}\label{fig:lemma4}
		\end{figure}

    \begin{proof}
    Notice that $(V, A_{\pi})$ still grows as a forest during the execution of dsf-$\alpha$DFS().
    Indeed, if a new arc $(u,v)$ is added to $A_{\pi}$ it still holds that $\textit{open}[u]=+\infty$
    and $\textit{open}[v]<+\infty$; no cycle can be formed.
    Thus, assuming $\textit{dsf-$\alpha $DFS-visit}(v, \A)$ is invoked for some $v\in V$,
    	we can consider the unique maximal tree $\mathcal{T}_v$ in $(V, A_{\pi})$ containing $v$ and comprising only non-white vertices
    		-- \ie constructed \emph{until} the time of that particular invocation.
    Let $p_v$ be the path in $\mathcal{T}_v$ going from $v$ up to the root $r$ of $\mathcal{T}_v$.
    By properties \emph{(dsf-visit-1, dsf-visit-2)} and by the definition of $\textit{low\_}v$,
    	and since $\gamma=\mathcal{D}.\textit{find}(\textit{low\_}v)$	and $\gamma$ is active by hypothesis,
    		then $\gamma$ lies on $p_v$.
    Thus, $\gamma$ must be the LCA of $\textit{low\_}v$ and $v$ in $\mathcal{T}_v$ (possibly $\gamma=\textit{low\_}v$).
    We argue that $N^{\text{out}}_{\A}(u)\subseteq {\mathcal{T}^\gamma_v}$,
    where ${\mathcal{T}^\gamma_v}$ is the maximal subtree of $\mathcal{T}_v$ rooted at $\gamma$.
    Indeed, by \emph{(dsf-visit-2)}, the $\text{I}_{\text{low}}$ invariant holds:
    \[\textit{open}[\textit{low\_}v]=\min\big\{ \textit{open}[x]\mid x\in N^{\text{out}}_{\A}(u)\big\}.\]
    So, when $\textit{cnt}[u]=0$, and since $\gamma$ is an ancestor of $\textit{low\_}v$, then:
    	\[ \forall{x\in N^{\text{out}}_{\A}(u)}\textit{open}[\gamma] \leq \textit{open}[\textit{low\_}v] \leq \textit{open}[x]<+\infty.\]
    Notice all vertices in $\mathcal{T}_v$ which are not descendants of $\gamma$ still
    	have a smaller opening timestamp than $\gamma$ (\ie they were all visited before $\gamma$),
    and all those which are proper descendants of $\gamma$ have a greater opening timestamp than $\gamma$.
    All these combined, it must be $N^{\text{out}}_{\A}(u)\subseteq {\mathcal{T}^{\gamma}_v}$.
    So, $\gamma$ is a common ancestor of all out-neighbours of $u$ in $\mathcal{T}_v$; 
	but $\gamma$ is also the LCA of $\{\textit{low\_}v,v\}\subseteq N^{\text{out}}_{\A}(u)$,
    	this means that $\gamma$ is the LCA of all $N^{\text{out}}_{\A}(u)$ in $\mathcal{T}_v$.
    \end{proof}

    By Lemma~\ref{lemma:gamma_is_LCA}, Proposition~\ref{prop:aDFS-jungle} holds even for dsf-$\alpha$DFS, proving its correctness.

		Concerning time complexity, by relying on technical results offered in~\cite{GT85},
			\textit{dsf-$\alpha$DFS()} can be implemented so to run in linear-time on a RAM machine.
		Concretely, \cite{GT85} showed that the \emph{incremental-tree set-union problem} can be solved in linear-time on RAMs.
		The disjoint-sets union-tree $T$ of $\mathcal{D}$ is revealed one vertex at a time by attaching new singleton vertices to $T$ incrementally and in interleaving with the $\mathcal{D}.\textit{Find}()$ operations (that can possibly be performed on those vertices that have already been revealed previously). The vertices $u$ that are incrementally revealed and attached must be new singletons that were never attached before (\ie there is only one underlying union-tree $T$, that keeps growing, and many singleton vertices attached incrementally).

A moment's reflection reveals that the incremental-tree set-union problem does encompass the way in which \textit{dsf-$\alpha$DFS()} grows the union-tree.
Indeed recall that our vertices are always attached incrementally to the union-tree during the backtracking and, thus, in a post-ordering.
Concretely, recall that \textit{dsf-$\alpha$DFS()} performs a depth-first search of the union-tree,
and observe that once the docking points of the circled vertices have been decided and the post-order visiting of the vertices has been fixed,
then one can also forget about the fact that the vertices belong to two players and reason about the union-tree downstream of that, as if it were a traditional dfs-tree of uncolored vertices.
By this we mean that the distinction of the vertices into color classes, squares and circles, only affects the particular post-ordering that is being chosen (\ie the particular order in which the $\mathcal{D}.\textit{Union}()$ operations are performed) but not the underlying fundamental graph structure.
So~\cite{GT85} applies. Notice that we would need just a rather special case of the incremental-tree set-union problem, \ie the one in which
the $\mathcal{D}.\textit{Union}()$ operations are always done in a post-ordering simultaneously with the dfs backtracking. The following holds.
\begin{Thm}\label{thm:aDFS:complexityRAM}
	Given an input $\alpha$graph $\A$ on vertex set $V$ and arc set $A$, \textit{dsf-$\alpha$DFS}($\A$) halts in $\Theta(|V|+|A|)$ linear-time on a RAM machine, provided that the dfs data structure $\mathcal{D}$ is implemented as proposed in~\cite{GT85}.
\end{Thm}

If the dsf data structure $\mathcal{D}$ is implemented more traditionally as proposed in~\cite{Tar75}, \ie with ranked-unions and path-compressions,
then \textit{dsf-$\alpha$DFS()} runs in Ackermann-linear-time even on a pointer machine.
As one would expect, due to it's simplicity, this variant would also perform well in practice.

	\begin{Thm}\label{thm:aDFS:complexityPTM}
		Given an input $\alpha$graph $\A$ on vertex set $V$ and arc set $A$, \textit{dsf-$\alpha$DFS}($\A$) halts in $O(|V|+|A|\alpha(|A|, |V|))$ Ackermann-linear-time on a pointer machine, provided that the dfs data structure $\mathcal{D}$ is implemented with ranked-unions and path-compressions as in~\cite{Tar75}.
	\end{Thm}

		We leave open the question of whether \textit{dsf-$\alpha$DFS()} can be implemented so that to run in $\Theta(|V|+|A|)$ linear-time on pointer machines,
		 see Section~\ref{ref:relatedfutureworks} for further discussion.

    \section{Linear-Time Algorithm for Safe-Alternating SCCs}\label{sect:algo-slip}
    In order to offer a linear-time safe-$\alpha$SCCs decomposition algorithm some more technical machinery is still needed, the catalyst being Definition~\ref{def:lowlink} below.

    It is shown that the problem of computing safe-$\alpha$SCCs of a given $\alpha$graph $\A$
      can be tackled by finding the roots of the components' subtrees in the $\alpha$jungle $\J_\A$, this is
    reminiscent to what happens in Tarjan's algorithm for the classical problem of decomposing a directed graph into SCCs.

    So we have identified an efficient procedure to decide whether a vertex is the root of a safe-$\alpha$SCC subtree in $\J_\A$.
    It is based on a \emph{lowlink} indexing gamifying the lowlink calculation proposed in~\cite{Tar72}.
	
	Thus, before presenting our linear-time algorithm for computing safe-$\alpha$SCC, 
	it is instructive to recall the classical algorithm introduced by Tarjan \cite{Tar72} 
	for computing strongly connected components (SCCs) in directed graphs.

\subsection{Tarjan's Algorithm and Lowlink Computation}

Before presenting our linear-time algorithm for safe-alternating strongly connected components, 
it is instructive to recall the classical algorithm introduced by Tarjan~\cite{Tar72}, 
which uses a standard (forward) depth-first traversal to compute strongly connected components (SCCs) of a directed graph.

Tarjan's algorithm performs a single depth-first traversal of a directed graph \( G=(V,A) \), 
assigning a discovery index \(\textit{open}[v]\) to each vertex \( v \) at the moment it is first visited by DFS.

Additionally, it computes the value \(\textit{lowlink}[v]\) for each vertex \( v \), defined as 
the smallest opening time \(\textit{open}[u]\) of any vertex \( u \) that lies in the same strongly connected component as \( v \) 
and is reachable from \( v \) by traversing zero or more tree arcs followed by at most one frond or cross-link arc.

Finally, Tarjan's algorithm identifies a vertex \( v \) as the root of an SCC if and only if 
\(\textit{lowlink}[v] = \textit{open}[v]\). 
Thus, all vertices belonging to the same SCC form a subtree in the DFS tree, 
with the component root being the subtree's root.
This classical definition and algorithm serve as the foundation and intuition for the 
safe-alternating SCC algorithm presented in the following sections.

\subsection{$\alpha$Lowlink}

The concept of \emph{$\alpha$lowlink} adapts Tarjan's classical lowlink indexing to the 
alternating setting of an $\alpha$jungle $\J$ constructed from an $\alpha$graph $\A$. 

The intuitive idea behind the definition of $\alpha\textit{lowlink}(v)$ mirrors the classical lowlink concept, 
transposed into the reverse DFS setting.
Specifically, $\alpha\textit{lowlink}(v)$ is the smallest opening time $\textit{open}[u]$ 
among all vertices $u$ belonging to the same safe-$\alpha$SCC as $v$, 
from which vertex $v$ can be reached by traversing at most one frond or cross-link arc, 
followed by a (possibly empty) sequence of tree arcs.

However we cannot practically carry out this computation directly, 
since we do not yet have an explicit method to determine whether two vertices belong to the same safe-$\alpha$SCC.

To address this issue, Definition~\ref{def:lowlink} explicitly incorporates additional structural constraints: 
specifically, it requires the existence of a common ancestor vertex $\gamma$ shared by $u$ and $v$, 
such that $u$ and $\gamma$ lie within the same safe-$\alpha$SCC. 

While somewhat stronger and initially more complex in form, 
this will become instrumental in enabling both practical algorithmic efficiency in the procedures 
and theoretical clarity in the correctness proofs.

    \begin{Def}\label{def:lowlink}
    Let $\J$ be an $\alpha$jungle constructed over an $\alpha$graph $\A$ on vertex set $V$.
    Let the vertices be timestamped by $\textit{open}[]:V\rightarrow \N$, and let $\{\mathcal{P}_i\}_{i=1}^k$ be the $\alpha$palm-trees of $\J$
    each having vertex set $V_i$ and arc set $A_i={A_i}_{\pi}\cup {A_i}_{f}\cup {A_i}_{c}\cup {A_i}_{s}$.

     $\textit{$\alpha$lowlink}_\J:V\rightarrow \N$ is defined as the following minimum index for every $v\in V$:
    \begin{align*}
     \textit{$\alpha$lowlink}_\J(v) \doteq
    	\min \big\{\textit{open}[v]\big\}\cup & \big\{ \textit{open}[u] \mid u\in V\setminus\{v\} \text{ and } \exists{i\in [k]}\; \text{such that the following two hold:}\; \\
    		& \textit{($\alpha$ll-1) } \exists{t\geq 1}\, \exists{( u, v_1, \ldots, v_{t-1}, (v_t = v) ) \in ({V_i})^+} \text{such that:}  \\
    		&\;\;\;\;\;\;\;\;\;\;\;\;\;\;\; \textit{(a) } (u,v_1)\in {A_i}_{f} \cup {A_i}_{c}; \text{(\ie $(u,v_1)$ is either a frond or cross-arc)}  \\
    		&\;\;\;\;\;\;\;\;\;\;\;\;\;\;\; \textit{(b) } \text{ if } t\geq 2, \forall{j\in \{1, \ldots, t-1\}}\,\text{ it holds } (v_j, v_{j+1})\in {A_i}_{\pi}. \\
    	\text{ and }\;\; & \textit{($\alpha$ll-2) } \exists{\gamma\in V_i} \text{ such that: } \\
      	&\;\;\;\;\;\;\;\;\;\;\;\;\;\;\; \textit{(a) } \gamma \text{ is a common ancestor of } u \text{ and } v \text{ in } (V_i, {A_i}_{\pi}); \\
    		&\;\;\;\;\;\;\;\;\;\;\;\;\;\;\; \textit{(b) } \gamma \text{ and } u \text{ are in the same safe-$\alpha$SCC of $\A$, \ie $\gamma\in\C_u$.} \\
    		& \big\}.
    \end{align*}
    (where, for any $u\in V$, $\C_u$ denotes the unique safe-$\alpha$SCC of $\A$ which includes vertex $u$; 
		also notice that it is possible that $u=\gamma$ or $\gamma=v$)
    \end{Def}

		\begin{figure}[h!]
			\centering
			\begin{tikzpicture}[scale=.75]
					\node [squa] (0) at (0, 9) {$r$};
					\node [style=none] (1) at (-10, 0) {};
					\node [style=none] (2) at (2, 0) {};
					\node [squa] (3) at (-0.5, 6.5) {$\gamma$};
					\node [squa] (4) at (-1, 4.5) {$v$};
					\node [squa] (5) at (-2.75, 0.5) {$v_1$};
					\node [squa] (6) at (-2.5, 1.5) {$v_2$};
					\node [squa] (7) at (-1.25, 3.25) {$v_{t-1}$};
					\node [squa] (8) at (-4.25, 2.75) {$u$};
					\node [style=none] (9) at (-1.5, 2.5) {$\iddots$};

					\draw (0.south west) to (1.center);
					\draw (1.center) to (2.center);
					\draw [thick] (0.south) to (3.north);
					\draw [thick] (3.south) to (4.north);
					\draw [thick] (4.south) to (7.north);
					\draw [thick] (7.south) to (6.north);
					\draw [thick] (6.south) to (5.north);
					\draw [thick] (3.south west) to (8.north east);
					\draw [arrows=->, thick, bend right, in=120, out=-70, looseness=1.75, dashed]
						(3.west) to node[above, xshift=-3.5ex, yshift=-.5ex] {$\gamma\in\C_u$} (8.east);
					\draw [arrows=->, dashed, bend right=45] (8.south) to node[xshift=-1ex, yshift=0ex] {\tiny frond or cross} (5.west);
					\draw (0.south east) to (2.center);
			\end{tikzpicture}
		\caption{An illustration of Definition~\ref{def:lowlink}}
		\end{figure}

  Now, in order to proceed on this route, we must overcome some obstructions.
  Unfortunately it's not generally true that, if $\C\subseteq V$ is a safe-$\alpha$SCC of an $\alpha$graph $\A$, then, $\C$ induces a subtree $T_\C$ in $\J_\A$ --
  	if $\J_\A$ is the $\alpha$jungle constructed during an $\alpha$DFS() as defined in Section~\ref{sect:aDFS}.
  And even when it's true, say by chance, still it is not generally true that a vertex $v$ of $\A$ is the root of some
  \emph{safe-}$\alpha$SCC if and only if $\textit{$\alpha$lowlink}_{\J_\A}(v)=\textit{open}[v]$ as it was in the SCCs algorithm.

  Still, for all this to happen, we claim that a conceptually simple (but technically non-trivial) modification to the $\alpha$DFS() (and,	thus, to the structure of the $\alpha$jungle) can be introduced.
  To better illustrate the issue, let us first consider the following Examples~\ref{ex:safe-aSCC1}~and~\ref{ex:safe-aSCC}.

    \begin{figure}[t!]
    	\begin{minipage}{\columnwidth}
    	\centering
    	\subfloat[An $\alpha$graph $\A_1$\label{fig:ex:safe-aSCC:agraph1}]{
    	\begin{tikzpicture}[arrows=->, scale=.75]
    		\node[squa, label={}] (tAA) {$a$};
    		\node[squa, below=of tAA, label={}] (tA) {$b$};
    		\node[squa, right=of tB, label={}, yshift=0ex, xshift=0ex] (tC) {$c$};
    		\node[squa, below=of tA, label={}, yshift=0ex, xshift=0ex] (tB) {$e$};
    		\node[circ, below=of tB, label={}, yshift=-6ex, xshift=0ex] (tD) {$g$};
    		\node[squa, below right=of tA, label={}, yshift=0ex, xshift=1ex] (tE) {$f$};
    		\node[squa, left=of tB, label={}, yshift=0ex, xshift=-1ex] (tF) {$d$};
    		\node[squa, below=of tD, label={}, yshift=0ex, xshift=0ex] (tG) {$h$};

    		\draw[] (tA) to [xshift=0ex, yshift=0ex] node[xshift=2ex, yshift=-1ex] {} (tAA);
    		\draw[] (tB) to [xshift=0ex, yshift=0ex] node[xshift=2ex, yshift=-1ex] {} (tA);
    		\draw[] (tC) to [bend right=40, xshift=0ex, yshift=0ex] node[xshift=2ex, yshift=-1ex] {} (tAA);
    		\draw[] (tD) to [bend left=40, xshift=0ex, yshift=0ex] node[below, xshift=0ex, yshift=1.5ex] {} (tF);
    		\draw[] (tD) to [bend right=40, xshift=0ex, yshift=0ex] node[below, xshift=0ex, yshift=1.5ex] {} (tE);
    		\draw[] (tD) to [bend right=0, xshift=0ex, yshift=0ex] node[below, xshift=0ex, yshift=1.5ex] {} (tB);
    		\draw[] (tE) to [xshift=0ex, yshift=0ex] node[right,xshift=0ex, yshift=0ex] {} (tA.east);
    		\draw[] (tG) to [xshift=0ex, yshift=0ex] node[right,xshift=0ex, yshift=0ex] {} (tD);
    	  \draw[] (tF) to [xshift=0ex, yshift=0ex] node[right,xshift=0ex, yshift=0ex] {} (tA.west);
    		\draw[] (tAA.west) to [out=170, in=180, xshift=-5ex, yshift=0ex] node[left,xshift=0ex, yshift=0ex] {} (tG.west);
    		\draw[] (tG.east) to [out=0, in=0, xshift=-5ex, yshift=0ex] node[left,xshift=0ex, yshift=0ex] {} (tC.east);
    	\end{tikzpicture}
    	}
    \qquad
    \subfloat[The corresponding $\alpha$jungle $\mathcal{\J}_{\A_1}$ traced by $\alpha$DFS(), showing timestamps and $\alpha$lowlinks\label{fig:ex:safe-aSCC:algorun1}]{
    \begin{tikzpicture}[arrows=->, scale=.75]
    	\node[squa, label={[yshift=-.5ex] \tiny $1|16|(1)$}] (tAA) {$a$};
    	\node[squa, below=of tAA, label={[xshift=3ex, yshift=-.5ex] \tiny $2|13|(1)$}] (tA) {$b$};
    	\node[squa, right=of tA, label={[xshift=3.5ex, yshift=-.5ex] \tiny $14|15|(10)$}, yshift=0ex, xshift=0ex] (tC) {$c$};
    	\node[squa, below=of tA, label={[xshift=2.5ex, yshift=-.5ex] \tiny $5|6|(5)$}, yshift=0ex, xshift=0ex] (tB) {$e$};
    	\node[circ, below=of tB, label={[xshift=2ex, yshift=-.5ex] \tiny $9|12|(1)$}, yshift=-6ex, xshift=0ex] (tD) {$g$};
    	\node[squa, below right=of tA, label={[xshift=1.5ex, yshift=-.5ex] \tiny $7|8|(7)$}, yshift=0ex, xshift=1ex] (tE) {$f$};
    	\node[squa, left=of tB, label={[xshift=-1ex, yshift=-.5ex] \tiny $3|4|(3)$}, yshift=0ex, xshift=-1ex] (tF) {$d$};
    	\node[squa, below=of tD, label={[xshift=0ex, yshift=-5ex] \tiny $10|11|(1)$}, yshift=0ex, xshift=0ex] (tG) {$h$};

    	\draw[thick] (tA) to [xshift=0ex, yshift=0ex] node[xshift=2ex, yshift=0ex] {\tiny tree} (tAA);
    	\draw[thick] (tB) to [xshift=0ex, yshift=0ex] node[xshift=2ex, yshift=.5ex] {\tiny tree} (tA);
    	\draw[thick] (tC) to [bend right=40, xshift=0ex, yshift=0ex] node[xshift=3ex, yshift=-1ex] {\tiny tree} (tAA);
    	\draw[dotted] (tD) to [bend left=40, xshift=0ex, yshift=0ex] node[below, xshift=0ex, yshift=1.5ex] {\tiny stalk} (tF);
    	\draw[dotted] (tD) to [bend right=40, xshift=0ex, yshift=0ex] node[below, xshift=0ex, yshift=1.5ex] {\tiny stalk} (tE);
    	\draw[dotted] (tD) to [bend right=0, xshift=0ex, yshift=0ex] node[below, xshift=0ex, yshift=1.5ex] {\tiny stalk} (tB);
    	\draw[thick] (tE) to [xshift=0ex, yshift=0ex] node[right,xshift=0ex, yshift=0ex] {\tiny tree} (tA.east);
    	\draw[thick] (tG) to [xshift=0ex, yshift=0ex] node[left,xshift=0ex, yshift=0ex] {\tiny tree} (tD);
    	\draw[thick] (tF) to [xshift=0ex, yshift=0ex] node[xshift=-2ex, yshift=0ex] {\tiny tree} (tA.west);
    	\draw[dashed] (tAA.west) to [out=170, in=180, xshift=-5ex, yshift=0ex] node[left,xshift=1ex, yshift=0ex] {\tiny frond} (tG.west);
    	\draw[dashed] (tG.east) to [out=0, in=0, xshift=-5ex, yshift=0ex] node[right,xshift=0ex, yshift=0ex] {\tiny cross} (tC.east);
    	\draw[thick] (tD) to [bend left=30, xshift=-5ex, yshift=0ex] node[xshift=-1.5ex, yshift=-1ex] {\tiny tree} (tA);
    \end{tikzpicture}
    }
    \end{minipage}
    \caption{An illustration of Example~\ref{ex:safe-aSCC1}}\label{fig:ex:safe-aSCC1}
    \end{figure}

    \begin{Exa}\label{ex:safe-aSCC1}
    Consider the $\alpha$graph $\A_1=(V, A, \langle V_\square, V_\ocircle\rangle)$ shown in \figref{fig:ex:safe-aSCC:agraph1} where
    $V_\square=\{a,b,c,d,e,f,h\}$ and $V_\ocircle=\{g\}$, where $V=V_\square \cup V_\ocircle$ and
    $A=\{(a,h), (b,a), (c,a), (d,b), (e,b), (f,b), (g,d), (g,e), (g,f), (h,c)\}$.

    \figref{fig:ex:safe-aSCC:algorun1} shows the $\alpha$jungle $\mathcal{\J}_{\A_1}$ tracing the execution of $\alpha$DFS() on input $\A_1$.
    Timestamps and $\alpha$lowlinks are shown above each vertex
    	(denoted: $\langle \text{open}[v]\rangle | \langle \text{close}[v]\rangle | (\langle \alpha\text{lowlink}[v]\rangle)$, for $v\in V$).
    Notice $(g,b)$ is an arc in $\J_{\A_1}$ but not in $\A_1$.
    Concerning the safe-$\alpha$SCCs of $\A_1$, a moment's reflection reveals that they are $\C_1=\{a,h,c\}$
    and all of the remaining vertices are singleton safe-$\alpha$SCCs.
    Notice that $b$ is always $\alpha$reachable from $\C_1$, but Player~$\ocircle$ decides how to reach it by controlling $g$.

    The main issue here is that $\C_1$ doesn't induce a subtree in $\J_{\A_1}$ because $(a,h)$ is a frond, $(h,c)$ is a cross-arc,
    and $g\not\in\C_1$ act as a vertex in the middle between $h$ and $a$.
    The reason being that $g$ joined the $\alpha$jungle $\J_{\A_1}$ by attaching to parent $b$
    (which is fine for deciding just safe-$\alpha$reachability relations but it's not for identifying safe-$\alpha$SCCs).
    \end{Exa}

    \begin{figure}[h!]
    	\begin{minipage}{\columnwidth}
    	\centering
    	\subfloat[An $\alpha$graph $\A_2$\label{fig:ex:safe-aSCC:agraph}]{
    	\begin{tikzpicture}[arrows=->, scale=.75]
    		\node[squa, label={}] (tA) {$a$};
    		\node[squa, below=of tA, label={}, yshift=0ex, xshift=0ex] (tB) {$c$};
    		\node[squa, below=of tB, label={}, yshift=2ex, xshift=0ex] (tC) {$e$};
    		\node[circ, below=of tB, label={}, yshift=-6ex, xshift=0ex] (tD) {$f$};
    		\node[squa, below right=of tA, label={}, yshift=0ex, xshift=1ex] (tE) {$d$};
    		\node[squa, left=of tB, label={}, yshift=0ex, xshift=-1ex] (tF) {$b$};
    		\node[squa, below=of tD, label={}, yshift=0ex, xshift=0ex] (tG) {$g$};
    		\node[squa, below=of tG, label={}, yshift=0ex, xshift=0ex] (tH) {$h$};

    		\draw[] (tB) to [xshift=0ex, yshift=0ex] node[xshift=2ex, yshift=-1ex] {} (tA);
    		\draw[] (tD) to [xshift=0ex, yshift=0ex] node[below, xshift=0ex, yshift=1.5ex] {} (tC);
    		\draw[] (tC) to [xshift=0ex, yshift=0ex] node[below, xshift=2ex, yshift=1.25ex] {} (tB);
    		\draw[] (tB) to [bend right=55, xshift=0ex, yshift=0ex] node[below, xshift=-2.25ex, yshift=1.25ex] {} (tC);
    		\draw[] (tD) to [bend left=40, xshift=0ex, yshift=0ex] node[below, xshift=0ex, yshift=1.5ex] {} (tF);
    		\draw[] (tD) to [bend right=40, xshift=0ex, yshift=0ex] node[below, xshift=0ex, yshift=1.5ex] {} (tE);
    		\draw[] (tE) to [xshift=0ex, yshift=0ex] node[right,xshift=0ex, yshift=0ex] {} (tA.east);
    		\draw[] (tG) to [xshift=0ex, yshift=0ex] node[right,xshift=0ex, yshift=0ex] {} (tD);
    		\draw[] (tH) to [xshift=0ex, yshift=0ex] node[right,xshift=0ex, yshift=0ex] {} (tG);
    	  \draw[] (tF) to [xshift=0ex, yshift=0ex] node[right,xshift=0ex, yshift=0ex] {} (tA.west);
    		\draw[] (tG) to [bend right=40, xshift=0ex, yshift=0ex] node[left,xshift=0ex, yshift=0ex] {} (tH);
    		\draw[] (tA.west) to [out=170, in=180, xshift=-5ex, yshift=0ex] node[left,xshift=0ex, yshift=0ex] {} (tH.west);
    	\end{tikzpicture}
    	}
    \qquad
    \subfloat[The corresponding $\alpha$jungle $\mathcal{\J}_{\A_2}$ traced by $\alpha$DFS(), showing timestamps and $\alpha$lowlinks\label{fig:ex:safe-aSCC:algorun}]{
    \begin{tikzpicture}[arrows=->, scale=.75]
    	\node[squa, label={above : \tiny $1|16|(1)$}] (tA) {$a$};
    	\node[squa, below=of tA, label={left, xshift=.75ex : \tiny $2|5|(2)$}, yshift=0ex, xshift=0ex] (tB) {$c$};
    	\node[squa, below=of tB, label={left, xshift=.75ex : \tiny $3|4|(2)$}, yshift=2ex, xshift=0ex] (tC) {$e$};
    	\node[circ, below=of tB, label={right: \tiny $10|15|(1)$}, yshift=-7ex, xshift=0ex] (tD) {$f$};
    	\node[squa, below right=of tA, label={above : \tiny $8|9|(8)$}, yshift=0ex, xshift=1ex] (tE) {$d$};
    	\node[squa, left=of tB, label={above : \tiny $6|7|(6)$}, yshift=0ex, xshift=-1ex] (tF) {$b$};
    	\node[squa, below=of tD, label={right : \tiny $11|14|(1)$}, yshift=0ex, xshift=0ex] (tG) {$g$};
    	\node[squa, below=of tG, label={right : \tiny $12|13|(1)$}, yshift=0ex, xshift=0ex] (tH) {$h$};

    	\draw[thick] (tB) to [xshift=0ex, yshift=0ex] node[xshift=1.5ex, yshift=-1.5ex] {\tiny tree} (tA);
    	\draw[dotted] (tD) to [xshift=0ex, yshift=0ex] node[below, xshift=0ex, yshift=1ex] {\tiny stalk} (tC);
    	\draw[thick] (tC) to [xshift=0ex, yshift=0ex] node[below, xshift=2ex, yshift=1.25ex] {\tiny tree} (tB);
    	\draw[dashed] (tB) to [bend right=55, xshift=0ex, yshift=0ex] node[below, xshift=-2.25ex, yshift=1.25ex] {\tiny frond} (tC);
    	\draw[thick] (tD) to [bend right=45, xshift=0ex, yshift=0ex] node[below, xshift=0ex, yshift=2.5ex] {\tiny tree} (tA);
    	\draw[dotted] (tD) to [bend left=40, xshift=0ex, yshift=0ex] node[below, xshift=0ex, yshift=1.5ex] {\tiny stalk} (tF);
    	\draw[dotted] (tD) to [bend right=40, xshift=0ex, yshift=0ex] node[below, xshift=0ex, yshift=1.5ex] {\tiny stalk} (tE);
    	\draw[thick] (tE) to [xshift=0ex, yshift=0ex] node[right,xshift=-1ex, yshift=.75ex] {\tiny tree} (tA.east);
    	\draw[thick] (tG) to [xshift=0ex, yshift=0ex] node[right,xshift=0ex, yshift=0ex] {\tiny tree} (tD);
    	\draw[thick] (tH) to [xshift=0ex, yshift=0ex] node[right,xshift=-.5ex, yshift=0ex] {\tiny tree} (tG);
      \draw[thick] (tF) to [xshift=0ex, yshift=0ex] node[left,xshift=1ex, yshift=.75ex] {\tiny tree} (tA.west);
    	\draw[dashed] (tG) to [bend right=40, xshift=0ex, yshift=0ex] node[left,xshift=.75ex, yshift=0ex] {\tiny frond} (tH);
    	\draw[dashed] (tA.west) to [out=170, in=180, xshift=-5ex, yshift=	0ex] node[left,xshift=1ex, yshift=0ex] {\tiny frond} (tH.west);
    \end{tikzpicture}
    }
    \end{minipage}
    \caption{An illustration of Example~\ref{ex:safe-aSCC}}\label{fig:ex:safe-aSCC}
    \end{figure}

    \begin{Exa}\label{ex:safe-aSCC}
    Consider the $\alpha$graph $\A_2=(V, A, \langle V_\square, V_\ocircle\rangle)$ of \figref{fig:ex:safe-aSCC:agraph}:
    $V_\square=\{a,b,c,d,e,g,h\}$ and $V_\ocircle=\{f\}$, where $V=V_\square \cup V_\ocircle$ and
    $A=\{(a,h), (b,a), (c,a), (c,e), (d,a), (f,b), (f,e), (f,d), (g,f), (g,h), (h,g)\}$.

    \figref{fig:ex:safe-aSCC:algorun} shows the $\alpha$jungle $\mathcal{\J}_{\A_2}$ tracing the execution of $\alpha$DFS() on input $\A_2$.
    	Timestamps and $\alpha$lowlinks are shown above each vertex
    	(denoted: $\langle \text{open}[v]\rangle | \langle \text{close}[v]\rangle | (\langle \alpha\text{lowlink}[v]\rangle)$, for $v\in V$).
    	Notice that the arc $(f,a)$ belongs to $\J_{\A_2}$ but not to $\A_2$.
    Concerning the safe-$\alpha$SCCs of $\A_2$, a moment's reflection reveals that they are $\C_1=\{c,e\}$, $\C_2=\{g,h\}$ and
    	all of the remaining vertices are singleton safe-$\alpha$SCCs.
    Notice that both $\C_1$ and $\C_2$ induce a subtree in $\J_{\A_2}$.
    	Notice $c$ is the root of $\C_1$ and $g$ is that of $\C_2$.

    	But	$\textit{$\alpha$lowlink}_{\J_\A}(g)=1\neq 11=\textit{open}[g]$, so $g$ can't be recognized as a root simply by testing the $\alpha$lowlink.
    	The issue is still that $f$ joined $\J_{\A_2}$ by attaching to parent $a$.
    \end{Exa}
	
	 \begin{algorithm}[t]
    \caption{safe-$\alpha$SCC}\label{algo:STCC}
    \DontPrintSemicolon
    \nonl \SetKwProg{Fn}{Procedure}{}{}
	\normalsize
    \Fn{$\textit{safe-$\alpha$SCC}(\mathcal{A})$}{
        \SetKwInOut{Input}{input}
        \SetKwInOut{Output}{output}
    \Input{An $\alpha$graph $\mathcal{A}=(V, A, ( V_{\ocircle}, V_{\square} ) )$.}
    \Output{The safe-$\alpha$SCC of $\mathcal{A}$.}
    \ForEach{$u\in V$}{ \label{algo:STCC:l1}
      $\textit{open}[u]\leftarrow +\infty$; \label{algo:STCC:l2}\;
      \colorbox{gray!20}{$\textit{$\alpha$lowlink}[u]\leftarrow +\infty$; \label{algo:STCC:l3}}\;
      \colorbox{gray!20}{$\textit{on\_stack}[u]\leftarrow\textit{false}$; \label{algo:STCC:l4}}\;
      $\mathcal{D}.\textit{make\_set}(u)$; \label{algo:STCC:l5}\;
      $\textit{rSt}[u]\leftarrow \emptyset$; \label{algo:STCC:l6}\;
      \If{$u\in V_{\ocircle}$}{ \label{algo:STCC:l7}
        $\textit{low\_ready}[u]\leftarrow +\infty$; \label{algo:STCC:l8}\;
        $\textit{cnt}[u]\leftarrow |N^{\text{out}}_\A(u)|$; \label{algo:STCC:l9} \;
      }
    }
    $\textit{time}\leftarrow 1$; $\textit{cSt}\leftarrow \emptyset$;\; \label{algo:STCC:l10}
    \ForEach{$u\in V_{\square}$}{ \label{algo:STCC:l11}
      \If{$\textit{open}[u]=+\infty$}{ \label{algo:STCC:l12}
        $\textit{safe-$\alpha$SCC-visit}(u, \A)$;\label{algo:STCC:l13} \;
      }
    }
    \ForEach{$u\in V_{\ocircle}$ \label{algo:STCC:l14}}{
      \If{$\textit{open}[u]=+\infty$ \label{algo:STCC:l15}}{
      $\textit{open}[u]\leftarrow (\textit{time}\leftarrow\textit{time}+1)$; \label{algo:STCC:l16} \;
      \colorbox{gray!20}{$\textit{$\alpha$lowlink}[u]\leftarrow \textit{open}[u]$; \label{algo:STCC:l18}} \;
      }
    }
    }
    \end{algorithm}

    A revision of the $\alpha$DFS() is next provided in order to decompose a graph into safe-$\alpha$SCC.
    		Based on dsf-$\alpha$DFS() (Algo.~\ref{algo:aDFS}),
    			but still, with three additional and distinctive rules for identifying the components:

    	(r1) All vertices that have already been visited during the search,
    				but whose safe-$\alpha$SCC has not been identified yet, are stored on an auxiliary stack named $\textit{cSt}$ (\ie the component stack);

    	(r2) $\textit{cSt}$ shrinks back when the condition $\textit{$\alpha$lowlink}(v)=\textit{open}[v]$
    				is met at the end of the visiting subprocedure (see Propositions~\ref{prop:alowlink_correct}~and~\ref{prop:STCC:output} below for correctness),
    					at that point a brand new safe-$\alpha$SCC $\C$ is identified and detached.

    	(r3) The \emph{$\ocircle$-attraction-rule} that allows circled vertices to join $\J_\A$ is revised by restriction: now a circled vertex $u\in \textit{rSt}[v]$ joins $\J_\A$ as a child of $v$ if and only if \emph{all} of its out-neighbours are still found on the component stack $\textit{cSt}$; otherwise, $u$ is discarded. (see lines~\ref{algo:STCCs-visit:l22}-26 of Proc.~\ref{algo:STCCs-visit}, particularly, line~23)

    			\textbf{Remark.} The safe-$\alpha$SCC algorithm doesn't need to build the $\alpha$jungle $\J_\A$ explicitly
    		(\ie in principle there might be no real need to store it in memory;
    	still, an $\alpha$jungle is defined implicitly just by following the trace of vertices that are visited and closed during the search.
    	As it will be convenient to consider the $\alpha$jungle $\J_\A$ during the correctness proof,
    		we shall continue refer to it anyways. \qed

    More details follow. The main procedure is now renamed $\textit{safe-$\alpha$SCC()}$ (Algo.~\ref{algo:STCC}).
    Given an $\alpha$graph $\A$ in input, it aims at identifying and printing out all the safe-$\alpha$SCC $\C_1, \ldots, \C_k$ of $\A$ without repetitions.

    A subprocedure named $\textit{safe-$\alpha$SCC-visit()}$ (Proc.~\ref{algo:STCCs-visit}) is also employed for visiting the vertices.

    $\textit{safe-$\alpha$SCC()}$ goes like \textit{dsf-$\alpha$DFS()},
    	the major distinction being that now there is also an addional component stack $\textit{cSt}$ (which is initialized empty) and
    an additional flag vector $\textit{on\_stack}:V\rightarrow \{\textit{true}, \textit{false}\}$ (where all flags are initially $\textit{false}$).

    $\textit{safe-$\alpha$SCC-visit}(v,\A)$ (Proc.~\ref{algo:STCCs-visit}) goes like \textit{dsf-$\alpha$DFS-visit()},
    	but now there are some new features for computing the $\alpha$lowlinks and for keeping track of the components.

    \begin{procedurealgo}[H]
    \caption{$\textit{safe-$\alpha$SCC-visit()}$}\label{algo:STCCs-visit}
    \DontPrintSemicolon
    \nonl \SetKwProg{Fn}{Procedure}{}{}
    \normalsize
    \Fn{$\textit{safe-$\alpha$SCC-visit}(v,\A)$}{
    		\SetKwInOut{Input}{input}
    		\SetKwInOut{Output}{output}
     \Input{A vertex $v\in V$.}
    $\textit{open}[v]\leftarrow (\textit{time}\leftarrow\textit{time}+1)$; \label{algo:STCCs-visit:l1} \;
    \colorbox{gray!20}{$\textit{$\alpha$lowlink}[v]\leftarrow \textit{time}$; \label{algo:STCCs-visit:l2}} \;
    \colorbox{gray!20}{$\textit{cSt}.\textit{push}(v)$; \label{algo:STCCs-visit:l4}} \;
    \colorbox{gray!20}{$\textit{on\_stack}[v]\leftarrow \textit{true}$ \label{algo:STCCs-visit:l5}} \;
    \tcp{Check the in-neighbourhood of $v$}
    \ForEach{$u\in N^{\text{in}}_\A(v)$}{ \label{algo:STCCs-visit:l6}
    	\If{$\textit{open}[u] = +\infty$}{ \label{algo:STCCs-visit:l7}
    		\If{$u\in V_\square$}{ \label{algo:STCCs-visit:l8}
    			$\textit{safe-$\alpha$SCC-visit}(u,\A)$; \label{algo:STCCs-visit:l9} \;
    			\colorbox{gray!20}{$\textit{$\alpha$lowlink}[v]\leftarrow \min(\textit{$\alpha$lowlink}[v], \textit{$\alpha$lowlink}[u])$; \label{algo:STCCs-visit:l10}} \;
    			$\mathcal{D}.\textit{Union}(u, v)$; \label{algo:STCCs-visit:l11} \;
    		}\Else{ 
    		$\textit{low\_ready}[u]\leftarrow \min(\textit{low\_ready}[u], \textit{open}[v])$; \label{algo:STCCs-visit:l13} \;
    		$\textit{cnt}[u]\leftarrow \textit{cnt}[u]-1$; \label{algo:STCCs-visit:l14} \;
    			\If{$\textit{cnt}[u]=0$}{ \label{algo:STCCs-visit:l15}
    		$\textit{low\_}v\leftarrow $ the unique $x$ such that $\textit{open}[x]=\textit{low\_ready}[u]$; \label{algo:STCCs-visit:l16} \;
    		$\gamma\leftarrow \mathcal{D}.\textit{find}(\textit{low\_}v)$; \label{algo:STCCs-visit:l17} \;
    				\If{\colorbox{gray!20}{$\textit{on\_stack}[\gamma] = \textit{true}$ \label{algo:STCCs-visit:l18} }}{
    					\colorbox{gray!20}{$\textit{rSt}[\gamma].\textit{push}(u)$; \label{algo:STCCs-visit:l19}} \;
    				}
    			}
    		}
    	}\ElseIf{\colorbox{gray!20}{$\textit{on\_stack}[u] = \textit{true}$}}{ \label{algo:STCCs-visit:l20}
    		\colorbox{gray!20}{$\textit{$\alpha$lowlink}[v]\leftarrow \min(\textit{$\alpha$lowlink}[v], \textit{open}[u])$; \label{algo:STCCs-visit:l21}} \;
    	}
    }
    \tcp{Check the ready-stack of $v$, \ie $\textit{rSt}[v]$}
    \While{$\textit{rSt}[v]\neq\emptyset$}{ \label{algo:STCCs-visit:l22}
    	$u\leftarrow \textit{rSt}[v].\textit{pop()}$; \tcp{$u\in V_{\ocircle}$} \label{algo:STCCs-visit:l23}
    	\If{$\forall{u'\in N^{\text{out}}_{\A}(u)} \textit{on\_stack}[u']=\textit{true}$}{
    		$\textit{safe-$\alpha$SCC-visit}(u, \A)$; \label{algo:STCCs-visit:l25} \;
    		$\textit{$\alpha$lowlink}[v]\leftarrow \min(\textit{$\alpha$lowlink}[v], \textit{$\alpha$lowlink}[u])$; \label{algo:STCCs-visit:l26} \;
    		$\mathcal{D}.\textit{union}(u, v)$; \label{algo:STCCs-visit:l27} \;
    	}
    }
	$\textit{close}[v]\leftarrow (\textit{time}\leftarrow \textit{time}+1)$;\label{algo:STCCs-visit:l27bis}\;
    \tcp{\colorbox{gray!20}{Check for a new safe-$\alpha$SCC}}
    \If{\colorbox{gray!20}{$\textit{$\alpha$lowlink}[v] = \textit{open}[v]$}}{ \label{algo:STCCs-visit:l28}
    	\colorbox{gray!20}{$\C\leftarrow\emptyset$;}\;\label{algo:STCCs-visit:l29}
    	\Repeat{\colorbox{gray!20}{$u=v$}}{ \label{algo:STCCs-visit:l30}
    		\colorbox{gray!20}{$u\leftarrow \textit{cSt}.pop()$;} \label{algo:STCCs-visit:l31} \;
    		\colorbox{gray!20}{$\textit{on\_stack}[u] \leftarrow \textit{false}$;} \label{algo:STCCs-visit:l32} \;
    		\colorbox{gray!20}{\textit{add} $u$ to $\C$;} \label{algo:STCCs-visit:l33}
    	}
    	\colorbox{gray!20}{$\textit{output}(\C)$;} \label{algo:STCCs-visit:l34} \;
    }
    }
    \end{procedurealgo} 

\newpage
    The idea for computing the $\alpha$lowlinks being that to keep an eye just on the indices
    coming from active frond-arcs and cross-arcs, \ie to pick the minimum $\textit{$\alpha$lowlink}$ that can be found in the following neighbours of the currently visited $v$:
    \[
    N^{\text{in}}_{\A}[\textit{cSt}](v)\doteq
    \big\{u\in N^{\text{in}}_\A(v) \mid
    	u\in \textit{cSt} \text{ when line~\ref{algo:STCCs-visit:l22} of }
    	\textit{safe-$\alpha$SCC-visit}(v, \A)\text{ (Proc.~\ref{algo:STCCs-visit}) is executed} \big\},
    \]
    or from the recursive children of the currently visited vertex $v$, \ie picking the minimum:
    \[\min\{\textit{$\alpha$lowlink}(c)\mid c \text{ is a child of } v \text{ in } \J_\A\}.\]

    In order to identify the components, $\textit{safe-$\alpha$SCC-visit()}$
    tests whether $\textit{$\alpha$lowlink}[v] = \textit{open}[v]$ (this is reminiscent to the SCC algorithm~\cite{Tar72}).
    If that's the case a brand new safe-$\alpha$SCC $\C$ is identified;	thus some vertices $u$ will be repeatedly
    removed from $\textit{cSt}$ and added to $\C$, until $u=v$ ($v$ comprised).

    However, in order for this test to be sound and complete,
      we have to overcome the issues observed before in Examples~\ref{ex:safe-aSCC1}~and~\ref{ex:safe-aSCC}.
    As mentioned in (r3) above, the proposed solution is conceptually simple.
    Soon after that the whole in-neighbourhood of any $v\in V$	has been visited by $\textit{safe-$\alpha$SCC-visit(v, $\A$)}$,
    a circled vertex $u\in \textit{rSt}[v]$ is visited with a recursive call
    (and thus attached to $\J_\A$ as a child of $v$) if and only if \emph{all} of
    its out-neighbours are still on the component stack $\textit{cSt}$ (see lines~\ref{algo:STCCs-visit:l22}-26 of Proc.~\ref{algo:STCCs-visit}, particularly, line~23); otherwise,
    	$u$ is simply discarded and becomes a singleton component at the end of the search.
	
	Intuitively, this works because if some of the neighbours of $u$ is no longer on the stack at that point,
    then (by reasoning inductively) it has already been detached into another component that has been fully identified already,
    	so it would not be possible to guarantee safe-$\alpha$reachability from $u\in V_\ocircle$ to the parent $\pi(u)=v$
    		 within the safe-$\alpha$SCC of $v$ that is currently under formation.
    Along the lines of this intuitive observation, soundness and completeness is formally established in the proofs (see Appendix-A)
      of the forthcoming Propositions~\ref{prop:alowlink_correct}~and~\ref{prop:STCC:output}.

    \textbf{Remark.}
    Notice that with (r3), the $\alpha$jungle underlying safe-$\alpha$SCC (Algo.~\ref{algo:STCC})
    might be different w.r.t. the $\alpha$jungle $\J_\A$ that would have been built by running $\alpha$DFS:
    like if some of the $\alpha$palm-trees of $\J_\A$ were pruned and partitioned into subtrees,
    where the cutting points are precisely those arcs $(u,v)\in A_{\pi}$ on circled vertices $u\in \textit{rSt}[v]\cap V_\ocircle$ that can no longer join $\J_\A$ because at that point $u'\not\in\textit{rSt}[v]$ for some $u'\in N^{\text{out}}_\A(u)$).
    However, a moment's reflection reveals that this is just a minor structural refinement of $\J_\A$, the
    resulting graph structure still satisfies the foundamental properties of an $\alpha$jungle given in Definitions~\ref{def:tr-pt}~and~\ref{def:tr-jungle}.
    The only partial exception being property ($\alpha$jn-4), now there might be circled vertices $u$ that can no longer join
  	$\J_\A$ even if all out-neighbours belong to the same $\alpha$palm-tree (\cfr vertex $g$ in Example~\ref{ex:safe-aSCC1} and vertex $f$ in Example~\ref{ex:safe-aSCC}) -- but this property would be still satisfied if only we imagine that, as soon as a safe-$\alpha$SCC is identified,
  	the corresponding subtree detaches from the maximal tree to which it belongs.
    With this in mind the resulting graph structure is really an $\alpha$jungle, so we will continue to denote it by $\J_\A$ as the local context of $\textit{safe-$\alpha$SCC()}$ (Algo.~\ref{algo:STCC}) supersedes possible confusion.
    \qed

    Let us now provide some more implementation details of $\textit{safe-$\alpha$SCC-visit(v,$\A$)}$ (Proc.~\ref{algo:STCCs-visit}).

    At the very beginning, the vertex $v$ which is currently being visited is pushed on top of the component stack $\textit{cSt}$ and
     flagged $\textit{on\_stack}[v]\leftarrow\textit{true}$ (see lines~\ref{algo:STCCs-visit:l4}-\ref{algo:STCCs-visit:l5} of Proc.~\ref{algo:STCCs-visit}).

    Then, whenever some in-neighbour $u\in N^{\text{in}}_\A(v)$ is visited, and as soon as
    	the child recursive call $\textit{safe-$\alpha$SCC-visit}(u, \A)$ returns, the $\textit{$\alpha$lowlink}$ is updated as follows:
    \[\textit{$\alpha$lowlink}[v]\leftarrow \min(\textit{$\alpha$lowlink}[v], \textit{$\alpha$lowlink}[u])
      \;\;\;\;\;\; \text{(see lines~9~and~25 of Proc.~\ref{algo:STCCs-visit} )}\]
    besides executing a $\mathcal{D}.\textit{Union}(u, v)$ to update the disjoint-set forest as before in \textit{dsf-$\alpha$DFS-visit()}.

    Next, when exploring the in-neighbourhood $N^{\text{in}}_\A(v)$ aiming at visiting unexplored vertices:
    if an in-neighbour $u\in N^{\text{in}}_\A(v)\cap V_\ocircle$ is still unvisited (\ie if $\textit{open}[u] = +\infty$),
    and it happens that $\textit{cnt}[u]=0$, then $u$ is pushed to the ready stack $\textit{rSt}[\gamma]$
      if and only if $\textit{on\_stack}[\gamma] = \textit{true}$
    (we now have the additional stack $\textit{cSt}$ flagged by $\textit{on\_stack}$, and indeed we can use it to check whether $\gamma$ is still active);
    else, if $u\in N^{\text{in}}_\A(v)$ has been already visited (\ie if $\textit{open}[u]\neq +\infty$), and if $\textit{on\_stack}[u]=\textit{true}$,
    then the $\textit{$\alpha$lowlink}$ of $v$ is updated as follows:
    \[\textit{$\alpha$lowlink}[v]\leftarrow \min(\textit{$\alpha$lowlink}[v], \textit{open}[u])
      \;\;\;\;\;\; \text{(see lines~\ref{algo:STCCs-visit:l20}-\ref{algo:STCCs-visit:l21} of Proc.~\ref{algo:STCCs-visit} )}\]

   Soon after that the in-neighbourhood of $v$ has been visited (see lines~\ref{algo:STCCs-visit:l22}-\ref{algo:STCCs-visit:l27} of Proc.~\ref{algo:STCCs-visit}),
    $\textit{rSt}[v]$ is managed almost as it was in \textit{dsf-$\alpha$DFS-visit()};
    the only difference being that, as already mentioned,
    	a circled vertex $u\in \textit{rSt}[v]$ is visited with a recursive call if and only if
     all of its out-neighbours are still on $\textit{cSt}$.
     Of course  when such an $u$ gets visited the disjoint-set forest is updated as usual by $\mathcal{D}.\textit{Union}(u, v)$,
     but now also the $\textit{$\alpha$lowlink}$ of $v$ is updated by taking the minimum, \ie
    $\textit{$\alpha$lowlink}[v]\leftarrow \min(\textit{$\alpha$lowlink}[v], \textit{$\alpha$lowlink}[u])$.

    This concludes the description of $\textit{safe-$\alpha$SCC-visit()}$ (Proc.~\ref{algo:STCCs-visit}) and that of Algorithm~\ref{algo:STCC}.

    Let us revise Examples~\ref{ex:safe-aSCC1}~and~\ref{ex:safe-aSCC} to illustrate how safe-$\alpha$SCC (Algo.~\ref{algo:STCC}) runs on the $\alpha$graphs $\A_1$ and $\A_2$; the resulting $\alpha$jungles are shown in \figref{fig:ex:safe-aSCC:agraphrev} and \figref{fig:ex:safe-aSCC:algorunrev} (respectively).

    Concerning Example~\ref{ex:safe-aSCC1}, \figref{fig:ex:safe-aSCC:agraphrev} shows that all vertices in the safe-$\alpha$SCC $\C_1=\{a,h,c\}$ have $\alpha$lowlink equal to $\textit{open}[a]=1$ and all other vertices are singletons. Now $\C_1$ induces a subtree in the $\alpha$palm-tree.

    Similarly for Example~\ref{ex:safe-aSCC}, \figref{fig:ex:safe-aSCC:algorunrev} shows that all vertices in the
    safe-$\alpha$SCC $\C_1=\{c,e\}$ have $\alpha$lowlink equal to $\textit{open}[c]=2$,
      and all vertices in $\C_2=\{g,h\}$ have an $\alpha$lowlink equal to $\textit{open}[g]=6$.
    All of the remaining vertices are singleton safe-$\alpha$SCCs.
    Again both $\C_1=\{c,e\}$ and $\C_2=\{g,h\}$ induce a subtree, rooted at $c$ and $g$ respectively.

    \begin{figure}[t!]
    \centering
    \begin{minipage}{\columnwidth}
    \centering
    \subfloat[The revised $\alpha$jungle $\mathcal{\J}_{\A_1}$ traced by safe-$\alpha$SCC(), showing timestamps and $\alpha$lowlinks\label{fig:ex:safe-aSCC:agraphrev}]{
    \begin{tikzpicture}[arrows=->, scale=.75]
    	\node[squa, label={[yshift=-.5ex] \tiny $1|14|(1)$}] (tAA) {$a$};
    	\node[squa, below=of tAA, label={[xshift=-3ex, yshift=-.5ex] \tiny $2|9|(2)$}] (tA) {$b$};
    	\node[squa, right=of tA, label={[xshift=3.25ex, yshift=-.5ex] \tiny $10|13|(1)$}, yshift=0ex, xshift=0ex] (tC) {$c$};
    	\node[squa, below=of tA, label={[xshift=2.75ex, yshift=-5ex] \tiny $5|6|(5)$}, yshift=0ex, xshift=0ex] (tB) {$e$};
    	\node[circ, below=of tB, label={[xshift=1.5ex, yshift=-.5ex] \tiny $15|16|(15)$}, yshift=-6ex, xshift=0ex] (tD) {$g$};
    	\node[squa, below right=of tA, label={[xshift=1.5ex, yshift=-.5ex] \tiny $7|8|(7)$}, yshift=0ex, xshift=1ex] (tE) {$f$};
    	\node[squa, left=of tB, label={[xshift=-1ex, yshift=-.5ex] \tiny $3|4|(3)$}, yshift=0ex, xshift=-1ex] (tF) {$d$};
    	\node[squa, below=of tD, label={[xshift=0ex, yshift=-5ex] \tiny $11|12|(1)$}, yshift=0ex, xshift=0ex] (tG) {$h$};

    	\draw[thick] (tA) to [xshift=0ex, yshift=0ex] node[xshift=2ex, yshift=0ex] {\tiny tree} (tAA);
    	\draw[thick] (tB) to [xshift=0ex, yshift=0ex] node[xshift=2ex, yshift=-.5ex] {\tiny tree} (tA);
    	\draw[thick] (tC) to [bend right=40, xshift=0ex, yshift=0ex] node[xshift=2ex, yshift=0ex] {\tiny tree} (tAA);
    	\draw[dotted] (tD) to [bend left=40, xshift=0ex, yshift=0ex] node[below, xshift=0ex, yshift=1.5ex] {} (tF);
    	\draw[dotted] (tD) to [bend right=40, xshift=0ex, yshift=0ex] node[below, xshift=0ex, yshift=1.5ex] {} (tE);
    	\draw[dotted] (tD) to [bend right=0, xshift=0ex, yshift=0ex] node[below, xshift=0ex, yshift=1.5ex] {} (tB);
    	\draw[thick] (tE) to [xshift=0ex, yshift=0ex] node[right,xshift=0ex, yshift=0ex] {\tiny tree} (tA.east);
    	\draw[dashed] (tG) to [xshift=0ex, yshift=0ex] node[left,xshift=2ex, yshift=0ex] {} (tD);
    	\draw[thick] (tF) to [xshift=0ex, yshift=0ex] node[xshift=-2ex, yshift=0ex] {\tiny tree} (tA.west);
    	\draw[dashed] (tAA.west) to [out=170, in=180, xshift=-5ex, yshift=0ex] node[left,xshift=.5ex, yshift=0ex] {\tiny frond} (tG.west);
    	\draw[thick] (tG.east) to [out=0, in=0, xshift=-5ex, yshift=0ex] node[right,xshift=0ex, yshift=0ex] {\tiny tree} (tC.east);
    \end{tikzpicture}
    }
    \qquad
    \subfloat[The revised $\alpha$jungle $\mathcal{\J}_{\A_2}$ traced by safe-$\alpha$SCC(), showing timestamps and $\alpha$lowlinks\label{fig:ex:safe-aSCC:algorunrev}]{
    \begin{tikzpicture}[arrows=->, scale=.75]
    	\node[squa, label={above : \tiny $1|10|(1)$}] (tA) {$a$};
    	\node[squa, below=of tA, label={[xshift=3.75ex, yshift=-2.5ex] \tiny $2|5|(2)$}, yshift=0ex, xshift=-1ex] (tB) {$c$};
    	\node[squa, below=of tB, label={[xshift=3.75ex, yshift=-2.5ex] \tiny $3|4|(2)$}, yshift=2ex, xshift=0ex] (tC) {$e$};
    	\node[circ, below=of tB, label={right: \tiny $15|16|(15)$}, yshift=-7ex, xshift=0ex] (tD) {$f$};
    	\node[squa, below right=of tA, label={[xshift=1ex] \tiny $8|9|(8)$}, yshift=0ex, xshift=1ex] (tE) {$d$};
    	\node[squa, left=of tB, label={[xshift=-.25ex] \tiny $6|7|(6)$}, yshift=0ex, xshift=-1ex] (tF) {$b$};
    	\node[squa, below=of tD, label={right : \tiny $11|14|(11)$}, yshift=0ex, xshift=0ex] (tG) {$g$};
    	\node[squa, below=of tG, label={right : \tiny $12|13|(11)$}, yshift=0ex, xshift=0ex] (tH) {$h$};

    	\draw[thick] (tB) to [xshift=0ex, yshift=0ex] node[xshift=2ex, yshift=-1ex] {\tiny tree} (tA);
    	\draw[dotted] (tD) to [xshift=0ex, yshift=0ex] node[below, xshift=0ex, yshift=1ex] {} (tC);
    	\draw[thick] (tC) to [xshift=0ex, yshift=0ex] node[below, xshift=2ex, yshift=1.25ex] {\tiny tree} (tB);
    	\draw[dashed] (tB) to [bend right=55, xshift=0ex, yshift=0ex] node[below, xshift=-2.25ex, yshift=1.25ex] {\tiny frond} (tC);
    	\draw[dotted] (tD) to [bend left=40, xshift=0ex, yshift=0ex] node[below, xshift=0ex, yshift=1.5ex] {} (tF);
    	\draw[dotted] (tD) to [bend right=40, xshift=0ex, yshift=0ex] node[below, xshift=0ex, yshift=1.5ex] {} (tE);
    	\draw[thick] (tE) to [xshift=0ex, yshift=0ex] node[right,xshift=-.25ex, yshift=0ex] {\tiny tree} (tA.east);
    	\draw[dashed] (tG) to [xshift=0ex, yshift=0ex] node[right,xshift=0ex, yshift=0ex] {} (tD);
    	\draw[thick] (tH) to [xshift=0ex, yshift=0ex] node[right,xshift=-.5ex, yshift=0ex] {\tiny tree} (tG);
      \draw[thick] (tF) to [xshift=0ex, yshift=0ex] node[left,xshift=0ex, yshift=0ex] {\tiny tree} (tA.west);
    	\draw[dashed] (tG) to [bend right=40, xshift=0ex, yshift=0ex] node[left,xshift=.75ex, yshift=0ex] {\tiny frond} (tH);
    	\draw[dashed] (tA.west) to [out=170, in=170, xshift=-8ex, yshift=	0ex] node[left,xshift=1ex, yshift=0ex] {\tiny cross} (tH.west);
    \end{tikzpicture}
    }
    \end{minipage}
    \caption{The $\alpha$jungles of Examples~\ref{ex:safe-aSCC1}~and~\ref{ex:safe-aSCC} as revised by safe-$\alpha$SCC() (Algo.~\ref{algo:STCC})}\label{fig:ex:safe-aSCC}
    \end{figure}

    In summary, safe-$\alpha$SCC() (Algo.~\ref{algo:STCC}) enjoys the following major properties (as proved in Appendix~A).

    \begin{Prop}\label{prop:alowlink_correct}
    Assume \textit{safe-$\alpha$SCC()} (Algo.~\ref{algo:STCC}) runs on a given input $\alpha$graph $\A$,
    and let $\J_\A$ be the corresponding $\alpha$jungle, then the $\textit{$\alpha$lowlink}_{\J_\A}$ indexing is correctly computed as given in Definition~\ref{def:lowlink}.
    \end{Prop}

    \begin{Prop}\label{prop:low-link-root}
    Let $\J_\A$ be an $\alpha$jungle constructed when \textit{safe-$\alpha$SCC()} (Algo.~\ref{algo:STCC}) runs on the $\alpha$graph~$\A$.
    If $\C$ is a safe-$\alpha$SCC of $\A$, then $\C$ induces a subtree in the forest of~$\J_\A$.
    \end{Prop}

    Since any safe-$\alpha$SCC() $\C$ induces a subtree in~$\J_\A$, we can identify the roots of the subtrees.
    \begin{Prop}\label{prop:low-link-root}
    Let $\J_\A$ be an $\alpha$jungle constructed when \textit{safe-$\alpha$SCC()} (Algo.~\ref{algo:STCC}) runs on the $\alpha$graph~$\A$.
    Let $\textit{open}[]:V\rightarrow \N$ be the corresponding timestamp, and let $\textit{$\alpha$lowlink}_{\J_\A}:V\rightarrow \N$ be as in Definition~\ref{def:lowlink}.
    Any vertex $v\in V$ is the root of some safe-$\alpha$SCC of $\A$ if and only if $\textit{$\alpha$lowlink}_{\J_\A}(v)=\textit{open}[v]$.
    \end{Prop}

    As a consequence safe-$\alpha$SCC() (Algo.~\ref{algo:STCC}) is correct, the inductive proof is sketched in Appendix~A. Concerning time complexity, Theorems~\ref{thm:aDFS:complexityRAM}~and~\ref{thm:aDFS:complexityPTM} already imply that it is linear on RAMs and, at least, Ackermann-linear on pointer machines.

    \section{Related and Future Works}\label{ref:relatedfutureworks}
Firstly let us discuss about possible lines of investigation concerining the time complexity of \textit{dsf-$\alpha$DFS()} and \textit{safe-$\alpha$SCC()} on pointer machines.
As already mentioned in Section~\ref{subsect:lca-dsf}, our current upper-bound is Ackermann-linear-time which comes from applying the dsf union-find data structure.
We observe that our proposed usage of the dsf union-find data structure falls within a rather special case of the incremental-tree set-union problem studied in~\cite{GT85}, \ie the special case in which the union operations are always done in a post-ordering, simultaneously with the backtracking of the depth-first search.
This might be amenable \eg to the techniques developed in \cite{BuchsbaumGKRTW08}, where a linear-time algorithm for the off-line LCA problem was offered.
That algorithm does not seem to extend to the incremental-tree set-union problem in it's full generality (\ie where the union operations can arrive incrementally in any order), still, in this case one should investigate about the very special case in which all of the union operations arrive in a post-ordering.

In the neighbourhood of possibly related lines appearing in the literature, of course we find the $\alpha$SCCs and the MECs~\cite{ChaKriHen2014} decompositions which may offer interesting directions of investigation for future works.

  Let $\A$ be an $\alpha$graph on vertex set $V$ and arc set $A$. As already mentioned, if $U\subseteq V$ is a safe-$\alpha$SCC, then $U$ is an $\alpha$sc set,
  thus $U$ is included in some $\alpha$SCC (though it may not correspond to the whole $\alpha$SCC as it may lack of maximality).
  On the other hand, if $U\subseteq V$ is an $\alpha$SCC, then any two vertices $u,v\in U$ are strongly-connected in the original input directed graph $G_\A$,
  thus $U$ is included in some SCC of $G_\A$ (but it may lack of maximality as well). Notice that the converse inclusions do not hold generally.
  Also recall that the $\alpha$SCC decomposition can be found in time $O(|V||A|)$ by computing $\square$-attractors;
 one natural question at this point is whether our proposed theory can possibly help improving the latter time complexity upper bound.
  We leave open that question, and at the same time we observe what follows.
  As a simple variation of our safe-$\alpha$SCC() algorithm, suppose just to drop the r3) rule: \ie
  assume a circled vertex $u\in \textit{rSt}[v]$ joins $\J_\A$ as a child of $v$ anyway if the LCA $\gamma$ exists, as it was for $\alpha$DFS(),
   without checking whether \emph{all} of $u$ out-neighbours are still found on the component stack $\textit{cSt}$.
  The conjecture may be this could be fine to find the $\alpha$SCC decomposition,
  unfortunately it's not difficult to provide counterexamples that this is not enough on its own: it still seems necessary to run what is basically an attractor computation from a big fraction of the vertices, falling in $O(|V||A|)$ running time.

    \begin{figure}[h!]
    \centering
    \begin{tikzpicture}[node distance=1cm]
    \title{The relation between the four notions of strongly-connectedness}
    \node(A4) [rectangle,draw] [label={$\Theta(n+m)$\cite{Tar72}}] {SCC};
    \node(V4) [rectangle,draw] [below right = of A4] [label={[xshift=10ex, yshift=-.5ex] \stackanchor{$O(\min(m^{3/2}, n^2))$~\cite{ChaKriHen2014}}{$\tilde{O}(m)$ (expected time)~\cite{ChaHen2019}}}] {MEC};
    \node(C3) [rectangle,draw] [below left = of A4] [label={[xshift=-1.5ex, yshift=-.5ex]$O(mn)$}] {$\alpha$SCC};
    \node(1) [rectangle,draw] [yshift=-1cm, below = of A4] [label={[yshift=-10ex]
													\stackanchor{$\Theta(n+m)$ on RAMs}{$O(n+m\alpha(m,n))$ on PTMs} }] {safe-$\alpha$SCC};
    \draw[->](V4) -- (A4);
    \draw[->](1) -- (V4);
    \draw[->](1) -- (C3);
    \draw[->](C3) -- (A4);
    \end{tikzpicture}
    \caption{The four notions of strongly-connectedness ordered by set inclusion,
    each one showing the time complexity of the corresponding best currently known decomposition algorithms for $m=|A|,n=|V|$.}\label{fig:fanfour}
    \end{figure}

    Let us now consider the MECs~\cite{ChaKriHen2014} decomposition as a possibly related line of interest for future works.
    Given a directed graph $G = (V, A)$ with a finite set $V$ of vertices, directed arcs $A \subseteq V \times V$
    and a partition $(V_\square, V_P)$ of the vertex set $V$, an end-component $U \subseteq V$ is a set of vertices such that:
    (i) the graph $(U, A \cap (U \times U))$ is strongly-connected; (ii) for all $u \in U\cap V_P$ and all $(u, v) \in A$ we have $v \in U$;
    and (iii) either $|U| \geq 2$, or $U = \{v\}$ and there is a self-loop at $v$ (\ie $(v, v) \in A$).
    Observe that if $U\subseteq V$ is a safe-$\alpha$SCC and $|U|\geq 2$, then $U$ is an end-component according to the above definition.
    Of course the converse is not generally true, since, $U$ may well be an end-component and strongly-connected as a directed graph,
    but Player~$\ocircle$ may possibly prevent Player~$\square$ to visit one particular vertex from some moment in time onwards.
    On the other hand, every maximal end-component is included in some SCC of $G_\A$ (again, it may lack maximality and the converse doesn't hold generally).
    A moment's reflection reveals that $\alpha$SCCs and MECs are generally uncomparable (in the sense that no one implies the other).
    The relationship between the four notions is depicted in~\figref{fig:fanfour}, with their currently known time complexities.

    We leave open whether our proposed theory can possibly help speeding up the efficient algorithms for MECs as devised in~\cite{ChaKriHen2014},
    at least by offering a novel approach for some kernelization (pre-processing).

\section{Conclusion}\label{sect:conclusion}
We expect that the proposed theory and the corresponding linear-time decomposition algorithm
could possibly pave the way for speeding up computations in other problems concerning \eg formal verification and infinite games on graphs.

Future works will likely investigate further on this way.

\bibliographystyle{acm}
\bibliography{biblio}

\appendix
\section{Proof that Safe-$\alpha$SCC (Algo.~\ref{algo:STCC}) is correct}
For the sake of the argument let us recall that, during an execution of $\textit{safe-$\alpha$SCC($\A$)}$ (Algo.~\ref{algo:STCC}),
any vertex $v$ is said \emph{open} when $\textit{safe-$\alpha$SCC-visit}(v,\A)$ (Proc.~\ref{algo:STCCs-visit}) assigns $\textit{open}[v]$ at line~\ref{algo:STCCs-visit:l1},
remaining so until $v$ is \emph{closed} (\ie recursive calls included, until the closing timestamp $\textit{close}[v]$ is finally assigned at line~\ref{algo:STCCs-visit:l27bis}).

In order to prove Proposition~\ref{prop:STCC:output}, which basically asserts that safe-$\alpha$SCC (Algo.~\ref{algo:STCC}) is correct, let us dive into the following technical lemmata by reasoning inductively on the graph structures.

\begin{Def}
Let	$\A$ be an $\alpha$graph on vertex set $V$. Pick any two $u,v\in V$ where $u$ is a descendant of $v$ in the forest of $\J_\A$.
The closed interval $[u,v]_{\J_\A}$ denotes the vertex subset of all the ancestors of $u$ that are also descendants of $v$ in the forest of $\J_\A$
(\ie the extremes $u,v$ are included).
\end{Def}
\begin{Lem}\label{lem:precorrectness1}
Assume that $\textit{safe-$\alpha$SCC()}$ (Algo.~\ref{algo:STCC}) runs on a given $\alpha$graph $\A$ on vertex set $V$.
Let $u\in V$ be a proper descendant of $\gamma\in V$ in the forest of $\J_\A$ such that $u$ is still on the component stack $\textit{cSt}$ when \textit{safe-$\alpha$SCC-visit($\gamma,\A$)} closes $\gamma$ (\ie say at line~\ref{algo:STCCs-visit:l27bis} of $\textit{safe-$\alpha$SCC-visit}(\gamma,\A)$) (Proc.~\ref{algo:STCCs-visit}).

Then there exists a proper ancestor $\gamma'$ of $u$ (possibly, $\gamma'=\gamma$) such that all the vertices in $[u,\gamma']_{\J_\A}$ lie within the same safe-$\alpha$SCC of $\A$, \ie such that $[u,\gamma']_{\J_\A}\subseteq \C_u$.
\end{Lem}

\begin{figure}[h!]
	\centering
\begin{tikzpicture}[scale=.6]
		\node [squa] (0) at (0, 7) {$r$};
		\node [squa] (1) at (-9, 0) {$u$};
		\node [style=none] (2) at (9, 0) {};
		\node [style=none] (3) at (2, 0) {};
		\node [squa] (5) at (-1.5, 5.75) {$\gamma$};
		\node [squa] (6) at (-4.75, 3.25) {$\gamma'$};
		\node [squa] (7) at (-7, 1.5) {$x$};

		\draw [thick] (1.north east) to (7.south west);
		\draw [thick] (7.north east) to (6.south west);
		\draw [thick] (0.south west) to (5.north east);
		\draw [thick] (5.south west) to (6.north east);
		\draw [arrows=->, dashed, bend left=300, looseness=1.25] (6.north west) to node[above,xshift=-.5ex, yshift=0ex] {\tiny frond} (1.north);
		\draw (0.south east) to (3.center);
		\draw (3.center) to (1.east);
\end{tikzpicture}
\caption{An illustration of the \emph{Base Case} in the proof of Lemma~\ref{lem:precorrectness1} (part I)}\label{fig:correctness1}
\end{figure}

\begin{proof}
We argue by fixing $u$ and let $\gamma$ be any \emph{satisfying} ancestor
	(\ie one that satisfies the hypothesis).
The proof goes by strong induction on the order in which the vertices $u$ are closed during the execution of $\textit{safe-$\alpha$SCC}(\A)$,
	let it be $u=v_1, \ldots, v_i, \ldots, v_{|V|}$.

	\emph{Base Case: $u=v_1$.}
	The first closed vertex $u=v_1$ must be a leaf in the forest of $\J_{\A}$.
Since $u$ is still on the component stack $\textit{cSt}$ when
	\textit{safe-$\alpha$SCC-visit($\gamma,\A$)} closes $\gamma$, then,
		by lines~\ref{algo:STCCs-visit:l2},\ref{algo:STCCs-visit:l10},\ref{algo:STCCs-visit:l21},\ref{algo:STCCs-visit:l26},\ref{algo:STCCs-visit:l28}:
\begin{equation}\alpha\textit{lowlink}[u] < \textit{open}[u];\tag{$*$}\end{equation}

Since $\alpha\textit{lowlink}[u] < \textit{open}[u]$, by lines~\ref{algo:STCCs-visit:l2},
	\ref{algo:STCCs-visit:l21} of $\textit{safe-$\alpha$SCC-visit}(v_i, \A)$ (Proc.~\ref{algo:STCCs-visit}), it holds that:
	\begin{itemize}[label={--}]
		\item there is one vertex $\gamma'\neq u$ such that $\textit{open}[\gamma']=\alpha\textit{lowlink}[u]$;
		\item $\gamma'$ must be a proper ancestor of $u$;
		\item thus, $(\gamma', u)$ is a frond arc.
		\end{itemize}

Now, let $x$ be any ancestor of $u$ (possibly $x=u$ or $x=\gamma'$).
We claim that $x$ can't be a circled vertex, so it must be $x\in V_\square$. \figref{fig:correctness1} illustrates the situation.

Indeed, suppose $x\in V_\ocircle$ for the sake of contradiction, consider any out-neighbour
$y\in N^{\text{out}}_\A(x)$ which is not the parent of $x$ in $\J_{\A}$ (notice $y$
	exists because \wwlog $|N^{\text{out}}_\A(x)|\geq 2$ if $x\in V_\ocircle$),
	so $(x,y)$ is just a stalk-arc. By the \emph{$\ocircle$-attraction-rule}, $y$ must have been closed before $x$;
this is absurd as $u$ is the first closed vertex and $u\neq y$. So $x\in V_\square$.

Since any ancestor of $u$ that is also a descendant of $\gamma'$ lies in $V_\square$, and since $(\gamma', u)$ is a frond arc,
	then $u$ and $\gamma'$ together with all the ancestors of $u$ that are also descendants of $\gamma'$
	clearly they form a safe-$\alpha$sc set, so they all lie within the same safe-$\alpha$SCC of $\A$.

	\begin{figure}[h!]
		\centering
	\begin{tikzpicture}[scale=.8]
			\node [squa] (0) at (0, 9) {$r$};
			\node [style=none] (1) at (-11, 0) {};
			\node [style=none] (2) at (2, 0) {};
			\node [style=none] (3) at (-1, 7) {};
			\node [squa] (4) at (-1.2, 7) {$\gamma$};
			\node [squa] (5) at (-2, 5.75) {$\gamma'$};
			\node [squa] (6) at (-3, 4.5) {$u$};
			\node [squa] (7) at (-4, 3) {$\pi(c)$};
			\node [circ] (8) at (-6.5, 1) {$c$};
			\node [squa] (9) at (-6.25, 2.625) {$c'_1$};
			\node [circ] (10) at (-4.75, 1.75) {$c'_2$};
			\node [squa] (11) at (-3, 1.25) {$c'_k$};
			\node [squa] (13) at (-8.5, 0.5) {$x$};
			\node [squa] (15) at (-6.7, 4) {$u'$};

	 		\draw [bend right=20] (0.south west) to node[above,xshift=6ex, yshift=8ex] {$\mathcal{P}_x$} (1.center);
			\draw (0.south east) to (2.center);
			\draw (1.center) to (2.center);
			\draw [thick] (0.south) to (4.north);
			\draw [thick] (4.south) to (5.north east);
			\draw [thick] (5.south west) to node[below,xshift=8ex, yshift=1ex] {$[u,\gamma']_{\J_\A}\subseteq \C_{u}$} (6.north);
			\draw [thick] (6.south west) to (7.north);
			\draw [thick] (7.west) to (9.east);
			\draw [thick] (7.south) to (10.north);
			\draw [thick] (7.south east) to (11.north);
			\draw [thick, bend left, looseness=0.5] (8.north) to (7.south west);
			\draw [arrows=->,dotted,bend left, looseness=0.5] (8.north) to node[above,xshift=-1ex, yshift=-1ex] {\tiny stalk} (9.south);
			\draw [arrows=->,dotted] (8.north east) to node[above,xshift=0ex, yshift=-2ex] {\tiny stalk} (10.west);
			\draw [arrows=->,dotted, bend right, looseness=0.75] (8.east) to node[above,xshift=0ex, yshift=-2ex] {\tiny stalk} (11.west);
			\draw [thick] (5.west) to (15.east);
			\draw [thick] (8.south west) to (13.east);
			\draw [arrows=->, dashed, bend right] (15.south west) to node[above,xshift=1.5ex, yshift=0ex] {\tiny frond or cross} (13.north);
			\draw [arrows=->,thick, dashed,bend left=300, looseness=0.5] (7.north west) to (9.north);
			\draw [arrows=->,thick, dashed,bend left=45, looseness=0.75] (7.east) to node[above,xshift=8ex, yshift=0ex] {$[c',\pi(c)]_{\J_\A}\subseteq \C_{c'}$} (11.north east);
			\draw [arrows=->,thick, dashed,bend left, looseness=1.75] (7.south) to (10.north east);
			\draw [arrows=->, thick, dashed, bend right=300, in=120, out=-70, looseness=.75]
				(5.north west) to node[above,xshift=6.25ex, yshift=6.25ex] {$[u',\gamma']_{\J_\A}\subseteq\C_{u'}$} (15.south east);
	\end{tikzpicture}
	\caption{An illustration of the \emph{Inductive Step} in the proof of Lemma~\ref{lem:precorrectness1}}\label{fig:correctness3}
	\end{figure}

\emph{Inductive Step: $u=v_i$ for some $i>1$.}
We shall leverage on the strong induction hypothesis that the thesis is true for every $v_j$ with $j<i$ and any satisfying ancestor of $v_j$.

\figref{fig:correctness3} helps following the argument.
Firstly notice that $\textit{$\alpha$lowlink}[v_i]$ can be assigned either at line~\ref{algo:STCCs-visit:l2}, \ref{algo:STCCs-visit:l10},
	\ref{algo:STCCs-visit:l21}, \ref{algo:STCCs-visit:l26} of $\textit{safe-$\alpha$SCC-visit}(v_i, \A)$ (Proc.~\ref{algo:STCCs-visit}).
Since $u$ is still on the component stack $\textit{cSt}(\gamma)$ when \textit{safe-$\alpha$SCC-visit($\gamma, \A$)} closes $\gamma$, then,
	by lines~\ref{algo:STCCs-visit:l2},\ref{algo:STCCs-visit:l10},\ref{algo:STCCs-visit:l21},\ref{algo:STCCs-visit:l26},\ref{algo:STCCs-visit:l28}:
\begin{equation}\alpha\textit{lowlink}[u] < \textit{open}[u];\tag{$*$}\end{equation}

 Since $\alpha\textit{lowlink}[u] < \textit{open}[u]$, by lines~\ref{algo:STCCs-visit:l2}, \ref{algo:STCCs-visit:l10},
 	\ref{algo:STCCs-visit:l21}, \ref{algo:STCCs-visit:l26} of $\textit{safe-$\alpha$SCC-visit}(v_i, \A)$ (Proc.~\ref{algo:STCCs-visit}), there must be:
	\begin{itemize}[label={--}]
\item one vertex $u'\neq u$ such that $\textit{open}[u']=\alpha\textit{lowlink}[u]$, so $\textit{open}[u']<\textit{open}[u]$;
\item one descendant $x$ of $u$ (possibly, $x=u$)
such that $\alpha\textit{lowlink}[x] = \textit{open}[u']$ and $(u',x)$ is either a frond or a cross-arc in $\J_\A$.
		\end{itemize}
Thus, $u'$ was still on $\textit{cSt}$ when \textit{safe-$\alpha$SCC-visit($x,\A$)} closed $x$.
Then let $\gamma'$ be the LCA of ${u', u}$ in the forest of $\J_\A$ (possibly, $\gamma'=u'$, but $\gamma'\neq u$).
Also, since $u'$ was still on $\textit{cSt}$ when \textit{safe-$\alpha$SCC-visit($x,\A$)} closed $x$, and since $x$ is a descendant of $u$, then
 the fact that $\textit{open}[u']<\textit{open}[u]$ implies that $u'$ is still on $\textit{cSt}(\gamma')$ when \textit{safe-$\alpha$SCC-visit($\gamma',\A$)} closes $\gamma'$ too. Before ending the proof, let us show two more claims.

\begin{itemize}
	\item[Claim 1] $[u',\gamma']_{\J_\A}\subseteq \C_{u'}$.

If $\gamma'=u'$, the thesis is obvious. So, assume \wwlog $\gamma'\neq u'$.
Then, since $u'$ was still on $\textit{cSt}$ when \textit{safe-$\alpha$SCC-visit($x,\A$)} closed $x$,
and $x$ is a descendant of $u$, then $u'$ must have been closed before $u$.
So the induction hypothesis applies to $u'$ and its ancestor $\gamma'$, thus $[u',\gamma']_{\J_\A}\subseteq \C_{u'}$.

\item[Claim 2] If $c\in [x, \gamma']_{\J_\A}\cap V_\ocircle$, then for every out-neighbour $c'\in N^{\text{out}}_\A(c)$ such
 	that $c'\neq \pi(c)$ (\ie such that $(c,c')$ is just a stalk-arc and not a tree-arc), it holds $[c',\pi(c)]_{\J_\A}\subseteq \C_{c'}$.

	Take any $c'\in N^{\text{out}}_\A(c)$ such that $c'\neq \pi(c)$ and observe that, by the \emph{$\ocircle$-attraction-rule}, $c'$ must have been already closed when $c$ joined $\J_\A$.
	Thus, since $x$ is a child of $c$, then $c'$ must have been closed before $x$ was.
	Since $x$ is a child of $u$, then $c'$ must have been closed before $u$ was.
	Moreover, by the (r3) revision of the \emph{$\ocircle$-attraction-rule} (\ie by lines~\ref{algo:STCCs-visit:l22}-26 of Proc.~\ref{algo:STCCs-visit}, particularly, line~23), $c'$ was still on $\textit{cSt}$ when $c$ joined $\J_\A$,
	so $c'$ is still on $\textit{cSt}$ even when \textit{safe-$\alpha$SCC-visit($\pi(c),\A$)} closes $\pi(c)$.
	Therefore, the induction hypothesis applies to $c'$ with parent $\pi(c)$, so $[c',\pi(c)]_{\J_\A}\subseteq \C_{c'}$.
	\end{itemize}
	We are now in the position to show that $[u,\gamma']_{\J_\A}\subseteq \C_{u}$.

	Recall $(u',x)$ is either a frond or a cross-arc,
		so by Claim~1 $x$ is $\C_{u'}\cup\{x\}$-safe-$\alpha$reachable from $\gamma'$.
		Moreover $x$ is a descendant of $u$, which is a descendant of $\gamma'$.
	Proposition~\ref{prop:safe-reachability} says that every vertex in $[x, \gamma']_{\J_\A}$ is $\mathcal{P}_x$-safe-$\alpha$reachable from $x$, with a strategy that simply goes up along the $\alpha$palm-tree $\mathcal{P}_x$ in which $x$ resides, \ie a strategy that goes from any $c\in [x, \gamma']_{\J_\A}$ to its parent $\pi(c)$.
			Claim~2 guarantees that, even when $c\in V_\ocircle$, $\pi(c)$ lies in $\C_{c'}$ for every possible $c'\in N^{\text{out}}_\A(c)$.

All in, by composition of safe-$\alpha$sc sets (\ie Lemma~\ref{lemma:safe-asc-union}),
$\C_u = \C_{p}$ for every $p\in [x, \gamma']_{\J_\A} \cup [u',\gamma']_{\J_\A}$.

So $[u,\gamma']_{\J_\A}\subseteq \C_{u}$ as expected.
 \end{proof}

\begin{Lem}\label{lem:correctness1}
	Assume that $\textit{safe-$\alpha$SCC()}$ (Algo.~\ref{algo:STCC}) runs on a given $\alpha$graph $\A$ on vertex set $V$.
	Let $u\in V$ be any descendant of $\gamma\in V$ in the forest of $\J_\A$ such that $u$ is still on the component stack $\textit{cSt}$ when \textit{safe-$\alpha$SCC-visit($\gamma,\A$)} closes $\gamma$ (\ie say at line~\ref{algo:STCCs-visit:l27bis} of $\textit{safe-$\alpha$SCC-visit}(\gamma,\A)$) (Proc.~\ref{algo:STCCs-visit}).

	Then $\gamma$ lies within the same safe-$\alpha$SCC of $\A$,
	\ie  $\gamma\in \C_u$.
\end{Lem}
\begin{figure}[h!]
	\centering
\begin{tikzpicture}[scale=.7]
		\node [squa] (0) at (0, 6.5) {$r$};
		\node [style=none] (14) at (-9, -.5) {};
		\node [squa] (1) at (-6.5, .75) {$u$};
		\node [style=none] (2) at (9, -.5) {};
		\node [style=none] (3) at (2, -.5) {};
		\node [squa] (5) at (-.35, 5.65) {$\gamma$};
		\node [squa] (6) at (-4.5, 2.25) {$\gamma'$};
		\node [style=none] (12) at (-5.75, 1.325) {};
		\node [style=none] (13) at (-3.75, 2.85) {};
		\node [squa] (9) at (-2.625, 3.75) {$\gamma''$};
		\node [style=none] (10) at (-1.25, 5.75) {$\iddots$};
		\node [squa] (11) at (-1.5, 4.75) {$\gamma'''$};

\draw [bend right=20] (0.south west) to node[above,xshift=6ex, yshift=8ex] {} (14.north);
		\draw [thick] (0.south) to (5.north);
		\draw [thick] (5.south west) to (11.north east);
		\draw [thick] (11.south west) to (9.north east);
		\draw [thick] (9.south west) to (6.north east);
		\draw [thick] (6.south west) to (1.north east);
		\draw [arrows=->, thick, dashed, bend left=300, looseness=1.25] (6.west) to node[below,xshift=11ex, yshift=-1.5ex] {$[u,\gamma']_{\J_\A}\subseteq \C_u$} (1.north);
		\draw (0.south east) to (3.center);
		\draw (3.center) to (14.east);
		\draw [arrows=->, thick, dashed, bend left=315, looseness=1.5] (9.west) to node[above,xshift=-.5ex, yshift=0ex] {} (12.north);
		\draw [arrows=->, thick, dashed, bend right=45, looseness=1.50] (11.west) to node[above,xshift=-.5ex, yshift=0ex] {} (13.north);
	\end{tikzpicture}
\caption{An illustration of the \emph{Inductive Step} in the proof of Lemma~\ref{lem:correctness1}}\label{fig:correctness4}
\end{figure}
\begin{proof}
	The proof goes by strong induction on $\textit{open}[u]$, \ie induction on the order in which the vertices $u\in V$ are opened during the execution of $\textit{safe-$\alpha$SCC}(\A)$, let it be $u=v_1, \ldots, v_i, \ldots, v_{|V|}$. Moreover the inductive step relies on Lemma~\ref{lem:precorrectness1}.

	\emph{Base Case: $u=v_1$.} In this case $u=\gamma$ and the thesis is obvious.

	\emph{Inductive Step: $u=v_i$ for some $i>1$.}
	We shall leverage on the strong induction hypothesis that the thesis is true for every $v_j$ with $j<i$ and any satisfying ancestor of $v_j$. \wwlog let $\gamma$ be a satisfying \emph{proper} ancestor of $u$.
	By Lemma~\ref{lem:precorrectness1} there is a proper ancestor $\gamma'$ of $u$ (possibly, $\gamma'=\gamma$) such that all the vertices in $[u,\gamma']_{\J_\A}$ lie within the same safe-$\alpha$SCC of $\A$,
		\ie such that $[u,\gamma']_{\J_\A}\subseteq \C_u$.
	If $\gamma'$ is an ancestor of $\gamma$, this already implies the thesis as $\gamma\in [u,\gamma']_{\J_\A}$.
	Otherwise, $\gamma'$ is a proper descendant of $\gamma$.
	Of course $\textit{open}[\gamma']<\textit{open}[u]$, so the strong induction hypothesis applies on $\gamma'$ with
	satisfying ancestor $\gamma$, \ie $\gamma\in\C_{\gamma'}$.
	By composition of safe-$\alpha$sc sets (\ie Lemma~\ref{lemma:safe-asc-union}) it holds that $\C_u=\C_{\gamma'}=\C_\gamma$, so $\gamma\in\C_{u}$.
\end{proof}

\begin{Lem}\label{lem:correctness2}
	Assume that $\textit{safe-$\alpha$SCC()}$ (Algo.~\ref{algo:STCC}) runs on a given $\alpha$graph $\A$ on vertex set $V$.
	Let $u\in V$ be a descendant of $\gamma\in V$ in the forest of $\J_\A$
	 lying in the same safe-$\alpha$SCC of $u$, \ie such that $\gamma\in \C_u$.
Then, $u$ is still on the component stack $\textit{cSt}$ when $\textit{safe-$\alpha$SCC-visit($\gamma, \A$)}$ closes $\gamma$.
\end{Lem}
\begin{proof}
Firstly, we can assume \wwlog that $\textit{open}[\gamma]$ is the smallest possible index in $\C_u$, \ie
\[\gamma=\arg\min_{v\in \C_u} \textit{open}[v].\]
Indeed, thanks to the structural connectivity properties of $\J_\A$ any
	cross-arc $(u,v)$ between two distinct $\alpha$palm-trees $\mathcal{P}_u,\mathcal{P}_v$ always goes \emph{forward in time},
	\ie satisfying \[\textit{open}[u]<\textit{close}[u]<\textit{open}[v]<\textit{close}[v]\];
	stated otherwise, cross-arcs can't go \emph{backward in time}, from an $\alpha$palm-tree $\mathcal{P}_v$ back to another $\alpha$palm-tree $\mathcal{P}_u$ which had been discovered and closed earlier (\cfr Definitions~\ref{def:tr-pt}~and~\ref{def:tr-jungle}, particularly, properties \emph{$\alpha$pt-2, $\alpha$pt-4 and $\alpha$jn-3}).

	Thus, if $\gamma\in \C_u$, then $\gamma$ belongs to the \emph{same} $\alpha$palm-tree in which $u$ resides; indeed, \emph{any} arc that crosses two distinct $\alpha$palm-trees is a cross-arc, so,
		if $\gamma$ and $u$ were in two distinct $\alpha$palm-trees there would have been at least one cross-arc going backward in time, which is not possible.
Any other possible ancestor $\gamma'\neq \gamma$ of $u$ satisfying $\gamma'\in \C_u$ must
	be a proper descendant of $\gamma$ if $\textit{open}[\gamma]$ is minimum;
so, proving the thesis w.r.t. the smallest $\gamma$ subsumes proving it for any other satsfying $\gamma'$.

The proof proceeds by strong induction on $\textit{open}[u]$, for $\textit{open}[u]\geq \textit{open}[\gamma]$.

		In the \emph{Base Case, $u=\gamma$,} the thesis is trivial.

\begin{figure}[h!]
	\centering
\begin{tikzpicture}[scale=.75]
		\node [squa] (0) at (0, 9) {$r$};
		\node [style=none] (1) at (-10, 0) {};
		\node [style=none] (2) at (2, 0) {};
		\node [squa] (4) at (-1, 6.5) {$\gamma$};
		\node [circ] (5) at (-2, 3) {$u$};
		\node [squa] (7) at (-4.25, 4.25) {$u'$};
		\node [squa] (8) at (-4, 2) {$x$};

		\draw [bend right=15] (0.south west) to (1.center);
		\draw (0.south east) to (2.center);
		\draw (1.center) to (2.center);
		\draw [thick] (0.south) to (4.north);
		\draw [thick] (4.south) to (5.north);
		\draw [thick] (5.south west) to (8.east);
		\draw [thick] (7.north) to (4.west);
		\draw [arrows=->, dotted, bend right=45, looseness=1.25] (7.west) to node[above,xshift=.85ex, yshift=0ex] {\tiny frond or cross} (8.west);
\end{tikzpicture}
	\caption{An illustration of the \emph{Inductive Step} in the proof of Lemma~\ref{lem:correctness2}}\label{fig:correctnessLemma6}
\end{figure}

\emph{Inductive Step:} $\textit{open}[u]>\textit{open}[\gamma]$.
Because of the structural connectivity properties of $\J_\A$
(\cfr Definitions~\ref{def:tr-pt}~and~\ref{def:tr-jungle}, particularly, properties \emph{$\alpha$pt-2, $\alpha$pt-4} and \emph{$\alpha$jn-3}),
as already mentioned all the vertices in $\C_u$ belong to the \emph{same} $\alpha$palm-tree in which $u$ resides, moreover,
since the frond and cross-arcs are the only types of arcs $(a,b)$ where $\textit{open}[a]<\textit{open}[b]$, then
along any of those paths that start at $\gamma$ and reach $u$ without ever leaving $\C_u$
(\ie any of those paths thanks to which $u$ is $\C_u$-safe-$\alpha$reachable from $\gamma$),
at some point there must be a pair of vertices $u',x\in \C_u$ such that the following hold:
\begin{itemize}[label={--}]
	\item $(u', x)$ is either a frond or a cross-arc (where $\textit{open}[u']<\textit{open}[x]$);
	\item $x$ is a descendant of $u$ in the	forest of $\J_\A$ (possibly $x=u$, but $x\neq u'$);
	\item thus, $\textit{open}[u']<\textit{open}[u]$.
\end{itemize}

\figref{fig:correctnessLemma6} illustrates the situation.

Now, since $\textit{open}[\gamma]$ is minimum in $\C_u$, then $u'$ is still a descendant of $\gamma$ (possibly, $u'=\gamma$).
Also notice that, since $u'\in \C_u$, then $\C_{u'}=\C_u$, so $\gamma\in\C_{u'}$.

Since $\textit{open}[u']<\textit{open}[u]$ and $u'$ is a descendant of $\gamma$ such that $\gamma\in\C_{u'}$, the induction hypothesis applies to $u'$ with satisfying ancestor $\gamma$,
	then	$u'$ is still on the component stack $\textit{cSt}$ when $\textit{safe-$\alpha$SCC-visit($\gamma, \A$)}$ closes $\gamma$.
Thus, since $x$ is a descendant of $u$ and $\textit{open}[u']<\textit{open}[u]$,
$u'$ is already on the component stack $\textit{cSt}$ when $\textit{safe-$\alpha$SCC-visit($x, \A$)}$ closes $x$.
Therefore, by lines~19,\ref{algo:STCCs-visit:l21} of $\textit{safe-$\alpha$SCC-visit($x, \A$)}$, \[\alpha\textit{lowlink}[x]\leq \textit{open}[u'].\]
This means that $x$ stays on the component stack $\textit{cSt}$ as long as $u'$ stays there.
Since $x$ is a descendant of $u$, also $u$ stays on $\textit{cSt}$ as long as $u'$ stays there.
Then, since $u'$ is on $\textit{cSt}$ when $\textit{safe-$\alpha$SCC-visit($\gamma, \A$)}$ closes $\gamma$, also $u$ is still there on $\textit{cSt}$ at that time.
\end{proof}

We are finally in the position to close the proof of correcteness.
Since it is clear that every vertex $v\in V$ is eventually outputted
by \textit{safe-$\alpha$SCC()} (Algo.~\ref{algo:STCC}) at lines~\ref{algo:STCCs-visit:l28}-\ref{algo:STCCs-visit:l34},
as part of some $\C\subseteq V$, it is enough to show what follows.
\begin{Prop}\label{prop:STCC:output}
If \textit{safe-$\alpha$SCC()} (Algo.~\ref{algo:STCC}) runs on a given $\alpha$graph $\A$,
	and $\textit{safe-$\alpha$SCC-visit}()$ (Proc.~\ref{algo:STCCs-visit}) outputs some subset of vertices $C\subseteq V$,
		then $C$ is a safe-$\alpha$SCC of $\A$.
\end{Prop}
\begin{proof}
Assume that	$\textit{safe-$\alpha$SCC-visit}(\gamma, \A)$ (Proc.~\ref{algo:STCCs-visit})
outputs some subset of vertices $C\subseteq V$,	for some vertex $\gamma\in C$ such that
$\textit{$\alpha$lowlink}[\gamma]=\textit{open}[\gamma]$ holds at line~\ref{algo:STCCs-visit:l28}.
So, $C=\textit{cSt}(\gamma)$. By Lemma~\ref{lem:correctness1}, then $\gamma\in \C_u$ for every $u\in \textit{cSt}(\gamma)$.
So, $\textit{cSt}(\gamma)\subseteq \C_{\gamma}$.

Now we claim that $\C_{\gamma}\subseteq \textit{cSt}(\gamma)$. Pick $\gamma'\in \C_{\gamma}$.
It is not possible for $\gamma'$ to be an ancestor of $\gamma$, because by Lemma~\ref{lem:correctness2} it would be $\gamma\in \textit{cSt}(\gamma')$
when $\textit{safe-$\alpha$SCC-visit($\gamma', \A$)}$ closes $\gamma'$, against $\textit{$\alpha$lowlink}[\gamma]=\textit{open}[\gamma]$.
Thus, $\gamma'$ is either uncomparable (\ie neither an ancestor nor a descendant) or a proper descendant of $\gamma$.

However it is not possible for $\gamma'$ to be uncomparable with $\gamma$.
Indeed, since $\gamma'\in \C_{\gamma}$, along any of those paths that start at one of the two $\gamma,\gamma'$
	and reach the other one (\ie any of those paths thanks to which $\gamma,\gamma'$ are safe-$\alpha$reachable from one another),
at some point there must be at least one cross-arc $(a,b)$ going backward in time (\ie such that $\textit{open}[a]>\textit{open}[b]$),
but this would contradict the structural connectivity properties of $\J_\A$ (\cfr Definitions~\ref{def:tr-pt}~and~\ref{def:tr-jungle},
particularly, properties \emph{$\alpha$pt-2, $\alpha$pt-4} and \emph{$\alpha$jn-3}).

So, $\gamma'$ must be a proper descendant of $\gamma$ in the forest of $\J_\A$.
Then by Lemma~\ref{lem:correctness2} it holds $\gamma'\in \textit{cSt}(\gamma)$.

Therefore, $\textit{cSt}(\gamma)=\C_{\gamma}$.
\end{proof}

To end, for the sake of completeness, it is shown Proposition~\ref{prop:alowlink_correct}, \ie that \textit{safe-$\alpha$SCC()} (Algo.~\ref{algo:STCC}) computes the very same  $\alpha$lowlinks that are given in Definition~\ref{def:lowlink}.
This follows from Lemma~\ref{lem:correctness1}~and~\ref{lem:correctness2}.
\begin{proof}[Proof of Proposition~\ref{prop:alowlink_correct}]
	For the sake of the argument let us denote $\textit{$\alpha$lowlink}[]$ (\ie with squared brackets) the indices computed by
	\textit{safe-$\alpha$SCC()} (Algo.~\ref{algo:STCC}),
	whereas $\textit{$\alpha$lowlink}()$ denotes the indices given by Definition~\ref{def:lowlink}.
	Thus we aim at showing that for every $v\in V$, it holds  $\textit{$\alpha$lowlink}[v] = \textit{$\alpha$lowlink}(v)$.

	The proof goes by induction on the order in which the vertices are \emph{closed} during the execution of $\textit{safe-$\alpha$SCC}(\A)$, let it be $(v_1, \ldots, v_i, \ldots, v_{|V|})$.

For every $v\in V$, to ease the argument, let us define the following in-neighbourhood	by considering the state of the component stack $\textit{cSt}$ when line~\ref{algo:STCCs-visit:l22} of $\textit{safe-$\alpha$SCC-visit}(v, \A)$ is executed:
\[
N^{\text{in}}_{\A}[\textit{cSt}](v)\doteq
\Big\{u\in N^{\text{in}}_\A(v) \mid
	u\in \textit{cSt} \text{ when line~\ref{algo:STCCs-visit:l22} of }
	\textit{safe-$\alpha$SCC-visit}(v, \A)\text{ (Proc.~\ref{algo:STCCs-visit}) is executed} \Big\}.
\]

	It's also rational to define for every $v\in V$:
		 \[N^{\text{in}}_{\A}[\text{LCA}](v) \doteq \{u\in N^{\text{in}}_\A(v)\cap V_\square \mid
											\text{the LCA } \gamma \text{ of } \{u,v\} \text{ in } \J_\A \text{ exists and } \gamma\in \C_u \}.\]

	\emph{Base Case: $i=1$.} Notice that the first closed vertex $v_1$ must be a leaf in the forest of $\J_{\A}$.
	In this case, $\textit{$\alpha$lowlink}[v_1]$ can be assigned
		only at line~\ref{algo:STCCs-visit:l21} of $\textit{safe-$\alpha$SCC-visit}(v_1, \A)$. So, the following holds:
	\[
		\textit{$\alpha$lowlink}[v_1]=\min\{\textit{open}[v_1]\}\cup\{\textit{open}[u]\mid u\in N^{\text{in}}_{\A}[\textit{cSt}](v_1)\}. \tag{eq. $1$}
	\]
	Since $v_1$ is the \emph{first} closed leaf,
		\[
				N^{\text{in}}_{\A}[\textit{cSt}](v_1) = \big\{ u\in N^{\text{in}}_\A(v_1) \mid
					u \text{ is an ancestor of } v_1 \text{ in } \J_{\A}\big\}. \tag{eq. $2$}
			\]
On the other hand, since $v_1$ is a leaf in $\J_{\A}$ and by Definition~\ref{def:lowlink}, a moment's reflection reveals:
  \[
		\textit{$\alpha$lowlink}(v_1) = \min\big\{\textit{open}[v_1]\}\cup
			\{ \textit{open}[u]\mid u\in N^{\text{in}}_{\A}[\text{LCA}](v_1) \big\}. \tag{eq. $3$}
	\]

Since $v_1$ is the first closed leaf, by (\emph{eq. $2$}), (\emph{eq. $3$}) and Lemma~\ref{lem:correctness1},
	$N^{\text{in}}_{\A}[\text{LCA}](v_1) = N^{\text{in}}_{\A}[\textit{cSt}](v_1)$.

Therefore, by (\emph{eq. $1$}) and (\emph{eq. $3$}), $\textit{$\alpha$lowlink}[v_1]=\textit{$\alpha$lowlink}(v_1)$.

This concludes the proof of the base case.

	\emph{Inductive Step: $i>1$.} In this case, $\textit{$\alpha$lowlink}[v_i]$ can be assigned either at line~\ref{algo:STCCs-visit:l2}, \ref{algo:STCCs-visit:l10},
		\ref{algo:STCCs-visit:l21}, \ref{algo:STCCs-visit:l26} of $\textit{safe-$\alpha$SCC-visit}(v_i, \A)$ (Proc.~\ref{algo:STCCs-visit}).
			A moment's reflection reveals that the following holds:
	\[
		\textit{$\alpha$lowlink}[v_i] = \min \big\{\textit{open}[v_i]\big\} \cup
				\big\{\textit{open}[u]\mid u\in N^{\text{in}}_{\A}[\textit{cSt}](v_i)\big\}
					\cup \big\{\textit{$\alpha$lowlink}[u]\mid u \text{ is a child of } v_i \text{ in } \J_{\A}\big\}.
	\]
 On the other side, by Definition~\ref{def:lowlink}, Definitions~\ref{def:tr-pt}~and~\ref{def:tr-jungle},
 	one moment's reflection reveals that:
 \begin{align*}
	 \textit{$\alpha$lowlink}(v_i)=\min\big\{\textit{open}[v_i]\big\} & \cup \big\{\textit{open}[u]\mid u\in N^{\text{in}}_{\A}[\text{LCA}](v_i)\big\} \\
	 																									& \cup \big\{\textit{$\alpha$lowlink}(u)\mid u \text{ is a child of } v_i \text{ in } \J_{\A}\big\}.
 \end{align*}
 If $u$ is a child of $v_i$ in $\J_{\A}$, then $u$ is closed before $v_i$.
 By induction hypothesis, $\textit{$\alpha$lowlink}[u] = \textit{$\alpha$lowlink}(u)$
 for every child $u$ of $v_i$ in $\J_{\A}$ that is considered either at
 	line~\ref{algo:STCCs-visit:l10}~or~\ref{algo:STCCs-visit:l26} of $\textit{safe-$\alpha$SCC-visit}(v_i, \A)$.

 To finish the proof, it is sufficient to show that $N^{\text{in}}_{\A}[\textit{cSt}](v_i)= N^{\text{in}}_{\A}[\text{LCA}](v_i)$.
\begin{itemize}

\item $N^{\text{in}}_{\A}[\textit{cSt}](v_i)\subseteq N^{\text{in}}_{\A}[\text{LCA}](v_i)$. Indeed,

let $u\in N^{\text{in}}_{\A}[\textit{cSt}](v_i)$. Then, $u$ and $v_i$ lie within the same $\alpha$palm-tree in $\J_{\A}$:
infact $\textit{cSt}$ is completely emptied as soon as the root of an $\alpha$palm-tree is closed,
thus the stack $\textit{cSt}$ can't contain vertices from two distinct maximal $\alpha$palm-trees.
Thus, the LCA $\gamma$ of $\{u,v_i\}$ in $\J_{\A}$ exists. Since $u\in \textit{cSt}$ when $v_i$ is being visited,
 $u\in \textit{cSt}(\gamma)$ when \textit{safe-$\alpha$SCC-visit($\gamma, \A$)} closes $\gamma$.
By Lemma~\ref{lem:correctness1}, $\gamma\in\C_u$. So, $u\in N^{\text{in}}_{\A}[\text{LCA}](v_i)$.

\item $ N^{\text{in}}_{\A}[\text{LCA}](v_i)\subseteq N^{\text{in}}_{\A}[\textit{cSt}](v_i)$. Indeed,

let $u\in  N^{\text{in}}_{\A}[\text{LCA}](v_i)$, and let $\gamma\in \C_u$ be the LCA of $\{u,v_i\}$ in $\J_{\A}$.
By Lemma~\ref{lem:correctness2}, since $\gamma\in\C_u$, then $u$ is still on the component stack $\textit{cSt}(\gamma)$
when \textit{safe-$\alpha$SCC-visit($\gamma, \A$)} closes $\gamma$. So, $u\in N^{\text{in}}_{\A}[\textit{cSt}](v_i)$.
\end{itemize}
All in, $N^{\text{in}}_{\A}[\textit{cSt}](v_i)= N^{\text{in}}_{\A}[\text{LCA}](v_i)$.
This concludes the \emph{inductive step}.
\end{proof}

\end{document}